\journal{Journal of Mathematical Economics}
\theoremstyle{plain}
\newtheorem{lemma}{Lemma}[section]
\newtheorem{corollary}{Corollary}[section]
\newtheorem*{conjecture*}{Conjecture}
\theoremstyle{definition}
\newtheorem{example}{Example}[section]
\newtheorem{defn}{Definition}[section]
\theoremstyle{remark}
\newtheorem{claim}{Claim}[section]
\def\Square{\pst@object{Square}}
\def\Square@i(#1,#2)#3{{\use@par\solid@star\psframe[origin={#1,#2}](#3,#3)}}
\def\squarediagonal{\pst@object{squarediagonal}}
\def\squarediagonal@i(#1,#2)#3{{\use@par\solid@star\Square[linestyle=solid,opacity=0.5](#1,#2){#3}\psline[linestyle=dashed,origin={#1,#2}](0,#3)(#3,0)}}
\newcommand{\squarepool}[1]{\Square[fillstyle=solid,fillcolor=blue,linecolor=blue]#1{1}}
\newcommand{\smallsquarepool}[1]{\Square[fillstyle=solid,fillcolor=blue,linecolor=blue]#1{0.4}}
\newcommand{\squarepoolwithlabel}[2]{
	\rput(0.05,0.05)
	{\Square[fillstyle=solid,fillcolor=blue,linecolor=blue]#1{0.2}}
	\uput{7pt}[0]#1{\tiny{#2}}
}
\newcommand{\steptitle}[1] {
}
\newcommand{\valuenotes}[3][-0.5] {}
\def\covernum(#1,#2){\text{CoverNum}(#1,#2)}
\def\egal(#1,#2,#3,#4){\text{Egal}(#1,#2,#3,#4)}
\def\egalrel(#1,#2,#3,#4){\text{RelEgal}(#1,#2,#3,#4)}
\def\prop(#1,#2,#3){\text{Prop}(#1,#3,#2)}
\def\propsame(#1,#2,#3){\text{PropSame}\allowbreak(#1,#3,#2)}
\def\propcoeff(#1,#2){\text{PropCoeff}(#1,#2)}
\def\proprel(#1,#2,#3){\text{RelProp}(#1,#3,#2)}
\def\proprelsame(#1,#2,#3){\text{RelPropSame}(#1,#3,#2)}
\def\overlap(#1){\text{Overlap}\allowbreak(#1)}
\newcommand{\South}{bottom }
\newcommand{\West}{left }
\newcommand{\range}[2]{\in\{#1,\dots,#2\}}
\newcommand{\spiece}{$S$-piece}
\newcommand{\spieces}{$S$-pieces}
\providecommand{\tabularnewline}{\\}
\providecommand{\doi}[1]{%
  \begingroup
    \let\bibinfo\@secondoftwo
    \urlstyle{rm}%
    \href{http://dx.doi.org/#1}{%
      doi:\discretionary{}{}{}%
    }%
  \endgroup
}
\newcommand{\n}[1]{~~#1}
\begin{document}

\begin{frontmatter}{}

\title{Fair and Square: Cake-Cutting in Two Dimensions}

\author{Erel Segal-Halevi\footnote{erelsgl@cs.biu.ac.il . Corresponding author.},
Shmuel Nitzan\footnote{nitzans@biu.ac.il}, Avinatan Hassidim\footnote{avinatan.hassidim@biu.ac.il}
and Yonatan Aumann\footnote{aumann@cs.biu.ac.il}}

\address{Bar-Ilan University, Ramat-Gan 5290002, Israel }
\begin{abstract}
We consider the classic problem of fairly dividing a heterogeneous good ("cake") among several agents with different valuations. Classic cake-cutting procedures either allocate each agent a collection of disconnected pieces, or assume that the cake is a one-dimensional interval. In practice, however, the two-dimensional shape of the allotted pieces is important. In particular, when building a house or designing an advertisement in printed or electronic media, squares are more usable than long and narrow rectangles. We thus introduce and study the problem of fair two-dimensional division wherein the allotted pieces must be of some restricted two-dimensional geometric shape(s), particularly squares and fat rectangles. Adding such geometric constraints re-opens most questions and challenges related to cake-cutting. Indeed, even the most elementary fairness criterion --- \emph{proportionality} --- can no longer be guaranteed. In this paper we thus examine the level of proportionality that \emph{can} be guaranteed, providing both impossibility results and constructive division procedures. 

\textbf{JEL classification}: D63  
\end{abstract}
\begin{keyword}
Cake Cutting \sep Fair Division \sep Land Economics \sep Geometry
\end{keyword}
\end{frontmatter}{}
\pagebreak{}

\section{Introduction}
Fair division of land has been an important issue since the dawn of
history. One of the classic fair division procedures, ``I cut and you choose'', is already alluded to in the Bible (Genesis 13) as a method for dividing land between two people. The modern study of this problem, commonly termed \emph{cake cutting}, began in the 1940's. The first challenge was conceptual --- how should ``fairness'' be defined when the cake is heterogeneous and different people may assign different values to subsets of the cake? \citet{Steinhaus1948Problem} introduced the elementary and most basic fairness requirement, now termed \emph{proportionality}:
each of the $n$ agents should get a piece which he values as worth at least $1/n$ of the value of the entire cake. He also presented a procedure, suggested by Banach and Knaster, for proportionally dividing a cake among an arbitrary number of agents. Since then, many other desirable properties of cake partitions have been studied, including:
envy-freeness \citep[e.g. ][]{Weller1985Fair,Brams1996Fair,Su1999Rental,Barbanel2004Cake},
social welfare maximization \citep[e.g. ][]{Cohler2011Optimal,Bei2012Optimal,Caragiannis2012Efficiency} and strategy-proofness \citep[e.g. ][]{Mossel2010Truthful,Chen2013Truth,Cole2013Mechanism}. See the books by \citet{Brams1996Fair,Robertson1998CakeCutting,Barbanel2005Geometry,Brams2007Mathematics} and a recent survey by \citet{Procaccia2015Cake} for more information.

Many economists regard land division as an important application of division procedures \citep[e.g. ][]{Berliant1988Foundation,Berliant1992Fair,Legut1994Economies,Chambers2005Allocation,DallAglio2009Disputed,Husseinov2011Theory,Nicolo2012Equal}).
Hence, they note the importance of imposing some geometric constraints
on the pieces allotted to the agents. The most well-studied constraint
is \emph{connectivity} --- each agent should receive a single connected
piece. The cake is usually assumed to be the one-dimensional interval
$[0,1]$ and the allotted pieces are sub-intervals \citep[e.g. ][]{Stromquist1980How,Su1999Rental,Nicolo2008Strategic,Azrieli2014Rental}).
This assumption is usually justified by the reasoning that higher-dimensional
settings can always be projected onto one dimension, and hence fairness
in one dimension implies fairness in higher dimensions.\footnote{In the words of \citet{Woodall1980Dividing}: ``the cake
is simply a compact interval which without loss of generality I shall take to be [0,1]. If you find this thought unappetizing, by all means think of a three-dimensional cake. Each point P of division
of my cake will then define a plane of division of your cake: namely, the plane through P orthogonal to [0,1]''.} However, projecting back from the one dimension, the resulting two-dimensional
plots are thin rectangular slivers, of little use in most practical applications; it is hard to build a house on a $10\times1,000$ meter plot even though its area is a full hectare, and a thin 0.1-inch wide
advertisement space would ill-serve most advertises regardless of its height.

We claim that the \emph{two-dimensional shape} of the allotted piece is of prime importance. Hence, we seek divisions in which the allotted pieces must be of some restricted family of ``usable'' two-dimensional shapes, e.g. squares or polygons of balanced length/width ratio.

\begin{figure}
\psset{unit=1mm,dotsep=1pt,linecolor=blue}
\centering
\begin{pspicture}(50,45)
\rput[l](0,35){\tiny{(a) Two disjoint rectangles worth 1/2}}
\Square[linestyle=solid,linecolor=brown](0,0){30}
\psframe[linestyle=dashed,linecolor=red](1,1)(14,29)
\psframe[linestyle=dashed,linecolor=blue](16,1)(29,29)
\end{pspicture}
\begin{pspicture}(50,45)
\rput[l](0,35){\tiny{(b) Two disjoint squares worth 1/4}}
\Square[linestyle=solid,linecolor=brown](0,0){30}
\Square[linestyle=dashed,linecolor=red](1,1){14}
\Square[linestyle=dashed,linecolor=blue](15,15){14}
\end{pspicture}
\begin{pspicture}(50,45)
\rput[l](0,35){\tiny{(c) No two disjoint squares worth more than 1/4}}
\Square[linestyle=solid,linecolor=brown](0,0){30}
\Square[linestyle=dashed,linecolor=red](1,1){17}
\Square[linestyle=dashed,linecolor=blue](12,12){17}
\end{pspicture}

\protect\caption{\label{fig:impossibility-intro} Dividing a square cake to two agents.}
\end{figure}
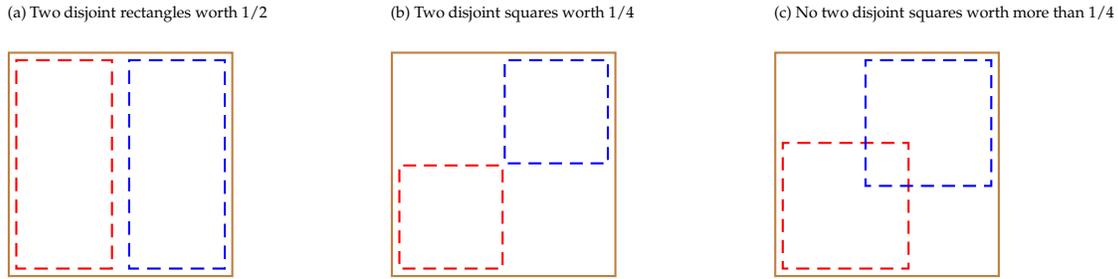

Adding a two-dimensional geometric constraint re-opens most questions
and challenges related to cake-cutting. Indeed, even the elementary
proportionality criterion can no longer be guaranteed.
\begin{example}
\label{exm:intro}A homogeneous square land-estate has to be divided
between two heirs. Each heir wants to use his share for building a
house with as large an area as possible, so the utility of each heir
equals the area of the largest house that fits in his piece (see Figure
\ref{fig:impossibility-intro}). If the houses can be rectangular,
then it is possible to give each heir $1/2$ of the total utility
(a); if the houses must be square, it is possible to give each heir
$1/4$ of the total utility (b) but impossible to give both heirs \emph{more} than $1/4$ the total utility (c). In particular, when the allotted pieces must be square, a proportional division does not exist.\footnote{\citet{Berliant2004Foundation} use a very similar example to prove the nonexistence of a competitive equilibrium when the pieces must be square.}
\end{example}
This example invokes several questions. What happens when the land-estate is heterogeneous and each agent has a different utility function?
Is it always possible to give each agent a 2-by-1 rectangle worth for him at least $1/2$ the total value? Is it always possible to give each agent a square worth for him at least $1/4$ the total value? Is it even possible to guarantee a positive fraction of the total value? If it is possible, what division procedures
can be used? How does the answer change when there are more than two agents? Such questions are the topic of the present paper.

We use the term \emph{proportionality} to describe the fraction that
can be guaranteed to every agent. So when the shape of the pieces
is unrestricted, the proportionality is always $1/n$, but when the
shape is restricted, the proportionality might be smaller. Naturally, the attainable proportionality depends on both the shape of the cake and the desired shape of the allotted pieces. For every combination of cake shape and piece shape, one can prove \emph{impossibility results}
(for proportionality levels that cannot be guaranteed) and \emph{possibility results }(for the proportionality that can be guaranteed). While we
examined many such combinations, the present paper focuses on several representative scenarios which, in our opinion, demonstrate the richness of the two-dimensional cake-cutting task. 

\subsection{Walls and unbounded cakes}

In Example \ref{exm:intro}, the two pieces had to be contained in
the square cake. One can think of this situation as dividing a square
island surrounded in all directions by sea, or a square land-estate
surrounded by 4 walls: no land-plot can overlap the sea or cross a
wall. 

In practical situations, land-estates often have less than 4 walls.
For example, consider a square land-estate that is bounded by sea
to the west and north but opens to a desert to the east and south.
Allocated land-plots may not flow over the sea shore, but they may
flow over the borders to the desert. 

Cakes with less than 4 walls can also be considered as unbounded cakes.
For example, the above-mentioned land-estate with 2 walls can be considered
a quarter-plane. The total value of the cake is assumed to be finite
even when the cake is unbounded. When considering unbounded cakes,
the pieces are allowed to be ``generalized squares'' with an infinite
side-length. For example, when the cake is a quarter-plane (a square
with 2 walls), we allow the pieces to be squares or quarter-planes.
When the cake is a half-plane (a square with 1 wall), we also allow
the pieces to be half-planes, etc.  The terms ``square with 2 walls''
and ``quarter-plane'' are used interchangeably throughout the paper.

\subsection{Fat objects}\label{sub:Fat-pieces}

Intuitively, a piece of cake is usable if its lengths in all dimensions are balanced --- it is not too long in one dimension and too short in another dimension. This intuition is captured by the concept of \emph{fatness},
which we adapt from the computational geometry literature \citep[e.g. ][]{Agarwal1995Computing,Katz19973D}:
\begin{defn}
\label{def:rfat}A $d$-dimensional object is called \emph{$R$-fat}, for $R\geq1$, if 
it contains a $d$-dimensional cube $c^-$ and is contained in a parallel $d$-dimensional cube $c^+$, such that the ratio between the side-lengths of the cubes is at most $R$: $\text{len}(c^+)/\text{len}(c^-)\leq R$
.
\end{defn}
A two-dimensional cube is a square. So, for example, a square is 1-fat, a 10-by-20 rectangle is 2-fat, a right-angled isosceles triangle is 2-fat and a circle is $\sqrt{2}$-fat. 

Note that $R$ is an upper bound, so if $R_2\geq R_1,$ every $R_1$-fat piece is also $R_2$-fat. So a square is also 2-fat, but a 10-by-20 rectangle is not 1-fat.

Our long-term research plan is to study various families of fat shapes. As a first step, we study the simplest fat shape, which is the square (hence the name of the paper). Despite its simplicity, it is still challenging. We also present results for fat rectangles, which are almost identical to the results for squares.
\subsection{Results}
\label{sub:results}
Our results can be broadly summarized as follows.
\begin{itemize}
\item \textbf{Negative results:} when the pieces have to be squares or fat rectangles, a proportional division is usually
\footnote{
We have proved this for most, but not all the cases that we have studied. The exception is when the cake is an unbounded plane and the pieces are non-parallel squares: in this case, we do not know whether a proportional division always exists. See Table \ref{tab:Proportionality-bounds-for-square-cakes} below.
}
not guaranteed to exist. Moreover, there is a constant $A>1$ that depends on the shape of the cake and usable pieces, such that 
, for some sets of value-functions,
it is impossible to give all agents a value of more than $1/(A\cdot n)$.
\item \textbf{Positive results:} when the pieces have to be squares or fat rectangles, a constant-factor approximation to a proportional division is usually guaranteed to exist. This means that there is a constant $B>1$ that depends on the shape of the cake and usable pieces, such that,
for all sets of value-function,
it is possible to give all agents a value of at least $1/(B\cdot n)$.
\end{itemize}
The constant $A$ in our negative results is at most 2, and the constant $B$ in our positive results is at least 2; this leads us to conjecture that the ``real'' constant is 2, i.e, a half-proportional division with square pieces always exists, and half-proportionality is the best that can be guaranteed. Currently we can prove this conjecture only in several restricted scenarios, that are presented below.
\begin{table}
\noindent \psset{unit=0.3mm,dotsep=1pt,linewidth=1pt}
\def\xf{0}
\def\xt{30}
\def\yf{0}
\def\yt{30}
\newcommand\openO{
\begin{pspicture}(0,0)(35,35)
\psline[linestyle=solid,linecolor=blue](\xf,\yf)(\xf,\yt)(\xt,\yt)(\xt,\yf)(\xf,\yf)
\end{pspicture}
}
\newcommand\openI{
\begin{pspicture}(35,35)
\psline[linestyle=solid,linecolor=blue](\xf,\yt)(\xf,\yf)(\xt,\yf)(\xt,\yt)
\psline[linestyle=dotted,linecolor=brown](\xt,\yt)(\xf,\yt)
\end{pspicture}
}
\newcommand\openII{
\begin{pspicture}(35,35)
\psline[linestyle=solid,linecolor=blue](\xf,\yt)(\xf,\yf)(\xt,\yf)
\psline[linestyle=dotted,linecolor=brown](\xt,\yf)(\xt,\yt)(\xf,\yt)
\end{pspicture}
}
\newcommand\openIII{
\begin{pspicture}(35,35)
\psline[linestyle=solid,linecolor=blue](\xf,\yf)(\xt,\yf)
\psline[linestyle=dotted,linecolor=brown](\xt,\yf)(\xt,\yt)(\xf,\yt)(\xf,\yf)
\end{pspicture}
}
\newcommand\openIIII{
\begin{pspicture}(35,35)
\psline[linestyle=dotted,linecolor=brown](\xf,\yf)(\xf,\yt)(\xt,\yt)(\xt,\yf)(\xf,\yf)
\end{pspicture}
}{}%
\begin{tabular}{|c|c|c|c|c|c|}
\hline 
\multicolumn{2}{|l|}{{Cake $\downarrow$}} & \multicolumn{2}{c|}{{Impossibility}} & \multicolumn{2}{c|}{{Possibility}}
\tabularnewline
\cline{3-6}
\multicolumn{2}{|r|}{
	} & 
{Square pieces} & {\shortstack{$R$-Fat rects\\ ($R\geq2$)}} & 
{Square pieces} & {\shortstack{$R$-Fat rects\\ ($R\geq2$)}}
\tabularnewline
\hline 
\hline 
{\openO} & \shortstack{4 walls\\(Square)} & 
{$1/(2n)$ *} & {$1/(2n-1)$} & 
\shortstack{$1/(4n-4)$  *\\ \emph{same}: $1/(2n)$  *} & \shortstack{$1/(4n-5)$ \\ \emph{same}: $1/(2n-1)$} 
\tabularnewline
\hline 
{\openI} & {3 walls} & \multicolumn{2}{c|}{{$1/(2n-1)$}} & \multicolumn{2}{c|}{
$1/(2n-1)$} 
\tabularnewline
\hline 
{\openII} & \shortstack{2 walls \\ (quarter-plane)} & \multicolumn{2}{c|}{{$1/(2n-1)$}} & \multicolumn{2}{c|}{{$1/(2n-1)$}} 
\tabularnewline
\hline 
{\openIII} & \shortstack{1 wall \\ (half-plane)} & \multicolumn{2}{c|}{{$1/({3\over 2}n-1)$}} & \multicolumn{2}{c|}{{$1/(2n-2)$}}
\tabularnewline
\hline 
\multirow{2}{*}{
	\openIIII
} 
& 
\multirow{2}{*}{
\shortstack{0 walls\\(plane)} 
}
&
\multicolumn{2}{c|}{
 	\emph{axes-parallel:} $1/({10\over 9}n-1)$
}
& 
\multicolumn{2}{c|}{
	\shortstack{
	\\
	\\
	{$1/\max(2n-4,n)$}
}
}
\tabularnewline
\cline{3-4}
& 
& \shortstack{
	\emph{parallel:} $1/({30\over 29}n-1)$
	\\
	\emph{general:} ?
} 
& 
?
&
\multicolumn{2}{c|}{}
\tabularnewline
\hline 
\end{tabular}
		\caption[Summary of results for square cakes: upper and lower bounds on the level of attainable proportionality.]{
			\label{tab:Proportionality-bounds-for-square-cakes}
			Summary of results for square cakes: upper and lower bounds on the level of attainable proportionality.
			\\All results assume that there are at least two agents ($n\geq 2$).
			\\** means that the results are valid not only for square pieces but also for $R$-fat rectangles with $R<2$.
			\\ ? means that we do not have a non-trivial impossibility result for this case
			.
		}
\end{table}

\subsubsection{Square cakes bounded or unbounded} 
\label{sub:intro-square-cakes}
In the first set of results, the cake is a square bounded in zero or more sides. Table \ref{tab:Proportionality-bounds-for-square-cakes}
summarizes our negative and positive results:

The \textbf{Impossibility} column shows upper bounds on the attainable proportionality. Each upper bound is proved by showing a specific scenario in which it is impossible to give all agents more than the mentioned fraction of their total value. The upper bound for a square with 4 walls and $n=2$ is $1/(2n)=1/4$, as was already seen in Example \ref{exm:intro}. The upper bounds for an unbounded plane are valid only when the pieces must be squares parallel to a pre-specified coordinate system, or parallel to each other (as is common in urban planning). The other upper bounds are valid even when the squares are allowed to be non-parallel.

The \textbf{Possibility} column shows our positive results. Each such result is proved constructively by an explicit division procedure that gives each agent at least the mentioned fraction of their total value. 
The \emph{same} result means that there exists a different division procedure that guarantees a larger fraction per agent, but this procedure works only when all agents have the same valuations. We do not know whether there exists a division procedure that guarantees this larger fraction for agents with different valuations.

Note that all our impossibility results hold even for agents with the same valuations, and all our division procedures return axes-parallel pieces.


Intuitively, one may think that allowing rectangles instead of just squares should considerably increase the attainable proportionality level. But this is not the case if the pieces need to be fat. As seen in the table, most results for fat rectangles are almost the same as for squares. The only exception is the impossibility result for an unbounded plane, which we have not managed to extend to $R$-fat rectangles.

For $n=2$, the proportionality levels in our possibility results are equal to the impossibility results. For a cake with two or three walls the guaranteed proportionality is equal to the impossibility result for every $n$. This means that in these cases, our procedures are optimal in their worst-case guarantee. For a cake with 4 walls, the guaranteed proportionality for agents with the same value measure is optimal. In the other cases, there is a multiplicative gap of at most 2 between the possibility and the impossibility result.

A secondary consideration in geometric division problems, in addition to value, is the type of cuts used for implementing the division. In some cases, \emph{guillotine cuts} are preferred. Guillotine cuts are axis-parallel cuts running from one end to the opposite end of an already cut piece. They are considered easier to implement \cite[e.g.][]{AlvarezValdes2002Tabu,Cui2008Recursive,Hifi2011High}. In the industry, guillotine cuts are used for cutting stock such as plates of glass. In the context of land division, guillotine cuts may be desired because they may make it easier to build fences between land-plots. Our procedures for a cake with 4 walls find divisions that can be implemented using guillotine cuts. The other procedures use general cuts, and we do not know if it is possible to attain the same value guarantees using guillotine cuts.


\subsubsection{Bounded cakes of any shape}\label{sub:intro-any-shape}
While some states in the USA are rectangular (e.g. Colorado or Wyoming), most land-estates have irregular shapes. In such cases, it may be impossible to guarantee any positive proportionality. For example, consider Robinson Crusoe arriving at a circular island. Assume that Robinson's value measure is such that all value is concentrated in a very thin strip along the shore, as in Figure \ref{fig:A-circular-cake}. The value contained in any single square might be arbitrarily small. Clearly, no division procedure for $n$ agents can guarantee a better fraction of the total value.

\begin{figure}
\psset{unit=1mm}
\begin{center}
\begin{pspicture}(-5,-5)(35,35)\rput(0,-5){
\pscircle[fillstyle=solid,fillcolor=blue](15,15){15}
\pscircle[fillstyle=solid,fillcolor=white](15,15){14}
\psframe[linestyle=dashed,linecolor=red](5,5)(25,25)
}\end{pspicture}
\end{center}

\protect\caption{\label{fig:A-circular-cake}A circular cake where all value is near the perimeter. No positive value can be guaranteed to an agent who wants a square piece.}
\end{figure}
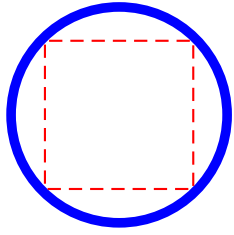

Therefore, for arbitrary cakes we use a \emph{relative} rather than absolute fairness measure. For each agent, we calculate the maximum value that this agent can attain in a square piece if he doesn't need to share the cake with other agents. We guarantee the agent a certain fraction of this value, rather than a certain fraction of the entire
cake value. This fairness criterion is similar to the \emph{uniform preference externalities} criterion suggested by \citet{Moulin1990Uniform}. Similar criteria have been recently studied in the context of indivisible item assignment \citep{Budish2011Combinatorial,Procaccia2014Fair,Bouveret2015Characterizing}.

Table \ref{tab:Relative-proportionality-bounds-} summarizes our bounds on relative proportionality. The impossibility results follow trivially from those for square cakes. The possibility results require new division procedures. They are valid for any cake that is a compact (closed
and bounded) subset of the plane. The guarantees are better
when the pieces are required to be axis-parallel. This is in accordance
with the common practice in urban planning, in which axis-parallel
plots are usually preferred.

\begin{table}
\psset{unit=0.3mm,dotsep=1pt,linewidth=1pt,linestyle=solid,linecolor=blue}
\newcommand\ParSquares{
\begin{pspicture}(30,30)
\Square(0,0){11}
\Square(16,8){13}
\Square(4,16){9}
\end{pspicture}
}
\newcommand\RotSquares{
\begin{pspicture}(-10,0)(30,30)
\rput{45}{\Square(0,0){11}}
\rput{-22}{\Square(12,8){13}}
\Square(4,16){9}
\end{pspicture}
}
\newcommand\ParRects{
\begin{pspicture}(30,30)
\psframe(0,0)(11,15)
\psframe(16,8)(30,18)
\psframe(4,16)(13,25)
\end{pspicture}
}

\noindent %
\begin{tabular}{|>{\centering}b{3cm}c|>{\centering}b{4cm}|>{\centering}b{4cm}|}
\hline 
\multicolumn{2}{|c|}{Pieces $\downarrow$} & Impossibility & Possibility\tabularnewline
\hline 
\hline 
Parallel\\
squares  & \ParSquares & $1/(2n)$ & $1/(8n-6)$

\emph{same: $1/(2n)$}\tabularnewline
\hline 
General\\
squares & \RotSquares & $1/(2n)$ & $1/(16n-14)$

\emph{same: $1/(2n)$}\tabularnewline
\hline 
Parallel\\
$R$-fat rectangles & \ParRects & $1/(2n-1)$ & $1/([4R+4][n-1]+2)$

\emph{same: $1/(2n)$}\tabularnewline
\hline 
\end{tabular}
\caption
[Summary of results for arbitrary compact cakes: bounds on the level of attainable relative proportionality.]
{
	\label{tab:Relative-proportionality-bounds-}Summary of results for arbitrary compact cakes: bounds on the level of attainable relative proportionality.
}
\end{table}

\subsection{Techniques and their economic meaning}\label{sub:auctions-intro}
Most of our division procedures can be presented as sequences of \emph{auctions}.\footnote{The relation between division procedures and auctions has already been mentioned by \citet{Brams1996Fair}.} The general process is as follows. Initially, each of the $n$ agents receives a ticket with an entitlement to share a certain cake, $C$, in a group of $n$ agents. Then, the divider performs a well-designed sequence of auctions. In each auction, the winning agents exchange their ticket for another ticket with an entitlement to share a smaller cake $C'\subset C$ in a smaller group of $n'<n$ agents. This goes on until finally each agent holds a private entitlement for a single piece of the cake. Note that there are no monetary payments: the winners 'pay' only by giving away their tickets.

We use auctions of two types: \emph{mark auction} and \emph{eval auction}.\footnote{The two auction types are analogous to the two query types --- \emph{mark query} and \emph{eval query} --- used in the cake-cutting literature in computer science, e.g.  \citet{Robertson1998CakeCutting,Woeginger2007Complexity}.
In fact, each mark/eval auction can be implemented by $n$ mark/eval queries. Therefore, all our division procedures require $O(poly(n))$ queries. 
We prefer to use auctions because their economic meaning is clearer.
}
They are presented briefly below; formal definitions and detailed examples are given in Section \ref{sec:auctions}.
\begin{itemize}
\item In a \emph{mark auction}, each agent bids by marking a piece of cake. All bids must satisfy a given geometric constraint (such as ``mark a square at the bottom-left corner''). An agent bidding a piece $X_i$ is interpreted as saying ``I am willing to give my ticket in exchange for $X_i$''. The agent bidding the smallest piece is the winner. The winner receives his bid and goes home, while the remaining agents continue to divide the remaining cake.
\item In an \emph{eval auction}, the divider specifies a piece $C'\subset C$, and each agent bids by declaring his/her evaluation of $C'$. An agent bidding a value $V$ is interpreted as saying ``I am willing to give my ticket for sharing $C$ in a group of $n$ agents, in exchange for a ticket for sharing $C'$ in a group of up to $f(V)$ agents''. Here $f: \mathbb{R}^+\to \mathbb{Z}^+$ is some weakly-increasing function that depends on the situation (the same function for all agents). The agent or agents bidding the highest values are the winners, since they are willing to share $C'$ with the largest number of other agents. The number of winners is determined as the largest value $n'$ such that the $n'$ highest winners are willing to share $C'$ in a group of $n'$. These winners go on and divide $C'$ among them, while the remaining $n-n'$ agents continue to compete on $C\setminus C'$.
\end{itemize}
The geometric constraints are carefully designed in order to guarantee that the final pieces are usable. A key geometric concept here is the \emph{cover number} --- the minimum number of squares required to cover a given region. By making sure that all sub-pieces have a sufficiently small cover-number, we ensure that they can be divided effectively. See Section \ref{sec:auctions} for details.

For the sake of simplicity, our division procedures are presented as if all agents bid according to their true value functions. However, the guarantees of our procedures are stronger: they are valid for any \emph{single} agent bidding according to his/her true value function, regardless of what the other agents do. This is the common practice in the cake-cutting world.\footnote{In the words of \citet{Steinhaus1948Problem}: ``The greed, the ignorance, and the envy of other partners cannot deprive him of the part due to him in his estimation; he has only to keep to the methods described above. Even a conspiracy of all other partners with the only aim to wrong him, even against their own interests, could not damage him.''} On the other hand, our procedures are not \emph{dominant-strategy truthful}, since an agent who knows the other agents' valuations may gain from under-bidding, just like in a first-price auction. Designing truthful
cake-cutting mechanisms is known to be a difficult problem even with a 1-dimensional cake \cite{Branzei2015Dictatorship}.

\subsection{Paper structure}
The remainder of the paper is structured as follows. The introduction section is concluded by reviewing some related research. The model is formally presented in Section \ref{sec:model}. Impossibility results are proved in Section \ref{sec:Impossibility-Results}. Section \ref{sec:auctions} presents the basic building-blocks for the division procedures: the two auction types and the geometric covering concept. These building-blocks are then used in the division procedures of Section \ref{sec:procedures}. Section \ref{sec:future} discusses several directions for future research.

\subsection{Related work}
The most prominent geometric constraint in cake-cutting is one-dimensional: the pieces must be contiguous intervals. Several authors studied a circular cake \citep{Thomson2007Children,Brams2008Proportional,Barbanel2009Cutting}, but it is still a one-dimensional circle and the pieces are one-dimensional arcs. Only few cake-cutting papers explicitly consider a two-dimensional cake. 

\citet{Hill1983Determining,Beck1987Constructing,Webb1990Combinatorial,Berliant1992Fair}
study the problem of dividing a disputed territory between several bordering countries, with the constraint that each country should get a piece that is adjacent to its border.

\citet{Iyer2009Procedure} describe a procedure that asks each of the $n$ agents to draw $n$ disjoint rectangles on the map of the two-dimensional cake. These rectangles are supposed to represent the desired regions of the agent. The procedure tries to give each agent one of his $n$ desired areas. However, the procedure does not succeed unless each rectangle proposed by an individual intersects at most one other rectangle drawn by any other agent. If even a single rectangle of Alice intersects two rectangles of George (for example), then the procedure fails and no agent receives any piece.

\citet{Berliant1992Fair,Ichiishi1999Equitable,DallAglio2009Disputed} acknowledge the importance of having nicely-shaped pieces in resolving land disputes. They prove that, if the cake is a simplex in any number of dimensions, then there exists an envy-free and proportional partition of the cake into polytopes. However, this proof is purely existential when the cake has two or more dimensions. Additionally, there are no restrictions on the fatness of the allocated polytopes and apparently these can be arbitrarily thin triangles. \citet{Berliant2004Foundation} studies the existence of competitive equilibrium with utility functions that may depend on geometric shape; their \emph{nonwasteful partitions assumption} explicitly excludes fat shapes such as squares. \citet{Devulapalli2014Geometric} studies a two-dimensional division problem in which the geometric constraints are connectivity, simple-connectivity and convexity.

In our model (see Section \ref{sec:model}), the utility functions depend on geometry, which makes them non-additive. They are not even sub-additive like in the models of \citet{Maccheroni2003How,DallAglio2005Fair,DallAglio2009Disputed}.
\footnote{\citet{DallAglio2009Disputed} do not explicitly require sub-additivity, but they require \emph{preference for concentration}: if an agent is indifferent between two pieces X and Y, then he prefers 100\% of X to 50\% of X plus 50\% of Y. This axiom may be incompatible with geometric constraints: an agent who wants square pieces will give away 100\% of a $20\times 10$ rectangle, in exchange for 50\% of a $20\times 20$ square which is the union of two such rectangles. We are grateful to Marco Dall'Aglio for his help in clarifying this issue.}
Previous papers about cake-cutting with non-additive utilities can be roughly divided to two kinds: some \citep{Berliant2004Foundation,Sagara2005Equity,Husseinov2013Existence} handle general non-additive utilities but provide only pure existence
results. Others \citep{Su1999Rental,Caragiannis2011Towards,Mirchandani2013Superadditivity} provide constructive division procedures but only for a 1-dimensional cake. Our approach is a middle ground between these extremes. Our utility functions are more general than the 1-dimensional model but less general than the arbitrary utility model; for this class of utility functions, we provide both existence results and constructive division procedures. 

Besides fair division problems, geometric methods have been used in many other economics problems,\footnote{We are thankful to Steven Landsburg, Michael Greinecker, Kenny LJ,
Alecos Papadopoulos, B Kay and Martin van der Linden for contributing these references in economics.stackexchange.com website (http://economics.stackexchange.com/q/6254/385).} such as voting \citep{Plott1967Notion}, trade theory and growth theory \citep[e.g. ][]{Johnson1971Trade}, tax burdens \citep{Hines1995From}, social choice \citep{Cantillon2002Graphical}, mechanism design \citep{Goeree2011Geometric}, public good/bad allocation \citep[e.g. ][]{Ozturk2013Strategyproof,Ozturk2014Location,Chatterjee2016Locating}, utility theory \citep{Abe2012Geometric} and general economics models \citep{Michaelides2006Euclidean}.

With square pieces a proportional allocation may not exist, so we have to settle for partial\-/proportionality. Other goals that justify partial\-/proportionality are speed of computation \citep{Edmonds2006Balanced,Edmonds2008Confidently}, improving the social welfare \citep{Zivan2011Can,Arzi2012Cake} and guaranteeing a minimum-length constraint of a 1-dimensional piece \citep{Caragiannis2011Towards}.

\section{Model and Terminology}
\label{sec:model}

The \emph{cake} $C$ is a Borel subset of the two-dimensional Euclidean plane $\mathbb{R}^{2}$. Usually $C$ is a polygonal domain. \emph{Pieces} are Borel subsets of $\mathbb{R}^{2}$. \emph{Pieces of $C$} are Borel subsets of $C$.

There is a pre-specified family $S$ of pieces that are considered \emph{usable}. An \emph{\spiece{}} is an element of $S$. In the present paper, $S$ is usually the family of squares or fat rectangles. 

$C$ has to be divided among $n\geq1$ \emph{agents}. Each agent $i\in\left\{ 1,...,n\right\}$ has a value-density function $v_{i}$, which is an integrable, non-negative and bounded function on $C$. The \emph{value} of a piece $X_i$ to agent $i$ is marked by $V_i(X_i)$ and it is the integral of the value-density over the piece: 
\[
V_i(X_i)=\iint_{X_i}v_{i}(x,y)dxdy
\]
When $C$ is unbounded, we assume that the $v_i$ are nonzero only in a bounded subset of $C$. Hence the $V_i$ are always finite. The $V_i$ are also continuous --- a zero-area piece has a value of zero to all agents. This means that the value of a piece is the same whether or not it contains its boundary.

Based on $V_i$ and $S$ we define the following shape-based \emph{utility} function, which assigns to each piece $X_i\subseteq C$ the value of the most valuable usable piece contained in $X_i$:
\[
V_i^{S}(X_i)=\sup_{q\in S\,\text{{and}}\,q\subseteq X_i}V_i(q)
\]
For example, suppose $S$ is the family of squares. If Alice wants to build a square house but gets a piece $X_i$ that is not square, then she builds her house on the most valuable square contained in $X_i$. Hence her utility is the value of that most valuable square. 

The value function $V$ is additive, but the utility function $V^S$ is usually not additive. Hence, classic cake-cutting results, which require additivity, are not applicable. If the cake $C$ itself is an \spiece{}, then $V^S(C)=V(C)$; otherwise, usually $V^S(C)<V(C)$.

An \emph{$S$-allocation} is an $n$-tuple of \spiece{}s, $X=(X_1,...,X_n)$, one piece per agent, such that $X_{i}\subseteq C$ and the $X_{i}$ are pairwise-disjoint. Some parts of the cake may remain unallocated (free disposal is assumed). Since $X_{i}$ is an \spiece{}, $V^{S}(X_{i})=V(X_{i})$. 

The fairness of an allocation is determined by the agents' \emph{normalized} values. Values can be normalized in two ways: either divide them by the \emph{absolute} cake value for the agent and get $V_i(X_i)/V_i(C)$, or divide them by the \emph{relative }cake utility for the agent and get $V_i(X_i)/V_i^{S}(C)$. Throughout the paper absolute normalization is used, except in Subsection \ref{sub:Selection-algorithms} where relative normalization is used.

An allocation is called \emph{proportional} if the normalized value of every agent is at least $1/n$. Example \ref{exm:intro} shows that a proportional allocation is not always attainable (whether absolute or relative normalization is used). Hence, we define:
\begin{defn}
\emph{\label{def:abs-prop}(Absolute proportionality)} For a cake
$C$, a family of usable pieces $S$ and an integer $n\geq1$: 

(a) The \emph{proportionality level }of $C$, $S$ and $n$, marked
$\prop(C,n,S)$, is the largest fraction $r\in[0,1]$ such that, for every $n$ value measures $(V_i,...,V_n)$, there exists an $S$-allocation $(X_1,...,X_n)$ for which $\forall i:\,V_i(X_i)/V_i(C)\geq r$.\footnote{Shortly: $\prop(C,n,S)=\inf_{V}\sup_{X}\min_{i}V_{i}(X_{i})/V_{i}(C)$, where the infimum is on all combinations of $n$ value measures $(V_{1},...,V_{n})$, the supremum is on all $S$-allocations $(X_{1},...,X_{n})$ and the minimum is on all agents $i\in\{1,...,n\}$.}

(b) The \emph{same-value proportionality level }of $C$, $S$ and $n$, marked $\propsame(C,n,S)$, is the largest fraction $r\in[0,1]$ such that, for every single value measure $V$, there exists an $S$-allocation $(X_1,...,X_n)$ for which $\forall i:\,V(X_{i})/V(C)\geq r$.
\end{defn}
The analogous definition for relative proportionality is given in Subsection \ref{sub:Selection-algorithms}. 

Obviously, for every $C$, $S$ and $n$: $\prop(C,n,S)\leq\propsame(C,n,S)\leq1/n$.

Applying this notation, classic cake-cutting results \citep[e.g. ][]{Steinhaus1948Problem}
imply that for every cake $C$
\begin{align*}
\prop(C,\,n,\,All)=\propsame(C,\,n,\,All)=1/n
\end{align*}
where \emph{All} is the collection of all pieces. That is: when there are no geometric constraints on the pieces,
for every cake $C$ and every combination of $n$ continuous value
measures there is a division in which each agent receives a utility
of $1/n$, which is the best that can be guaranteed. One-dimensional
procedures with contiguous pieces \citep[e.g. ][]{Even1984Note}  prove that $\prop(Interval,\,n,\,intervals)=1/n$
and when translated to two dimensions they yield: 

\[
\prop(Rectangle,\,n,\,rectangles)=\propsame(Rectangle,\,n,\,Rectangles)=1/n
\]
However, these procedures do not consider constraints that are two-dimensional in nature, such as squareness. Such two-dimensional constraints are the focus of the present paper. 

Our challenge in the rest of this paper will be to establish bounds
on $\prop(C,n,S)$ and $\propsame(C,n,S)$ for various cake shapes
$C$ and piece families $S$. Two types of bounds are provided: 
\begin{itemize}
\item Impossibility results (upper bounds), of the form $\prop(C,n,S)\leq f(n)$
where $f(n)\in[0,1]$, are proved by showing a set of $n$ value measures
on $C$, such that in any $S$-allocation, the value of one or more
agents is \emph{at most} $f(n)$. Such bounds are established in Section
\ref{sec:Impossibility-Results}. 
\item Positive results (lower bounds), of the form $\prop(C,n,S)\geq g(n)$
where $g(n)\in[0,1]$, are proved by describing a division procedure which finds, for every set of $n$ value measures on $C$, an $S$-allocation in which the value of every agent is \emph{at least }$g(n)$. Such bounds are established in Sections \ref{sec:auctions}-\ref{sec:procedures}. 
\end{itemize}

\section{Impossibility Results}
\label{sec:Impossibility-Results}
Our impossibility results are based on the following scenario.
\begin{itemize}
\item The cake $C$ is a desert with only $k$ water-pools; the set of pools is denoted $P_k$.
\item Each pool in $P_k$ is a square with side-length $\epsilon>0$ containing $1$ unit of water.
\item There are $n$ agents with the same value measure: the value of a piece equals the total amount of water in the piece. So the value of each pool in $P_k$ is 1 and the total cake value is $k$.
\item We say that a piece $X_i$ is \textbf{supported by $P_k$} if $X_i$ contains strictly more than 1 unit of water from $P_k$. This implies that $X_i$ touches at least two pools of $P_k$.
\item We say that $P_k$ \textbf{supports $m$ squares} if there exists a collection of $m$ pairwise-disjoint squares each of which 
contains strictly more than one unit of water from $P_k$.
\end{itemize}
The latter definition implies the following lemma:
\begin{lemma} \label{lem:support}
A collection of $k$ pools supports at most $k-1$ squares.
\end{lemma}
\begin{proof}
Let $P_k$ be a collection of $k$ pools and suppose that it supports $m$ squares. This means that there exists a collection of $m$ pairwise-disjoint squares, each of which contains more than one unit of water from $P_k$. So the union of these squares contains strictly more than $m$ units of water from $P_k$. Since each pool in $P_k$ contains exactly one unit of water, necessarily $k\geq m+1$ so $m\leq k-1$.
\end{proof}

In each impossibility result, we present a set $P_k$ and prove that it supports at most $n-1$ squares. This implies that, in every allocation of $n$ pairwise-disjoint squares, at least one agent receives a piece not supported by $P_k$ --- a piece with at most 1 unit of water. The value of this agent is at most a fraction $1/k$ of the total cake value. This implies that $\propsame(C,\,n,\,Squares)\leq1/k$, which implies that $\prop(C,\,n,\,Squares)\leq1/k$. 

\subsection{Impossibility results for two, three and four walls}
We start with impossibility results for two agents.

\begin{figure}
\psset{unit=1.5mm,dotsep=1pt}
\centering
\renewcommand{\squarepool}[1]{\Square[fillstyle=solid,fillcolor=blue,linecolor=blue]#1{1}} 
\newcommand{\quartplanecake}{
  \psline[linecolor=brown](0,20)(0,0)(20,0)
}
\begin{pspicture}(30,25)
\rput[l](0,24){\tiny{a. \prop(C,\n{2},Squares) $~\leq~ 1/3$}} 
\quartplanecake 
\squarepool{(0,0)} 
\squarepool{(10,0)} 
\squarepool{(0,10)} 
\rput(0.5,0.5){
\psframe[linestyle=dashed,linecolor=red](0,0)(10,10) 
\pspolygon[linestyle=dotted,linecolor=blue](10,0)(0,10)(10,20)(20,10) 
}
\rput(6,6){x} 
\end{pspicture}
\hskip 2cm
\begin{pspicture}(30,25)
\rput[l](0,24){\tiny{b. \prop(C,\n{3},Squares) $~\leq~ 1/5$}} 
\quartplanecake
\smallsquarepool{(0,0)}
\smallsquarepool{(1,0)}
\smallsquarepool{(0,1)}
\squarepool{(10,0)} 
\squarepool{(0,10)} 
\end{pspicture}

\protect\caption{\label{fig:impossibility-quart} Impossibility results in a quarter-plane
cake for $n=2$ and $n=3$ agents.}
\end{figure}

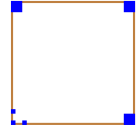
\begin{figure}
\psset{unit=1.5mm,dotsep=1pt}
\centering
\newcommand{\squarecake}{
  \psframe[linecolor=brown](0,0)(11,11)
}
\begin{pspicture}(30,15)
\rput[l](0,14){\tiny{a. \prop(C,\n{2},Squares) $~\leq~ 1/4$}} 
\squarecake
\squarepool{(0,0)} 
\squarepool{(10,0)} 
\squarepool{(0,10)} 
\squarepool{(10,10)}
\rput(5.5,5.5){x} 
\psframe[linestyle=dashed,linecolor=red](1.5,0.5)(10.5,9.5)
\psframe[linestyle=dotted,linecolor=green](0.5,1.5)(9.5,10.5)
\end{pspicture}
\hskip 2cm
\begin{pspicture}(30,15)
\rput[l](0,14){\tiny{b. \prop(C,\n{3},Squares) $~\leq~ 1/6$}} 
\squarecake
\smallsquarepool{(0,0)}
\smallsquarepool{(1,0)}
\smallsquarepool{(0,1)}
\squarepool{(10,0)} 
\squarepool{(0,10)} 
\squarepool{(10,10)}
\end{pspicture}

\protect\caption{\label{fig:impossibility-square} Impossibility results in a square cake
	for $n=2$ and $n=3$ agents.}
\end{figure}
 
\begin{claim}
\label{claim:neg-quart-2}
\[
\propsame(Quarter\,plane,\,\n{2},\,Squares)~\leq~1/3
\]
\end{claim}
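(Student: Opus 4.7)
The plan is to use the general impossibility template with $k=3$ pools placed in a symmetric ``L'' along the two walls of the quarter-plane. Concretely, take $C=\{(x,y):x,y\geq 0\}$, pick $L\gg\epsilon>0$, and let the value measure be the sum of three uniform densities, each of total mass $1/3$, supported on small pools $P_1=[0,\epsilon]^2$, $P_2$ near $(L,0)$, and $P_3$ near $(0,L)$. Both agents share this measure. Since a square that touches at most one pool has value at most $1/3$, it suffices to prove that no two disjoint squares $X_1, X_2 \subseteq C$ can both touch two or more pools. The bound $1/3$ is then exact in the limit $\epsilon\to 0$.

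The main geometric tool is a corner lemma: in the limit $\epsilon\to 0$ (so $P_1$ collapses to the origin), any square $Q\subseteq C$ that contains the origin must be axis-parallel with one vertex at the origin, i.e.\ of the form $[0,s]^2$. The reason is that the two walls of the quarter-plane meet at the origin at a right angle, and any square of a different orientation containing the origin would have an edge spilling into $\{x<0\}$ or $\{y<0\}$. A direct consequence is that any square touching $P_1$ together with either $P_2$ or $P_3$ is of this axis-parallel form with $s\geq L$, so it in fact contains $[0,L]^2$ and touches all three pools.

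The main obstacle is the mixed case in which $X_1$ touches $P_1$ (hence, by the consequence, all three pools, with $X_1 \supseteq [0,L]^2$) while $X_2$ touches $\{P_2,P_3\}$ but not $P_1$. Here I plan a short case analysis on the role played by $P_2P_3$ inside $X_2$. By convexity $X_2$ contains the segment $P_2P_3$, which lies on the line $x+y=L$ inside $[0,L]^2\subseteq X_1$. If $P_2P_3$ is an interior chord or a diagonal of $X_2$, then $X_2$ has positive area on the origin side of that line, which is entirely contained in $[0,L]^2\subseteq X_1$, so $X_1\cap X_2$ has positive area. Otherwise $P_2P_3$ is an edge of $X_2$, and the constraint $X_2\subseteq C$ forces $X_2$ to be the ``diamond'' with vertices $(L,0), (0,L), (L,2L), (2L,L)$, the unique square of side $L\sqrt{2}$ in $C$ having $P_2P_3$ as an edge; this diamond still overlaps $[0,L]^2$ in the triangle with vertices $(L,0), (0,L), (L,L)$ of area $L^2/2$. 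Both subcases contradict disjointness.

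The remaining cases are straightforward. If both squares touch $P_1$, the corner lemma forces both to be of the form $[0,s]^2$ with $s\geq L$, so one contains the other --- contradiction. If neither square touches $P_1$ but both touch two or more pools, then both necessarily touch $P_2$ and both touch $P_3$; since each pool has total value $1/3$ and the squares are disjoint, the combined value of the two squares is at most $2/3$, so the smaller one is at most $1/3$. In every case at least one agent receives value at most $1/3$, yielding $\propsame(Quarter\,plane,\,2,\,Squares)\leq 1/3$ in the limit $\epsilon\to 0$.
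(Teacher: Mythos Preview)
Your argument is essentially correct but takes a longer path than the paper. The paper places the three pools at $(0,0)$, $(10,0)$, $(0,10)$ and makes a single observation: every square in the quarter-plane that touches any two of the pools must contain the point $(6,6)$ in its interior. This one fact immediately forces any two such squares to overlap, so no case analysis on which pair of pools each $X_j$ touches is needed; your corner lemma, the diamond computation, and the shared-value argument for the ``neither touches $P_1$'' case all become unnecessary. The common-point idea also scales cleanly to the induction for $n$ agents in Claim~\ref{claim:neg-quart-n}, where a case-by-case approach would become unwieldy.

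A minor gap: your repeated appeal to ``the limit $\epsilon\to 0$'' is not quite rigorous as written. The corner lemma is stated for a square containing the origin exactly; for positive $\epsilon$ a square intersecting $P_1=[0,\epsilon]^2$ need not contain the origin and can be slightly tilted, so the conclusion $X_1\supseteq[0,L]^2$ fails literally. You would need either a quantitative version of the lemma or a compactness step to pass from $\epsilon=0$ to small positive $\epsilon$. The paper's argument sidesteps this by verifying the common-point containment directly for each pair of pools at a fixed small $\epsilon$.
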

\begin{proof}
Let $P_3$ be the set of 3 pools shown in Figure \ref{fig:impossibility-quart}/a,
where the bottom-left corners of the pools are in $(0,0)$, $(10,0)$, $(0,10)$. Every square in $C$ touching two pools of $P_3$ must contain e.g. the point $(6,6)$ in its interior (marked by x in the figure). Hence, every two squares touching two pools of $P_3$ must overlap. Hence, $P_3$ supports at most one square. Hence, in any allocation of squares to two agents, at least one square touches at most one pool of $P_3$; the agent receiving such a square has at most $1/3$ of the total value.
\end{proof}
\begin{claim}
\label{claim:neg-square-2}
\begin{align*}
\propsame(Square,\,\n{2},\,Squares)~\leq~1/4
\end{align*}
\end{claim}
\begin{proof}
Analogous to the previous claim, based on the set $P_4$ shown in Figure \ref{fig:impossibility-square}/a.
\end{proof}
To extend these results to $n>2$ agents, we construct new sets of pools by shrinking existing sets into pools of other sets. 

As an example, consider $P_3$ from the proof of Claim \ref{claim:neg-quart-2}. Suppose the entire plane is shrunk (deflated) towards the origin. If the deflation factor is sufficiently large, all three pools of the shrunk $P_3$ are contained in $[0,\epsilon]\times[0,\epsilon]$,
which is a pool of the original $P_3$. The cake itself (the quarter-plane)
is not changed by the deflation. By adding the other two pools of $P_3$, namely $(10,0)$ and $(0,10)$, we get a larger pool set,
$P_5$, which is depicted in Figure \ref{fig:impossibility-quart}/b.
We already know that the shrunk $P_3$ supports at most one square. The additional two pools support at most one additional square, since there is at most one square touching two new pools or a new pool and a shrunk pool. Hence, $P_5$ supports at most two squares. This proves that $\propsame(Quarter\,plane,\,\n{3},\,Squares)~\leq~1/5$. The following claim generalizes this construction.
\begin{claim}
\label{claim:neg-quart-n}For every $n\geq1$: 
\begin{align*}
\propsame(Quarter\,plane,\,n,\,Squares)\leq\frac{1}{2n-1}
\end{align*}
 \end{claim}
\begin{proof}
\footnote{We are grateful to Boris Bukh for the idea underlying this proof.}It
is sufficient to prove that for every $n$ there is an arrangement of $2n-1$ pools in $C$ that supports at most $n-1$ squares. The proof is by induction on $n$. The base
case $n=1$ is trivial (and the case $n=2$ is Claim \ref{claim:neg-quart-2}).
For $n>2$, assume there is an arrangement of $2(n-1)-1$ pools that supports at most $n-2$ squares. Deflate the entire arrangement towards the origin until it is contained in $[0,\epsilon]\times[0,\epsilon]$, where $\epsilon>0$ is a sufficiently small constant.

Add two new pools with side-length $\epsilon$ cornered at $(10,0)$ and $(0,10)$. We now have an arrangement of $2n-1$ pools. Every square touching a new pool and another pool (either new or old), must contain e.g. the point $(6,6)$ in its interior, so every two such squares must overlap.
Hence, the additional pools support at most one additional square. All in all, the new arrangement of $2n-1$ pools supports at most $(n-2)+1 = n-1$ squares.
\end{proof}
The upper bound for two walls is also trivially true when the cake
is a square with three walls, since adding walls cannot increase the
proportionality:
\[
\propsame(Square\,with\,3\,walls,\,n,\,Squares)~\leq~\frac{1}{2n-1}
\]
The bound also holds for a square with 4 walls, but in this case a slightly tighter bound is true:
\begin{claim}
\label{claim:neg-square}For every $n\geq2$, 
\[
\propsame(Square\,with\,4\,walls,\,n,\,Squares)~\leq~\frac{1}{2n}
\]
\end{claim}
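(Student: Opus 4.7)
The plan is to mimic the inductive construction of Claim \ref{claim:neg-quart-n} but exploit the extra corner that a square cake has compared to a quarter-plane. The base case $n = 2$ is exactly Claim \ref{claim:neg-square-2}, which gives $1/4 = 1/(2 \cdot 2)$. So assume $n \geq 3$ and that the result holds for $n - 1$. I will construct an arrangement of $2n$ water-pools in the square cake such that at most $n - 1$ pairwise-disjoint squares touch two or more pools; this immediately forces some agent to receive a square touching at most one pool, hence to have value at most $1/(2n)$.

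For the construction, start from the quarter-plane pool arrangement with $2(n - 1) - 1 = 2n - 3$ pools guaranteed by the proof of Claim \ref{claim:neg-quart-n} applied to $n - 1$ agents, in which at most $n - 2$ pairwise-disjoint squares touch two or more pools. Deflate this entire arrangement toward the origin until all $2n - 3$ of its pools fit inside the bottom-left $\epsilon \times \epsilon$ corner of the square cake. Then add three fresh pools of side $\epsilon$ at the remaining three corners (bottom-right, top-left, top-right). The total is $(2n - 3) + 3 = 2n$ pools, matching the $n = 3$ picture in Figure \ref{fig:impossibility-square}(b).

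To count disjoint squares that touch two or more pools, I split them into two types. Type (a): squares touching only deflated pools. These are essentially contained in the bottom-left $\epsilon$-region and, by a rescaled application of the inductive hypothesis to the quarter-plane arrangement we started from, at most $n - 2$ of them can be pairwise disjoint. Type (b): squares touching at least one of the three new corner pools, together with at least one other pool (new or deflated). Any such square must bridge a distance of about $10\epsilon$ inside an $11\epsilon$-sided cake, so its side is at least $10\epsilon - O(\epsilon)$. The same elementary geometry used in Claim \ref{claim:neg-square-2} then shows that every such square contains the central region $[\epsilon, 10\epsilon] \times [\epsilon, 10\epsilon]$, in particular the point $x$ marked in Figure \ref{fig:impossibility-square}(a). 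Hence all type-(b) squares pairwise overlap, so at most one of them can belong to a pairwise-disjoint collection. Summing, at most $(n - 2) + 1 = n - 1$ disjoint squares can touch two or more pools, completing the induction.

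The main obstacle is the geometric case analysis of type (b): I will need to verify, for each choice of a new corner pool paired with either another new corner pool or one of the deflated pools inside the bottom-left corner, that the touching square is indeed forced to have side $\geq 10\epsilon - O(\epsilon)$ and thus to contain the common central region. This is the same ``two large squares must overlap near the center'' phenomenon as in Claim \ref{claim:neg-square-2}, but applied uniformly across all combinations involving a new pool; once this is checked, the induction and the resulting bound $\propsame(\text{Square},\,n,\,\text{Squares}) \leq 1/(2n)$ follow directly.
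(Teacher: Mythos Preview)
Your approach is essentially identical to the paper's: deflate the $(2n-3)$-pool quarter-plane arrangement from Claim~\ref{claim:neg-quart-n} into one corner of the square, add three new pools at the other corners, and observe that any square touching a new pool together with any other pool must contain the cake's center, so at most one such square can appear in a disjoint collection. Two small cleanups: your scaling is garbled (an ``$11\epsilon$-sided cake'' with $\epsilon$-sided pools makes the pools large rather than negligible---use a cake of side $10+\epsilon$ as the paper does), and for the type-(b) overlap you should claim containment of a single central point such as $(5,5)$ rather than an entire central region, since for rotated squares the stronger region-containment need not hold.
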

\begin{proof}
W.l.o.g. assume $C$ is the square $[0,10+\epsilon]\times[0,10+\epsilon]$.
Create the arrangement of $2(n-1)-1$ pools from the induction step
of Claim \ref{claim:neg-quart-n}. Deflate it into to $[0,\epsilon]\times[0,\epsilon]$. The shrunk collection supports at most $n-2$ squares.
Add \emph{three} new pools with side-length $\epsilon$ cornered at
$(10,0)$, $(0,10)$ and $(10,10)$, as in Figure \ref{fig:impossibility-square}/b.
 Every square in $C$ touching a new pool and another pool must contain $(5,5)$ in its
interior. Hence, the three additional pools allow us to support at most one additional square. All in all, the new arrangement of $2n$ pools supports at most $n-1$ squares.
\end{proof}

\subsection{Impossibility results for one wall}

\newcommand{\halfplanecake}{
  \psline[linecolor=brown](-20,0)(0,0)(20,0)
}
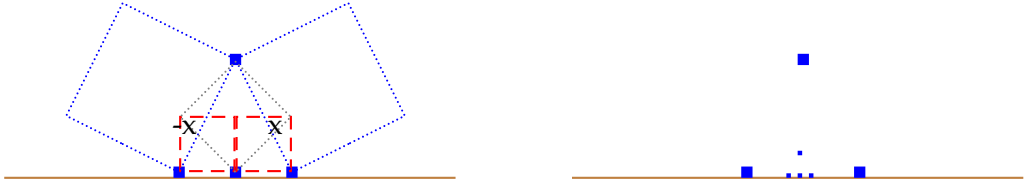
\begin{figure}
\psset{unit=1.5mm,dotsep=1pt}
\centering
\begin{pspicture}(50,25) 
\rput(20,0){
\rput(0,22){\tiny{a. $\prop(C,\n{3},Squares)\leq 1/4$}} 
\halfplanecake 
\squarepool{(0,0)} 
\squarepool{(5,0)} 
\squarepool{(0,10)} 
\squarepool{(-5,0)} 
\rput(0.5,0.5){
\psframe[linestyle=dashed,linecolor=red](0,0)(5,5) \pspolygon[linestyle=dotted,linecolor=blue](5,0)(0,10)(10,15)(15,5) 
\psframe[linestyle=dashed,linecolor=red](0,0)(-5,5) \pspolygon[linestyle=dotted,linecolor=blue](-5,0)(0,10)(-10,15)(-15,5) 
\pspolygon[linestyle=dotted,linecolor=gray](0,0)(4.9,4.9)(0,9.8)(-4.9,4.9)
}
\rput(4,4.5){x} 
\rput(-4,4.5){-x} 
}
\end{pspicture}
\begin{pspicture}(50,25) 
\rput(20,0){
\rput(0,22){\tiny{b. $\prop(C,\n{5},Squares)\leq 1/7$}} 
\halfplanecake 
\smallsquarepool{(0,0)}
\smallsquarepool{(1,0)}
\smallsquarepool{(0,2)}
\smallsquarepool{(-1,0)}
\squarepool{(5,0)} 
\squarepool{(0,10)} 
\squarepool{(-5,0)} 
}
\end{pspicture}

\protect\caption{\label{fig:impossibility-half-plane} Impossibility results in a half-plane cake for $n=3$ and $n=5$ agents. See Claims \ref{claim:neg-half-3}-\ref{claim:neg-half-n}.}
\end{figure}

\begin{claim}
\label{claim:neg-half-3}
\[
\propsame(Half\,plane,\,\n{3},\,Squares)\leq1/4
\]
\end{claim}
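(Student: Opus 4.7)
The plan follows the template established in Claims \ref{claim:neg-quart-2} and \ref{claim:neg-square}: exhibit a homogeneous ``pool'' scenario in which no three pairwise-disjoint squares can each touch two or more pools. Concretely I would place four pools of side $\epsilon > 0$ at the positions $(-5,0)$, $(0,0)$, $(5,0)$, and $(0,10)$ in the upper half-plane $C$, as depicted in Figure \ref{fig:impossibility-half-plane}/a, with all agents sharing the uniform value measure that assigns value $1/4$ to each pool. Since any square touching at most one pool has value at most $1/4$, it is enough to prove that any three pairwise-disjoint squares in $C$ contain at most two squares that touch two or more pools; this will force one of the three agents to receive a square of value at most $1/4$.

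The key geometric step is to fix the two witness points $x := (4,4.5)$ and $-x := (-4,4.5)$ shown in the figure and establish the lemma: \emph{every square contained in the upper half-plane whose interior intersects two or more of the four pools contains $x$ or $-x$ in its interior.} By the left/right symmetry of the arrangement across the $y$-axis it suffices to verify this for the three unordered pairs on the right, $\{p_1,p_2\}$, $\{p_2,p_4\}$, $\{p_1,p_4\}$, together with the ``wide'' pair $\{p_2,p_3\}$; in each case I would check that the square contains $x$, $-x$, or both. For an axis-aligned square the verification is a direct coordinate check: convexity forces the segment joining the two pool-centers to lie in the square, and combined with the forced $y$-extent this places either $(4,4.5)$ or $(-4,4.5)$ in the interior. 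For a tilted square I would exploit that a convex body in the closed upper half-plane meets the wall $\{y=0\}$ in either a single vertex or a full edge. The single-vertex case is incompatible with positive-measure intersection with two distinct pools on the wall, so for pairs of bottom pools the square is in fact forced to be axis-aligned; for pairs involving $p_4$ the square must span from $y \approx 0$ to $y \approx 10$, and an inscribing calculation (essentially the one pictured by the dotted polygons in the figure) shows that every such square covers $x$ or $-x$.

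Given the witness-point lemma, the conclusion is immediate: among three pairwise-disjoint squares, at most one contains $x$ and at most one contains $-x$ in its interior, so at most two of them touch two or more pools. The third square touches at most one pool and yields value at most $1/4$, which establishes $\propsame(C,3,\text{Squares}) \leq 1/4$.

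The main obstacle I anticipate is the tilted-square analysis in the lemma, especially for the pair $\{p_2,p_4\}$ and its mirror, where the square must stretch from the wall up to $(0,10)$ while remaining in the half-plane. Heavily rotated squares such as the diamond with diagonal from $(5,0)$ to $(0,10)$ are the most at risk of missing the witness point; verifying that even such diamond configurations are forced to cover $x$ relies on the fact that pinning the square's bottom near $p_2$ while simultaneously demanding it reach $p_4$ leaves essentially no freedom for the square's body to slip past $x=(4,4.5)$.
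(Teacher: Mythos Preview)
Your approach is essentially identical to the paper's: same four-pool configuration, same witness points $x=(4,4.5)$ and $-x=(-4,4.5)$, and the same pigeonhole conclusion that any three disjoint squares touching two pools would force two of them to share a witness point. The paper organizes the case split by ``which pool is $(5,0)$/$(-5,0)$/$(0,0)$'' rather than by unordered pairs, but the content is the same and the paper likewise only asserts the geometric containment claims without a detailed verification. One small correction: a square touching two distinct bottom pools is not literally forced to be axis-aligned---a tilt of order $\epsilon$ is possible since the pools have height $\epsilon$---so your argument should read ``nearly axis-aligned for $\epsilon$ small enough,'' which still suffices to place the witness point inside.
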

\begin{proof}
Let $P_4$ be the set of 4 pools shown in Figure \ref{fig:impossibility-half-plane}/a. Assume the side-length of each pool is $\epsilon\leq0.01$ and that their bottom-left corner is in $(-5,0)$, $(0,0)$,  $(0,10)$, $(5,0)$. We prove that $P_4$ supports at most 2 squares. Examine the squares in $C$ that touch two pools of $P_4$:
\begin{itemize}
\item Every square touching $(5,0)$ and another pool must contain the point x $(4,4.5)$ in its interior.
\item Every square touching $(-5,0)$ and another pool must contain the point -x $(-4,4.5)$.
\item Every square touching $(0,0)$ and another pool must touch either x or -x or both.
\end{itemize}
Hence, in every set of three squares, each of which touches two pools of $P_4$, at least two squares must overlap. Hence, $P_4$ supports at most two squares. Hence, in any allocation to three agents, at least one of them receives at most 1/4 of the total value.
\end{proof}
\begin{claim}
\label{claim:neg-half-n}For every $n\geq2$:
\[
\propsame(Half\,plane,\,n,\,Squares)\leq\frac{1}{(3/2)n-1}
\]
\end{claim}
\begin{proof}
The proof is analogous to that of Claim \ref{claim:neg-quart-n}. With
each induction step, the current arrangement of pools is shrunk towards the central pool at the origin, \emph{three} new pools are added, but only \emph{two} new squares are supported. Hence the coefficient of $n$ is $3/2$.  The $-1$ ensures that the right-hand side is a correct upper bound for every $n\geq 2$.
\end{proof}
Figure \ref{fig:impossibility-half-plane}/b shows the set of 7 pools for the case $n=5$.

\subsection{Impossibility results for zero walls}
\psset{unit=7mm,dotsep=1pt}
\def\maxy{2.5}
\def\maxrot{18}
\def\maxyR{5}
\newcommand{\axes}{
	\psline[linestyle=dotted,linecolor=black!30](0,-5)(0,5)
	\psline[linestyle=dotted,linecolor=black!30](-5,0)(5,0)
}
\newcommand{\axeslong}{
	\psline[linestyle=dotted,linecolor=black!30](0,-5)(0,5)
	\psline[linestyle=dotted,linecolor=black!30](-15,0)(15,0)
}
\newcommand{\pfour}{
	\squarepoolwithlabel{(-3,0)}{B}
	\squarepoolwithlabel{(-1,0)}{C}
	\squarepoolwithlabel{(1,0)}{C'}
	\squarepoolwithlabel{(3,0)}{B'}
}
\newcommand{\psix}{
	\pfour{}
	\squarepoolwithlabel{(0,\maxy)}{A}
	\squarepoolwithlabel{(0,-\maxy)}{A'}
}
\newcommand{\psixnoletters}{
	\squarepoolwithlabel{(-3,0)}{}
	\squarepoolwithlabel{(-1,0)}{}
	\squarepoolwithlabel{(1,0)}{}
	\squarepoolwithlabel{(3,0)}{}
	\squarepoolwithlabel{(0,\maxy)}{}
	\squarepoolwithlabel{(0,-\maxy)}{}
}

\newcommand{\psixteen}{
	\pfour
	\rput(0,\maxy){
		\scalebox{0.1}{
		\rput{0}{\psix{}}
		}
	}
	\rput(0,-\maxy){
		\scalebox{0.1}{
		\rput{0}{\psix{}}
		}
	}
}
\newcommand{\psixR}{
	\pfour{}
	\squarepoolwithlabel{(0,\maxyR)}{A}
	\squarepoolwithlabel{(0,-\maxyR)}{A'}
}

Finding an impossibility result for an unbounded cake is a  challenging task. The main difficulty is that, when there are no walls, any arrangement of pools can be rotated arbitrarily, as will be explained shortly.

We begin with impossibility results for the restricted case in which the squares must be parallel to a specific coordinate system. Such a restriction may be meaningful, for example, in the installation of solar power-plants or the building of houses with electric solar panels, where the positioning relative to the sun is important.


\begin{claim}
Given a fixed coordinate system in the plane:
\label{claim:neg-plane-5}
\begin{align*}
\propsame(Plane,\,\n{5},\,Axes\,Parallel\,Squares) \leq 1/6
\end{align*}

\end{claim}
\begin{proof}
Let $P_6$ be the set of 6 pools:
A(0,\maxy), B(-3,0), C(-1,0), C'(1,0), B'(3,0), A'(0,-\maxy). We prove that $P_6$ supports at most 4 axes-parallel squares. First, consider the squares that touch two pools of $P_6$: 
\begin{center}
\begin{pspicture}(-5,-4)(5,4.5) 
\rput(0,4){(a) $P_6$ Pools:}
\rput(-0.05,-0.15){
\psix{}
}
\axes{}
\end{pspicture}
\hskip 1cm
\begin{pspicture}(-5,-4)(5,4.5)
\rput(0,4){(b) Potential squares:}
\psset{opacity=0.5}
\Square[fillstyle=solid,fillcolor=red!10,linecolor=red](-\maxy,0){\maxy}
\Square[fillstyle=solid,fillcolor=red!10,linecolor=red](0,0){\maxy}
\Square[fillstyle=solid,fillcolor=green!10,linecolor=green](-\maxy,-\maxy){\maxy}
\Square[fillstyle=solid,fillcolor=green!10,linecolor=green](0,-\maxy){\maxy}
\Square[fillstyle=solid,fillcolor=yellow!10,linecolor=yellow](-3,-2){2}
\Square[fillstyle=solid,fillcolor=yellow!10,linecolor=yellow](-1,-2){2}
\Square[fillstyle=solid,fillcolor=yellow!10,linecolor=yellow](1,-2){2}
\Square[fillstyle=solid,fillcolor=brown!10,linecolor=brown](-3,0){2}
\Square[fillstyle=solid,fillcolor=brown!10,linecolor=brown](-1,0){2}
\Square[fillstyle=solid,fillcolor=brown!10,linecolor=brown](1,0){2}
\psset{opacity=1}
\rput(-0.05,-0.15){
\psix{}
}
\end{pspicture}
\end{center}
We can ignore squares that contain other squares or that contain pools in their interior, since such squares can be shrunk without interfering with other squares. Hence, any set of supported squares must contain a subset of the following:
\begin{itemize}
\item At most two disjoint ``top squares'' (squares touching pool A) and two disjoint ``bottom squares'' (touching pool A'). Each such square has a side-length of $\maxy$.
\item At most one ``left square'' (touching pools B and C), one ``right square'' (touching pools B' and C') and one ``central square'' (touching C and C'). Each such square has a side-length of 2 and can be located anywhere between $y=-2$ and $y=2$.
\footnote{
	While there can two disjoint squares touching pools B+C, Lemma \ref{lem:support} implies that the pools B+C can support at most one square. The same is true for the pools B'+C' and C+C'.
}
\end{itemize}
We prove that at most four of these squares can be supported simultaneously. There are two cases:

\textbf{Case \#1:} there are no bottom squares. The pool A' is not used, so only 5 pools are used. By Lemma \ref{lem:support}, these pools can support at most 4 squares. The situation is similar if there are no top squares, since in this case the pool A is not used.

\textbf{Case \#2:} there is at least one bottom square (e.g, a square supported by A' and C') and at least one top square (e.g, supported by A and C). These two squares leave no room for a central square. Hence, there is room for at most two additional squares: one above the x axis (e.g, supported by A and C', or C' and B'), and one below the x axis (e.g, supported by A' and C, or C and B).

In all cases, $P_6$ supports at most 4 axes-parallel squares.
\end{proof}

\begin{claim}
\label{claim:neg-plane-16}
Given a fixed coordinate system in the plane, for every $k\geq 0$:
\begin{align*}
\propsame(Plane,\,\n{5+9k},\,Axes\,Parallel\,Squares) ~~\leq~~ 1/(6+10 k)
\end{align*}
\end{claim}
\begin{proof}
We prove that for every $k\geq 0$, there exists an arrangement of $6+10k$ pools that supports at most $4+9k$ axes-parallel squares. The proof is by induction on $k$. The base $k = 0$ is proved by $P_6$ from Claim \ref{claim:neg-plane-5}. Assume that there exists an arrangement $P_{6+10(k-1)}$ which supports at most $9+4k$ squares. Construct a new arrangement $P_{6+10k}$ in the following way. Take $P_6$, replace the pool A a with shrunk copy of $P_6$ and the pool A' with a shrunk copy of $P_{6+10(k-1)}$. The following illustration shows $P_{16}$, the arrangement for $k=1$ (the shrunk copies are enlarged for the sake of clarity):
\begin{center}
\begin{pspicture}(-5,-3)(5,3) 
\rput{0}(-0.05,-0.15){
\psixteen
}
\axes
\end{pspicture}
\end{center}
The number of pools in the new arrangement is $6+4+6+10(k-1)=6+10k$. We claim that it supports at most $4+9k$ squares:
\begin{itemize}
\item The shrunk copy of $P_6$ supports at most 4 squares;
\item The shrunk copy of $P_{6+10(k-1)}$ supports at most $4+9(k-1)$ squares, by the induction assumption;
\item The four pools B C C' B' in the large $P_6$ support at most 3 large squares;
\item If there is a top large square then there is at most one additional large square above the x axis, and if there is a bottom large square then there is at most one additional large square below the x axis.
\end{itemize}
All in all, at most $4+4+9(k-1)$ squares are supported by the shrunk copies and at most 3+2=5 additional large squares are supported by the outer arrangement, so the total number of supported squares is at most $4+9k$. 
\end{proof}
In general, every 10 additional pools support at most 9 additional squares. Hence:
\begin{align*}
\propsame(Plane,\,n,\,Axes\,Parallel\,Squares) \leq {1 \over (10/9)n - 1} \approx \frac{9}{10}\cdot \frac{1}{n}
\end{align*}
This implies that any division procedure which works in a pre-specified coordinate system cannot guarantee a proportional division of the plane with square pieces. 

In our next results, we relax the axes-parallel restriction and only require that the squares be parallel to each other. While this is still not the most general setting, it is natural e.g. in urban planning. Equivalently, we still require that the squares be parallel to the axes, but allow the arrangement of pools to rotate.

Note that the proof of Claim \ref{claim:neg-plane-5} (Case 2) relies on the fact that any pair of a top-square and a bottom-square leaves no room for a central square. This follows from the facts that A and A' lie horizontally between C and C', and the horizontal distance between C and C' is larger than the vertical distance between B and B'. These facts are still true if the entire arrangement is rotated by at most $\maxrot^\circ$ to either direction:\footnote{The calculation was done using GeoGebra \citep{GeoGebra,GeoGebra5}. The worksheet is available here:
https://tube.geogebra.org/m/zzNY3ag4 }

\begin{center}
\begin{pspicture}(-5,-5)(5,6) 
\rput(0,5){(a) $P_6$ rotated $\maxrot^\circ$:}
\rput{\maxrot}(-0.05,-0.15){
\psix{}
}
\axes{}
\end{pspicture}
\hskip 1cm
\begin{pspicture}(-5,-5)(5,6)
\rput(0,5){(b) Potential squares:}
\psset{opacity=0.5}
\rput(0.1,0.2){
\Square[fillstyle=solid,fillcolor=red!10,linecolor=red](-3.46,-0.31){2.69}  
\Square[fillstyle=solid,fillcolor=red!10,linecolor=red](-0.77,0.31){2.07} 
\Square[fillstyle=solid,fillcolor=red!10,linecolor=red](-0.77,0.93){3.63} 
\Square[fillstyle=solid,fillcolor=green!10,linecolor=green](-2.85,-4.55){3.63} 
\Square[fillstyle=solid,fillcolor=green!10,linecolor=green](-1.3,-2.38){2.07} 
\Square[fillstyle=solid,fillcolor=green!10,linecolor=green](0.77,-2.38){2.69}   
\Square[fillstyle=solid,fillcolor=yellow!10,linecolor=yellow](-2.85,-2.19){1.84}
\Square[fillstyle=solid,fillcolor=yellow!10,linecolor=yellow](-0.95,-1.58){1.84}
\Square[fillstyle=solid,fillcolor=yellow!10,linecolor=yellow](0.95,-1){1.84}
\Square[fillstyle=solid,fillcolor=brown!10,linecolor=brown](-2.85,-0.93){1.84}
\Square[fillstyle=solid,fillcolor=brown!10,linecolor=brown](-0.95,-0.31){1.84}
\Square[fillstyle=solid,fillcolor=brown!10,linecolor=brown](0.95,0.31){1.84}
}
\rput{\maxrot}{
\psset{opacity=1}
\psix{}
}
\end{pspicture}
\end{center}
For every angle $\theta$, define $ParallelSquares[\theta]$ as the family of squares rotated at exactly $\theta$ degrees (counter-clockwise) relative to the axes. Then, the proofs of Claim \ref{claim:neg-plane-5} and \ref{claim:neg-plane-16} and the above explanation imply:
\begin{claim} For every $\theta\in[-\maxrot^\circ,+\maxrot^\circ]$ and every $k\geq 0$:
\begin{align*}
\propsame(Plane,\,\n{5+9k},\,ParallelSquares[\theta])~~ \leq ~~ 1/(6+10 k)
\end{align*}
\end{claim}
The arrangement $P_{6+10k}$ ``covers'' a range of rotation-angles of size $36^\circ$. By using three copies of $P_{6+10k}$ rotated in different angles, we can cover the entire range of relevant rotation angles. We use this idea to prove an impossibility result for rotated parallel squares.
\begin{claim}
\label{claim:neg-plane-31}
For every $k\geq 0$:
\begin{align*}
\propsame(Plane,\,\n{18+29k},\,Parallel Squares)~~ \leq~~ 1/(18+30 k)
\end{align*}
\end{claim}
\begin{proof}
Construct an arrangement $P_{18+30k}$ from three copies of $P_{6+10k}$:
\begin{itemize}
\item A \emph{leftmost copy} --- rotated by $-27^\circ$ and translated by $(-300,0)$;
\item A \emph{central copy} --- not rotated;
\item A \emph{rightmost copy} --- rotated by $+27^\circ$ and translated by $(+300,0)$.
\end{itemize}
The following illustration shows $P_{18}$ (the construction for $k=0$) with the three copies enlarged for the sake of clarity:
\begin{center}
\begin{pspicture}(-12,-2)(12,2)
\rput{-27}(-9.05,-0.15){
\scalebox{0.5}{\psixnoletters}
}
\rput(-0.05,-0.1){
\scalebox{0.5}{\psixnoletters}
}
\rput{27}(+9.05,-0.15){
\scalebox{0.5}{\psixnoletters}
}
\axeslong
\end{pspicture}
\end{center}

We claim that if $P_{18+30k}$ is rotated by any angle $\theta\in[-45^\circ,45^\circ]$, then the rotated arrangement supports at most $18+29k$ axes-parallel squares. Consider three cases:

(a) $P_{18+30k}$ is rotated by $\theta\in [-45^{\circ},-9^{\circ}]$. Then, the rightmost copy is $P_{6+10k}$ rotated by $\theta+27^\circ\in[-18^\circ,18^\circ]$, so it supports at most $4+9k$ squares. 

(b) $P_{18+30k}$ is rotated by $\theta\in [-18^{\circ},+18^{\circ}]$. Then the central copy supports at most $4+9k$ squares. 

(c) $P_{18+30k}$ is rotated by $\theta\in [+9^{\circ},+45^{\circ}]$. Then the leftmost copy is $P_{6+10k}$ rotated by $\theta-27^\circ\in[-18^\circ,18^\circ]$, so it supports at most $4+9k$ squares.

In all cases, one of the copies supports at most $4+9k$ squares. Each of the other two copies has $6+10k$ pools, so by Lemma \ref{lem:support} it supports at most $5+10k$ squares. Additionally, between the three copies there can be at most four (huge) pairwise-disjoint squares: two above and two below the x axis. All in all, the number of supported squares is at most $(4+9k)+(5+10k)+(5+10k)+4=18+29k$.

Therefore, for any angle $\theta\in[-45^\circ,45^\circ]$, if the family $S$ of usable pieces is the family of squares rotated by $\theta$, then  $P_{18+30k}$ supports at most $18+29k$ \spieces{}. But, \emph{any} square is identical to a square rotated by $\theta\in[-45^\circ,45^\circ]$. Therefore, the existence of $P_{18+30k}$ proves the claim.
\end{proof}
In Claim \ref{claim:neg-plane-31}, for every 30 new pools, at most 29 new squares can be supported. Therefore, 
\begin{claim}
For every $n\geq 1$:
\label{claim:neg-plane-61-n}
\begin{align*}
\propsame(Plane,\,n,\,Parallel\,Squares) \leq {1 \over (30/29)n - 1} \approx \frac{29}{30}\cdot \frac{1}{n}
\end{align*}
\end{claim}

\subsection{Impossibility results with fat rectangles}
\label{sub:negative-fat}
Our impossibility results so far have assumed that $S$ is the family of squares. One could think that allowing fat rectangles, instead of just squares, can overcome these impossibility results. But this is not necessarily true. Claim \ref{claim:neg-quart-2} holds as-is for $R$-fat rectangles:

\begin{claim}
\label{claim:neg-quart-2-fatrect}For every finite $R\geq1$: 
\[
\propsame(Quarter\,plane,\,2,\,R\,fat\,rectangles)\leq1/3
\]
\end{claim}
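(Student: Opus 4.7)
The plan is to reuse the three-pool arrangement from Claim~\ref{claim:neg-quart-2}: pools of side $\epsilon$ cornered at $(0,0)$, $(10,0)$, $(0,10)$ in the quarter-plane, with the common value measure uniform across them so that any rectangle meeting at most one pool in positive measure receives value at most $1/3$. The claim will follow once I show that, for $\epsilon$ chosen small enough relative to $R$ (specifically $\epsilon<10/(R+1)$), no two interior-disjoint axis-parallel $R$-fat rectangles in the quarter-plane can each meet two of the pools in positive measure.

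The single-point witness $(6,6)$ used for squares does not transport here, since a thin $R$-fat rectangle can easily avoid any fixed interior point. Instead, I would run a case analysis on the pairs of pools the two rectangles touch. With only three pools, two rectangles $A$ and $B$ each touching a pair must share at least one pool. If they touch the same pair, say $\{(0,0),(10,0)\}$, then each has $x$-extent strictly containing $(\epsilon,10)$, so no vertical line separates them; a horizontal separator $y=d$ must satisfy $d<\epsilon$ (since both rectangles dip into the pools, which live in $y\in[0,\epsilon]$), which confines one of the two rectangles to $y\leq d<\epsilon$ while its $x$-length remains at least $10-\epsilon$, yielding aspect ratio $>(10-\epsilon)/\epsilon>R$ and contradicting $R$-fatness.

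If instead $A$ and $B$ touch different pairs sharing exactly one pool, say $A$ touches $\{(0,0),(10,0)\}$ and $B$ touches $\{(0,0),(0,10)\}$ (the other subcases are symmetric under reflection across $y=x$), then as two interior-disjoint axis-parallel rectangles they are again separated by an axis-parallel line. A vertical separator fails because $A$ reaches $x>10$ while $B$ must reach $x<\epsilon$; a horizontal separator with $A$ below forces $A$ into a strip of height $<\epsilon$ while its $x$-length stays at least $10-\epsilon$, violating $R$-fatness by the same aspect-ratio argument, and the reverse orientation contradicts $B$'s need to reach $y>10$.

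The main obstacle, compared with the square case, is the absence of a universal witness point: the proof must instead combine the separating-line structure of two disjoint axis-parallel rectangles with the quantitative $R$-fatness bound, and the choice $\epsilon<10/(R+1)$ is precisely what forces one of the two rectangles to become too elongated whenever both try to touch two pools.
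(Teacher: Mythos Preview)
Your separating-line approach works and reaches the right conclusion, but it differs from the paper's argument. Rather than abandoning witness points, the paper keeps them by using \emph{three} $R$-dependent points: $(5/R,5/R)$, $(10-10/R,5/R)$, and $(5/R,10-10/R)$. Any axis-parallel $R$-fat rectangle touching two of the three pools is forced, by the lower bound on its shorter side, to contain \emph{two} of these three points; so any two such rectangles share a witness and hence overlap. This sidesteps the case split on which pool is shared. Your route trades that pigeonhole step for the structural fact that two interior-disjoint axis-parallel rectangles are separated by an axis-parallel line, and then kills each orientation with the aspect-ratio bound --- a perfectly valid alternative, just with more bookkeeping.

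Two places where the bookkeeping needs tightening. First, the case split is not exhaustive: the ``same pair'' case $\{(10,0),(0,10)\}$ and the ``different pairs'' cases sharing $(10,0)$ or $(0,10)$ are \emph{not} reflections across $y=x$ of the cases you wrote out, since the diagonal pair is fixed by that reflection while the two axis pairs are swapped into each other. The missing cases are easy (for the diagonal same-pair case both rectangles have $x$- and $y$-projections containing $(\epsilon,10)$ and overlap outright; the others go exactly like your Case~2), but you should say so. Second, in Case~2 the sentence ``a vertical separator fails because $A$ reaches $x>10$ while $B$ must reach $x<\epsilon$'' does not by itself exclude a vertical separator: $B$ could lie entirely in $[0,a_1)$. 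What actually rules this out is the fatness bound again (then $B$ has width below $\epsilon$ and height above $10-\epsilon$), so the vertical sub-case needs the same aspect-ratio clause you supply for the horizontal one.
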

\begin{proof}
Let $P_{3}$ be the arrangement of 3 pools from the proof of Claim \ref{claim:neg-quart-2}:
\begin{center}
\psset{unit=1.5mm,dotsep=1pt}
\renewcommand{\squarepool}[1]{\Square[fillstyle=solid,fillcolor=blue,linecolor=blue]#1{1}} 
\newcommand{\quartplanecake}{
  \psline[linecolor=brown](0,20)(0,0)(20,0)
}
\begin{pspicture}(30,20) 
\quartplanecake 
\squarepool{(0,0)} 
\squarepool{(10,0)} 
\squarepool{(0,10)} 
\rput(0.5,0.5){
\psframe[linestyle=dashed,linecolor=red](0,0)(10,5) 
\psframe[linestyle=dashed,linecolor=red](0,0)(5,10) 
\pspolygon[linestyle=dotted,linecolor=blue](10,0)(0,10)(5,15)(15,5) 
}
\rput(3,3){x} 
\rput(3,9){x} 
\rput(9,3){x} 
\end{pspicture}
\end{center}
 The side-length of each pool is $\epsilon>0$. Every $R$-fat rectangle touching the two \South pools must have
a height of at least $(10-2\epsilon)/R$ and thus, when $\epsilon$ is sufficiently small, it must contain the point $(5/R,5/R)$ and the point $(10-10/R,\,5/R)$. Every $R$-fat rectangle touching the two \West pools must contain the point $(5/R,5/R)$ and the point
$(5/R,\,10-10/R)$. Every $R$-fat rectangle touching the top-left and the bottom-right pools must contain $(10-10/R,\,5/R)$ and $(5/R,\,10-10/R)$. Hence, in every allocation of disjoint $R$-fat rectangles, at most one rectangle touches two or more pools.
\end{proof}
Claim \ref{claim:neg-quart-n} is based on Claim \ref{claim:neg-quart-2},
so it holds as-is for $R$-fat rectangles. The same is true for the 3-walls result. The 1-wall claims \ref{claim:neg-half-3} and \ref{claim:neg-half-n} can be generalized in a similar way:
\begin{center}
\psset{unit=1.5mm,dotsep=1pt}
\begin{pspicture}(50,16)
\rput(20,0){
\halfplanecake 
\squarepool{(0,0)} 
\squarepool{(5,0)} 
\squarepool{(0,10)} 
\squarepool{(-5,0)} 
\rput(0.5,0.5){
\psframe[linestyle=dashed,linecolor=red](0,0)(5,2.5) \pspolygon[linestyle=dotted,linecolor=blue](5,0)(0,10)(5,12.5)(10,2.5) 
\psframe[linestyle=dashed,linecolor=red](0,0)(-5,2.5) \pspolygon[linestyle=dotted,linecolor=blue](-5,0)(0,10)(-5,12.5)(-10,2.5) 
\pspolygon[linestyle=dotted,linecolor=gray](0,0)(4.9,4.9)(0,9.8)(-4.9,4.9)
}
}
\end{pspicture}
\end{center}
We omit the details. We obtain:
\begin{claim}
\label{claim:neg-fatrect}For every $R\geq1$:
\end{claim}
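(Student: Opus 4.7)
The plan is to establish the impossibility bounds for $R$-fat rectangles by recycling the pool-deflation machinery developed for squares, with Claim \ref{claim:neg-quart-2-fatrect} supplying the new base case. The key observation is that Claims \ref{claim:neg-quart-n}, \ref{claim:neg-square}, \ref{claim:neg-half-3}, and \ref{claim:neg-half-n-1} are all purely inductive, with only the ``base-case obstruction point'' argument depending on the shape of the pieces. Once we verify that a suitable obstruction argument holds for $R$-fat rectangles, the rest of the proofs carry over with no modification beyond bookkeeping.

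For the quarter-plane bound $\propsame(C,n,R\text{-fat rectangles})\leq 1/(2n-1)$, I would mimic the induction in Claim \ref{claim:neg-quart-n} verbatim: deflate the $2(n-1)-1$ pool arrangement into a tiny square near the origin, then add the two new pools at $(10,0)$ and $(0,10)$. The only step that requires rechecking is that any $R$-fat rectangle touching a new pool and any other pool must contain one of the three obstruction points identified in Claim \ref{claim:neg-quart-2-fatrect} (roughly at $(5/R,5/R)$, $(10-10/R,5/R)$, $(5/R,10-10/R)$); pairwise, rectangles forced through these points must overlap, so at most one new disjoint rectangle can touch two pools. Adding walls cannot improve the impossibility, so the 3-walls bound $1/(2n-1)$ follows at once. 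For the 4-walls case the same deflation argument applies, but unlike the square case we cannot gain an extra pool at $(10,10)$ without weakening the obstruction geometry (fat rectangles are elongated enough to dodge a central choke point), so the bound stays at $1/(2n-1)$ rather than improving to $1/(2n)$.

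For the half-plane bound $\propsame(\text{Half plane},n,R\text{-fat rects})\leq 1/(1.5n-2)$, I would reprove the base case analogous to Claim \ref{claim:neg-half-3} using four pools arranged symmetrically about a wall: the pools at $(\pm 5, 0)$ and $(0, 10)$ together with the central pool at $(0,0)$. For each pair of pools, I would identify an obstruction point that any $R$-fat rectangle spanning that pair must contain; the analogue of the argument in Claim \ref{claim:neg-half-3} is that any three disjoint spanning rectangles must include two that share an obstruction point. Once the four-pool base case is established, the induction from Claim \ref{claim:neg-half-n-1} (adding three pools per step and admitting at most two new disjoint rectangles) is purely combinatorial and proceeds without change, giving the $3/2$ coefficient.

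The main obstacle will be the half-plane base case: the square-case obstruction points $(\pm 4, 4.5)$ are carefully tuned to the $1$-fat geometry, and the fat-rectangle analogues must be chosen to handle \emph{both} orientations (tall-thin and short-wide) of an $R$-fat rectangle spanning two pools. Concretely, for each pair of pools one needs to argue that if the spanning rectangle has width in $[w_{\min},Rw_{\min}]$ and height in the corresponding range, then a specific point lies in its interior regardless of orientation. Once those obstruction points are pinned down (and checked to coincide pairwise in the way required), the overlap-counting proceeds exactly as in the square case, and the full claim follows.
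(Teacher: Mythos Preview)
Your proposal is correct and follows essentially the same route as the paper: the paper's entire ``proof'' is the one-sentence assertion immediately preceding the claim, namely that Claims~\ref{claim:neg-quart-n}, \ref{claim:neg-half-3}, and \ref{claim:neg-half-n-1} are built on the two-agent base case and therefore hold as-is once Claim~\ref{claim:neg-quart-2-fatrect} replaces Claim~\ref{claim:neg-quart-2}. You are in fact more careful than the paper in flagging that the half-plane base case (Claim~\ref{claim:neg-half-3}) has its own four-pool obstruction argument that must be re-verified for $R$-fat rectangles; the paper simply asserts this holds without spelling out the obstruction points, so your concern about tuning those points for both rectangle orientations is legitimate bookkeeping that the paper elides.
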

\begin{align*}
\propsame(Square\,with\,1\,wall,\,n,\,R\,fat\,rectangles)\leq\frac{1}{(3/2)n-1}
\\
\propsame(Square\,with\,2\,walls,\,n,\,R\,fat\,rectangles)\leq\frac{1}{2n-1}
\\
\propsame(Square\,with\,3\,walls,\,n,\,R\,fat\,rectangles)\leq\frac{1}{2n-1}
\end{align*}
Claims \ref{claim:neg-square-2} and \ref{claim:neg-square} hold whenever $R<2$, since in this case, every $R$-fat rectangle touching one of the corner-pools must contain the central point of the cake in its interior, as shown below:
\psset{unit=2mm,dotsep=1pt}
\begin{center}
\newcommand{\squarecake}{
	\psframe[linecolor=brown](0,0)(11,11)
}
\begin{pspicture}(12,12)
\squarecake
\squarepool{(0,0)} 
\squarepool{(10,0)} 
\squarepool{(0,10)} 
\squarepool{(10,10)}
\rput(5.5,5.5){x} 
\psframe[linestyle=dashed,linecolor=red](0.5,0.5)(6.5,10.5)
\psframe[linestyle=dotted,linecolor=green](0.5,0.5)(10.5,6.5)
\end{pspicture}
\hskip 2cm
\begin{pspicture}(12,12)
\rput[l](0,14){\tiny{b. \prop(C,3,Squares) $\leq 1/6$}} 
\squarecake
\smallsquarepool{(0,0)}
\smallsquarepool{(1,0)}
\smallsquarepool{(0,1)}
\squarepool{(10,0)} 
\squarepool{(0,10)} 
\squarepool{(10,10)}
\rput(5.5,5.5){x} 
\psframe[linestyle=dashed,linecolor=red](0.5,0.5)(6.5,10.5)
\psframe[linestyle=dotted,linecolor=green](0.5,0.5)(10.5,6.5)
\end{pspicture}
\end{center}
This gives:
\begin{claim}
\label{claim:neg-square-fatrects-R<2}For every $R$ such that $1\leq R<2$:
\begin{align*}
\propsame(Square\,with\,4\,walls,\,n,\,R\,fat\,rectangles)\leq\frac{1}{2n}
\end{align*}
\end{claim}
When $R\geq 2$, the following slightly weaker result follows immediately from Claim \ref{claim:neg-fatrect}
(since adding walls cannot increase the proportionality):
\begin{claim}
\label{claim:neg-square-fatrects}For every $R\geq 2$: \footnote{
By classic cake-cutting protocols, $\propsame(Square,\,n,\,\infty\,fat\,rectangles)=1/n$ (an $\infty$-fat rectangle is just an arbitrary rectangle). The PropSame function is thus discontinuous at $R=\infty$. If the agents agree to use any rectangular piece, they can receive their proportional share of $1/n$, but if they insist on using $R$-fat rectangles, even when $R$ is very large, they might have to settle for about half of this share.}
\begin{align*}
\propsame(Square\,with\,4\,walls,\,n,\,R\,fat\,rectangles)\leq\frac{1}{2n-1}
\end{align*}
\end{claim}

The impossibility results for an unbounded plane are different for $R$-fat rectangles. Consider first Claim \ref{claim:neg-plane-5}, which assumes that the pieces must be axes-parallel. When the pieces have to be squares, the set $P_6$ supports at most 2 pieces above the x axis and 2 pieces below the x axis. But when the pieces may be $R$-fat rectangles and $R\geq \maxy$, it is possible to support 3 pieces above or below the x axis, e.g:
\begin{center}
\psset{unit=7mm,dotsep=1pt}
	\begin{pspicture}(-5,-2.7)(5,2.7)
	\psset{opacity=0.5}
	\psframe[fillstyle=solid,fillcolor=red!10,linecolor=red](-1,0)(0,\maxy)
	\psframe[fillstyle=solid,fillcolor=red!10,linecolor=red](0,0)(1,\maxy)
	\psframe[fillstyle=solid,fillcolor=green!10,linecolor=green](-1,0)(0,-\maxy)
	\psframe[fillstyle=solid,fillcolor=green!10,linecolor=green](0,0)(1,-\maxy)
	\Square[fillstyle=solid,fillcolor=yellow!10,linecolor=yellow](1,-2){2}
	\Square[fillstyle=solid,fillcolor=brown!10,linecolor=brown](-3,0){2}
	\psset{opacity=1}
	\rput(-0.05,-0.15){
		\psix{}
	}
	\end{pspicture}
\end{center}
The impossibility result can be maintained by locating the pool $A$ at $(\maxy R,0)$ instead of $(\maxy,0)$, and the pool $A'$ at $(-\maxy R,0)$ instead of $(-\maxy ,0)$:
\begin{center}
\psset{unit=4mm,dotsep=1pt}
	\begin{pspicture}(-5,-5.4)(5,5.4)
	\psset{opacity=0.5}
	\psframe[fillstyle=solid,fillcolor=red!10,linecolor=red](-2,0)(0,\maxyR)
	\psframe[fillstyle=solid,fillcolor=red!10,linecolor=red](0,0)(2,\maxyR)
	\psframe[fillstyle=solid,fillcolor=green!10,linecolor=green](-2,0)(0,-\maxyR)
	\psframe[fillstyle=solid,fillcolor=green!10,linecolor=green](0,0)(2,-\maxyR)
	\Square[fillstyle=solid,fillcolor=yellow!10,linecolor=yellow](1,-2){2}
	\Square[fillstyle=solid,fillcolor=brown!10,linecolor=brown](-3,0){2}
	\psset{opacity=1}
	\rput(-0.05,-0.15){
		\psixR{}
	}
	\end{pspicture}
\end{center}
So Claim \ref{claim:neg-plane-5}, and hence Claim \ref{claim:neg-plane-16}, are valid for $R$-fat rectangles, and we obtain:
\begin{claim}
	\label{claim:neg-plane-16-fat}
	Given a fixed coordinate system in the plane, for every $R\geq 1$:
\begin{align*}
\propsame(Plane,\,n,\,Axes\,Parallel\,R\,fat\,rectangles) \leq {1 \over (10/9)n - 1} \end{align*}
\end{claim}
However, the angle-range in which Claim \ref{claim:neg-plane-16-fat} holds is no longer $[-\maxrot^\circ,\maxrot^\circ]$ --- the range becomes smaller as a (complicated) function of $R$. This means that more copies may be needed to ``cover'' the entire range of $[-45^\circ,45^\circ]$. Therefore, the upper bound for parallel squares will probably be a complicated function of $R$. We leave this issue for future work.

\section{Auctions and Covers}  \label{sec:auctions} \label{sub:Queries}
Our cake-cutting procedures are composed of two types of auctions. In a \emph{mark auction}, each agent bids by marking a piece of the cake; the winner is the agent marking the smallest piece. In an \emph{eval auction}, each agent bids by declaring a value for a pre-specified piece of cake; the winners are the agents declaring the highest value. 
As usual in the cake-cutting literature, no monetary transfers are involved; the agents effectively 'pay' with their entitlements for a share of the cake. Below we explain each auction type in detail.

\subsection{Mark auction}\label{sub:mark}
In a mark auction, the divider specifies a geometric constraint and a value $v$. Each agent has to mark a piece of the cake which satisfies the geometric constraint and is worth for him exactly $v$. The geometric constraint guarantees that the marked pieces are totally ordered by containment (i.e. for every two agents $i,j$, the bid of $i$ either contains or is contained in the bid of $j$). Hence, there is a \emph{smallest bid} --- a bid contained in all other bids. There can be more than one smallest bid; in this case, one smallest bid is selected arbitrarily. The agent making the selected smallest bid is the \emph{winner}; he is allocated his bid and goes home. The remaining cake is divided among the remaining $n-1$ agents.

\begin{example} \label{exm:mark} 
\textbf{Dividing a rectangle to rectangles.}
The cake $C$ is a rectangle and $S$ is the family of rectangles. We normalize the valuations of all agents such that the value of the entire cake is $n$. We show how a sequence of mark auctions can be used to give each agent a rectangle with a value of at least 1. 

The proof is by induction on the number of agents $n$. When $n=1$, $C$ can just be given to the single agent. Suppose we already know how to divide a rectangle to $n-1$ agents who value it as $n-1$. Now we are given $n$ agents who value the cake as $n$. We do a mark auction with the following geometric constraint: \emph{mark a rectangle whose rightmost edge coincides with the rightmost edge of $C$}. The auction value is $v=1$. The continuity of the valuations guarantees that all agents can indeed bid as required, and the geometric constraint guarantees that the bids are totally ordered by containment. An example is illustrated below, where there are four bids marked by dotted lines:
\begin{center}
\psset{unit=1mm,dotsep=0.4}
\begin{pspicture}(100,20)
\psframe[linestyle=solid,linecolor=brown](0,0)(100,20)
\psframe[linestyle=dotted,linecolor=blue](66,1)(99,19)
\psframe[linestyle=dotted,linecolor=gray](69,1)(99,19)
\psframe[linestyle=dotted,linecolor=red](84,1)(99,19)
\psframe[linestyle=dotted,linecolor=green, linewidth=2pt](89,1)(99,19)
\end{pspicture}
\end{center}
the winning bid --- the smallest rectangle --- is marked by a thick dotted line. The winner is given his bid, so he now has a rectangle with a value of exactly 1, as required (recall that our guarantees are valid for every agent bidding truthfully, regardless of what the other agents do). Since the $n-1$ losing bids contain the winning bid, the $n-1$ losers value the winning bid as at most 1. By additivity, they value the remaining cake as at least $n-1$. Hence, by the induction assumption we can divide the remaining cake among them in a similar way, finally giving each agent a rectangle with a value of at least 1.\footnote{Example \ref{exm:mark} shows that $\prop (Rectangle,n,Rectangles)=1/n$. This result is not new since it follows immediately from known results on 1-dimensional cake-cutting. It is presented here to show that it fits well into the auction framework.} \qed
\end{example}

A mark auction has the following interpretation. Initially, each agent holds an entitlement for a piece of cake. An agent bidding a piece $X_i$ is interpreted as saying ``I am willing to give my entitlement in exchange for piece $X_i$''. The agent marking the smallest piece is effectively offering the highest ``price'' per unit area; hence this agent is the winner. He pays for the win by giving up his entitlement and leaving the remaining cake to the remaining agents.

\subsection{Eval auction}\label{sub:eval}
In an eval auction, the divider specifies a piece of cake $C'\subset C$. Each agent $i$ has to declare the value $V_i(C')$. The agents are ordered in a descending order of their bids, such that $V_1(C')\geq V_2(C')\geq\cdots\geq V_n(C')$. The procedure calculates the number of winners $n'$ (we explain shortly how this number is calculated). The $n'$ highest bidders, $1,\dots,n'$, are the winners. The remaining $n-n'$ agents are the losers. The procedure then divides $C'$ among the winners and $C\setminus C'$ among the losers.

To calculate the number of winners $n'$, we should already have a plan for dividing $C'$ among each possible number of winners $n'\leq n$. Specifically, we should have a procedure for dividing $C'$ among $n'$ agents, each of whom values $C'$ as at least $F(n')$ (where $F: \mathbb{Z}^+\to \mathbb{R}^+$ is some increasing function), such that each agent is guaranteed a piece with a value of at least 1.
\footnote{
As explained in the introduction, an agent bidding a value $V$ is interpreted as saying ``I am willing to share $C'$ in a group of up to $f(V)$ agents'', where $f: \mathbb{R}^+\to \mathbb{Z}^+$. The function $f$ is an inverse of $F$ in the following sense: $f(V)$ is largest integer such that $V \geq F(f(V))$.
}
Assuming that we have such a procedure, the number of winners is defined as the largest integer $n'$ such that:
\begin{align*}
V_{n'}(C') \geq F(n')
\end{align*}
or 0 if already $V_1(C')<F(1)$.\footnote{$n'$ is somewhat analogous to the \emph{h-index} used to evaluate an academic researcher --- the largest integer $h$ such that the researcher has at least $h$ publications with at least $h$ citations.} Since $V_{n'}(C')$ is a decreasing sequence, the definition implies that:
\begin{itemize}
\item For every winner $i\in\{1,\dots,n'\}$: $V_i(C')\geq F(n')$
\item For every loser  $i\in\{n'+1,\dots,n\}$: $V_i(C')< F(n'+1)$
\end{itemize}
(this is true even when $n'=0$). Hence, the set of winners is a largest set of agents for whom we can divide $C'$ in a way which guarantees each of them a value of at least 1.

\begin{example} \label{exm:eval}
\textbf{Dividing an archipelago to rectangles.} 
The cake $C$ is an \emph{archipelago} --- a union of $m$ disjoint rectangular \emph{islands}. $S$ is the family of rectangles. We normalize the valuations of all agents such that the value of the entire archipelago is $n+m-1$. We show how a sequence of eval auctions can be used to give each agent a rectangle, contained in one of the islands, with a value of at least 1. 

The proof is by induction on the number of islands $m$. When $m=1$, $C$ is a single rectangle and all agents value it as at least $n$, so the procedure of Example \ref{exm:mark} can be used to give each agent a rectangle with a value of at least 1. Suppose we already know how to divide an archipelago of $m-1$ islands. Given an archipelago of $m$ islands, pick one island arbitrarily and call it $C'$. Do an eval auction on $C'$. Order the bids in descending  order, and let $n'$ be the largest index such that:
\begin{align*}
V_{n'}(C') \geq n'
\end{align*}
or 0 if already $V_1(C')<1$. If $n'=0$ then just discard $C'$; otherwise use the procedure of Example \ref{exm:mark} to divide $C'$ among the $n'$ winners. By definition, each winner values $C'$ as at least $n'$ so he is guaranteed a rectangular piece of $C'$ with a value of at least 1.

All $n-n'$ losers value $C'$ as less than $n'+1$, so they value the remaining archipelago $C\setminus C'$ as more than $(n+m-1)-(n'+1)=(n-n')+(m-1)-1$. This is an archipelago of $m-1$ islands, so by the induction assumption we can divide it among the remaining $n-n'$ agents giving each agent a rectangle with a value of at least 1. Note that this is true even when $n'=0$.\footnote{Example \ref{exm:eval} shows that $\prop(m\,\,disjoint\,rectangles,\,n,\,Rectangles)\geq 1/(n+m-1)$. It is easy to construct an arrangement of pools, analogous to the ones in Section \ref{sec:Impossibility-Results}, proving that this is the best proportionality that can be guaranteed.}
\qed
\end{example}
An eval auction has the following interpretation. Initially, each agent has an entitlement to share the entire cake $C$ with $n$ agents (including the agent himself). An agent bidding a value $V$ is interpreted as saying ``I am willing to give my entitlement in exchange for an entitlement to share $C'$ with at most $n'$ agents, where n' is the largest integer such that $V\geq F(n')$.'' The agents with the highest bids are actually offering a higher ``price'' for $C'$, since they are willing to share $C'$ with a larger number of other agents. Hence, the highest bidders are the winners. They pay for their win by giving up their entitlement to $C\setminus C'$ and leaving it to the remaining agents.

\subsection{Cover numbers} \label{sub:cover}
The last ingredient we need for our division procedures, in addition to the two auction types, is the \emph{cover number}. It is a well-known concept in computational geometry (see \citet{Keil2000Polygon} for a survey).
\begin{figure}
\psset{unit=0.5mm,linecolor=brown}
\centering
\begin{pspicture}(60,65)
\psframe(0,0)(47,15)
\psframe[linecolor=black,linestyle=dashed](1,1)(14,14)
\psframe[linecolor=blue,linestyle=dashed](16,1)(29,14)
\psframe[linecolor=green,linestyle=dashed](31,1)(44,14)
\psframe[linecolor=red,linestyle=dashed](33,1)(46,14)
\rput(30,60){\tiny{$3.1\times 1$ rectangle:}}
\rput(30,55){\tiny{CoverNum(C,squares)=4}}
\end{pspicture}
\begin{pspicture}(60,65)
\pspolygon(0,0)(50,0)(50,30)(30,30)(30,50)(0,50)
\psframe[linecolor=black,linestyle=dashed](1,49)(29,21)
\psframe[linecolor=blue,linestyle=dashed](49,1)(21,29)
\psframe[linecolor=red,linestyle=dashed](1,1)(20,20)
\rput(30,60){\tiny{L-shape:}}
\rput(30,55){\tiny{CoverNum(C,squares)=3}}
\end{pspicture}
\begin{pspicture}(60,65)
\pspolygon(0,0)(50,0)(50,30)(30,30)(30,50)(0,50)
\psframe[linecolor=black,linestyle=dashed](1,49)(29,1)
\psframe[linecolor=blue,linestyle=dashed](49,1)(1,29)
\rput(30,60){\tiny{L-shape:}}
\rput(30,55){\tiny{CoverNum(C,rectangles)=2}}
\end{pspicture}
\begin{pspicture}(60,65)
\pspolygon(10,0)(10,20)(0,20)(0,50)(50,50)(50,20)(40,20)(40,0)
\psframe[linecolor=black,linestyle=dashed](21,49)(49,21)
\psframe[linecolor=blue,linestyle=dashed](1,21)(29,49)
\psframe[linecolor=red,linestyle=dashed](11,1)(39,29)
\rput(30,60){\tiny{T-shape:}}
\rput(30,55){\tiny{CoverNum(C,squares)=3}}
\end{pspicture}

\protect\caption{\label{fig:cover-numbers}Cover numbers of various polygons.}
\end{figure}
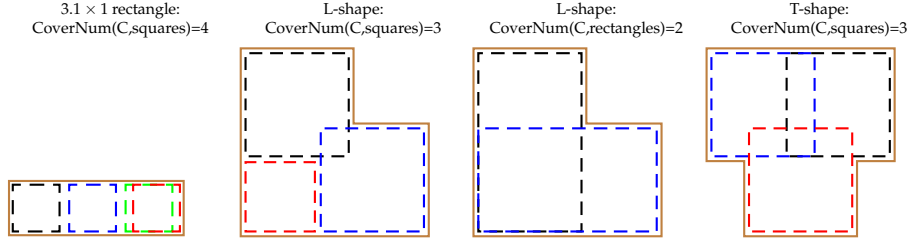

\begin{defn}
\label{def:covernum}Let $C$ be a cake and $S$ a family of pieces.\\
(a) An \emph{$S$-cover of $C$} is a set of \spieces{}, possibly overlapping, whose union equals $C$.\\
(b) The \emph{$S$-cover number}\textbf{\emph{ }}\emph{of $C$}, $\covernum(C,S)$,
is the minimum cardinality of an $S$-cover of $C$.
\end{defn}
Some examples are depicted in Figure \ref{fig:cover-numbers}.

The cover number is related to the utility that a single agent can derive from a given cake:
\begin{lemma}
\label{lemma:pos-cover}(Covering Lemma) For every cake $C$ and family
$S$: 
\[
\prop(C,\n{1},S)\geq\frac{1}{\covernum(C,S)}
\]
\end{lemma}
\begin{proof}
Let $k=\covernum(C,S)$ and let $\{C_{1},...,C_{k}\}$ be an $S$-cover of $C$. By definition $\cup_{j=1}^{k}C_{j}=C$. By additivity, if an agent's valuation function is $V$, then:
\begin{align*}
\sum_{j=1}^{k}V(C_{j})\geq V(C)
\end{align*}
so the average value of the left-hand side is at least $V(C)/k$. By the properties of the average, at least one summand must be weakly larger than the average value, i.e, there exists $j$ for which $V(C_{j})\geq V(C)/k$. This $C_{j}$, which is an $S$-piece, gives the single agent a utility of at least $1/k$ of the total cake value.
\end{proof}

The next example combines an eval auction, a mark auction and the Covering Lemma.
\begin{example}\label{exm:square2}
\textbf{Dividing a square between two agents who want square pieces.}
The cake $C$ is a square, $S$ is the family of squares and there are $n=2$ agents. Example \ref{exm:intro} shows that the maximum utility that can be guaranteed to both agents is $1/4$ of the total value. We now present a division procedure that guarantees this utility. We normalize the valuations of both agents such that their value of $C$ is 4 and give each agent a square with a value of at least 1. 

\psset{unit=2cm}
\newcommand{\landcakesquare}{
 \psframe[linecolor=brown](0,0)(1,1)
}

Partition the cake to a $2\times2$ grid. Denote one of the four quarters as $C'$, e.g.:
\begin{center}
\begin{pspicture}(1,1)
\landcakesquare
\psline[linecolor=blue,linestyle=dashed](0.5,0)(0.5,1) 
\psline[linecolor=blue,linestyle=dashed](0,0.5)(1,0.5) 
\rput(0.25,0.25){$C'$}
\end{pspicture}
\end{center}
Do an \textbf{eval auction} on $C'$. Let $n'$ be the number of agents whose bid is at least 1.

\emph{Case \#1:} $n'=0$ (both agents value $C'$ as less than 1). Denote another quarter as $C'$ and do an eval auction again. Because the total cake value is 4, this can happen at most three times; eventually one of the other cases must happen.

\emph{Case \#2:} $n'=1$. The single agent who values $C'$ as at least 1 wins $C'$ and goes home. The losing agent values $C'$ as less then 1 so he values $C\setminus C'$ as more than 3. $C\setminus C'$ is a union of 3 squares, so by the Covering Lemma the losing agent can get from it a square with a value of at least 1.

\emph{Case \#3:} $n'=2$. Do a \textbf{mark auction} with the following constraint: \emph{mark a square with a value of 1 contained in $C'$ and adjacent to a corner of $C$}. Both agents can bid as required, since they value $C'$ as at least 1 so they have a square with a value of exactly 1 inside $C'$. An example is illustrated below, where the two bids are marked by dotted lines:
\begin{center}
\begin{pspicture}(1,1)
\landcakesquare
\Square[linecolor=red,linestyle=dotted,linewidth=2pt](.02,.02){.3}
\Square[linecolor=green,linestyle=dotted](.02,.02){.4}
\end{pspicture}
\end{center}
The winning bid (the smallest square) is marked with thicker dots. It is given to the winner, who walks home with a square worth 1. The remaining cake is an L-shape similar to the one in Figure \ref{fig:cover-numbers}. Its cover number is 3 and its value for the loser is at least 3. By the Covering Lemma, it contains a square whose value to the loser is at least 1.\footnote{Combining the lower bound proved by Example \ref{exm:square2} with the upper bound proved by Claim \ref{claim:neg-square-2} gives a tight result for two agents: $\prop(Square,\,\n{2},\,Squares)=1/4$.}  The final allocation may look like:
\begin{center}
\begin{pspicture}(1,1)
\landcakesquare
\Square[linecolor=red,linestyle=dotted,linewidth=2pt](.02,.02){.3}
\Square[linecolor=green,linestyle=dotted](.02,.34){.64}
\end{pspicture}
\end{center}
The fairness of this allocation is evident: both agents agree that the south-west is the most valuable district, so the agent who has to go to a less valuable district is compensated by a larger plot.

Note that some land remains unallocated. This is unavoidable if the pieces have to be square. Moreover, in realistic land-division scenarios it is common to leave some land unallocated and available for public use.
\qed 
\end{example}

\section{Division procedures}  \label{sec:procedures}
\psset{unit=2.6cm}
\newcommand{\cutatx}[2] {
	\psline[linestyle=solid,linecolor=blue](#1,-0.07)(#1,1)
	\rput(#1,-0.2){#2}
}
\newcommand{\tickx}[2] {
	\psline[linestyle=solid,linecolor=black](#1,0)(#1,-0.1)
	\rput(#1,-0.2){#2}
}
\newcommand{\landcakefourwalls}{
	\psframe[linecolor=brown,linewidth=2pt](0,0)(1.6,1)
	\tickx{0}{0}
	\tickx{1.6}{$L$}
}
\newcommand{\landcakethreewalls}{
	\psframe[linestyle=none](0,0)(0.8,1)
	\psline[linecolor=brown,linewidth=2pt](0.8,0)(0,0)(0,1)(0.8,1)
	\psline[linestyle=dotted,linecolor=brown](0.8,1)(0.8,0)
	\tickx{0}{$0$}
	\tickx{0.8}{$L$}
}

In this section we use the building-blocks developed in Section \ref{sec:auctions} to create various division procedures.
\subsection{Four and three walls, guillotine cuts} \label{sub:Guillotine-algorithms}
We develop simultaneously a pair of division procedures. Both procedures accept a cake $C$ which is assumed to be the rectangle $[0,L]\times[0,1]$, and return $n$ disjoint square pieces $\{X_i\}_{i=1}^{n}$ such that for every agent $i$: $V_i(X_i)\geq 1$.

The two procedures differ in their requirement on $L$ (the length/width ratio of the cake) and in the number of ``walls'' (bounded sides) they assume on the cake: 
\begin{itemize}
\item The \emph{3-walls procedure} requires that $L\in[0,1]$ and it guarantees that the allocated squares are contained in $[0,\infty]\times[0,1]$ (in other words, there is no wall in the rightmost edge of the cake).
\item The \emph{4-walls procedure} requires that $L\in[1,2]$ (i.e, the cake is a 2-fat rectangle) and it guarantees that all allocated squares are contained in $C$.
\end{itemize}
Additionally, the two procedures differ in their requirement on the total cake value:
\begin{itemize}
\item The 3-walls procedure requires that for every agent $i$: $V_i(C)\geq\max(1,4n-5)$.
\item The 4-walls procedure requires that for every agent $i$: $V_i(C)\geq\max(2,4n-4)$.
\end{itemize}
The procedures are developed by induction on the number of agents. We first consider the base case in which there is a single agent ($n=1$).

In the 3-walls procedure, the single agent values $C$ as at least 1. The square $[0,1]\times[0,1]$ contains all the value of $C$ and it is contained within its three walls, so it can be given to the single agent:
\begin{center}
	\begin{pspicture}(-1,-.3)(1.1,1)
	\Square[linestyle=dotted,linewidth=2pt,linecolor=green](0.05,0.05){0.9}
	\landcakethreewalls
	\end{pspicture}
\end{center}

In the 4-walls procedure, the single agent values $C$ as at least 2. The requirement on $L$ guarantees that the cake can be covered by at most 2 squares:
\begin{center}
	\begin{pspicture}(-1,-.5)(2,1)
	\landcakefourwalls
	\Square[linestyle=dotted,linewidth=2pt,linecolor=green](0.05,0.05){0.9}
	\Square[linestyle=dotted,linecolor=red](0.65,0.05){0.9}
	\end{pspicture}
\end{center}
Hence, by the Covering Lemma, the single agent can be given a square with a value of at least 1.

We now assume that we can handle any number of agents less than $n$. Given $n$ agents ($n\geq 2$), we proceed as follows.

\subsubsection{3 Walls procedure}
At this point, there are $n\geq 2$ agents who value the cake as at least $4n-5$. 

(1) \textbf{Mark auction}. Ask each agent to mark a rectangle with a value of exactly 1 adjacent to the rightmost edge of the cake (the edge without the wall):
\begin{center}
	\begin{pspicture}(-1,-.3)(1.1,1)
	\psframe[linestyle=dotted,linewidth=2pt,linecolor=green](0.55,0.05)(.75,0.95)
	\psframe[linestyle=dotted,linecolor=blue](0.45,0.05)(.75,0.95)
	\psframe[linestyle=dotted,linecolor=red](0.35,0.05)(.75,0.95)
	\psframe[linestyle=dotted,linecolor=orange](0.25,0.05)(.75,0.95)
	\tickx{0.55}{$x^*$}
	\landcakethreewalls
	\end{pspicture}
\end{center}
The winning bid (marked by thicker dots above) is a rectangle $[x^*,L]\times[0,1]$. There are two cases:
\begin{itemize}
\item \emph{Easy case}: $x^*\geq1/2$. Make a vertical guillotine cut at $x^*$. Give to the winner the square $[x^*,x^*+1]\times[0,1]$. This square contains the winning bid, so its value for the winner is at least 1. The remaining cake is a 2-fat rectangle and its value for the remaining $n-1$ agents is at least $V(C)-1\geq4n-6\geq\max(2,4(n-1)-4)$. Use the \emph{4 walls procedure} to divide the remainder among the losers.
	\item \emph{Hard case}: $x^{*}<1/2$. Now we cannot let the winner have the winning bid, since the remainder will be too thin for the remaining agents. Our solution relies on the following observation: the fact that $x^*<1/2$ means that all agents value the rectangle $[1/2,L]\times[0,1]$ as less than 1. Therefore, they value the rectangle $[0,1/2]\times[0,1]$ as at least $4n-6$. Since all agents believe that this ``far left'' rectangle is so valuable, we are going to do an \emph{eval auction} inside it. 
\end{itemize}

(2) \textbf{Eval auction}. Let $C'=[0,1/2]\times[1/2,1]$ and $C''=[0,1/2]\times[0,1/2]$:
\begin{center}
	\begin{pspicture}(-1,-.3)(2,1)
	\landcakethreewalls
	\tickx{0.5}{$1/2$}
	\psline[linestyle=dashed,linecolor=blue](0.5,0)(0.5,1)
	\psline[linestyle=dashed,linecolor=blue](0,0.5)(0.5,0.5)
	\rput(0.25,0.75){$C'$}
	\rput(0.25,0.25){$C''$}
	\end{pspicture}
\end{center}
Do an eval auction on $C'$. Order the agents in a descending  order of their bid, $V_1(C')\geq \cdots \geq V_n(C')$, and let $n'$ be the largest integer with:
\begin{align*}
	V_{n'}(C')\geq \max(4 n'-5,1)
\end{align*}
If $n'=n$ then all agents value $C'$ as the entire cake, so the other parts of the cake can be discarded and the division procedure can start again with $C'$ as the cake. Hence, we assume that $n'<n$. There are several cases to consider:
\begin{itemize}
\item \emph{Easy case}: $1\leq n'\leq n-2$. 
	Make a horizontal guillotine cut between $C'$ and $C''$.
	Use the 3-walls procedure to divide $C'$ among the $n'$ winners.
	
	The losers value $C'$ as less than $\max(4(n'+1)-5,1)=4 n'-1$. At this point all agents value the rectangle $C'\cup C''$ as at least $4n-6$; hence, all losers value $C''$ as at least $(4n-6)-(4 n'-1) = 4(n-n')-5$. Since $n-n'\geq 2$, this value is also larger than 1, so we can use the 3-walls procedure to divide $C''$ among the $n-n'$ losers.
	
	Note that no square is allocated to the right of the line $x=1/2$, so we can assume that the rightmost border of both $C'$ and $C''$ is open and use the 3-walls procedure to divide them.
	
\item \emph{Hard case}: $n'=0$. This means that all agents value $C'$ as less than 1, so they value $C''$ as at least $4n-7$. 
	Now we have a problem: we cannot give $C'$ even to a single agent since it is not sufficiently valuable, but we also cannot divide $C''$ among all $n$ agents since it too is not sufficiently valuable. 
	
	Our solution is to shrink $C''$ towards the corner, until one of the agents decides that it is better to take a piece outside $C''$ and leave $C''$ to the remaining $n-1$ agents. This solution is implemented using a \emph{mark auction}, which is described in detail in step (3) below. But before proceeding there is one more case that must be handled:
	
\item \emph{Mixed case}: $n'=n-1$. This is handled according to the bid of the single losing agent (agent $n$): if $V_n(C')<4n-7$, then the losing agent values $C''$ as at least 1, so we can proceed as in the Easy case (the winning agents receive $C'$ and the losing agent receives $C''$). Otherwise, $V_n(C')\geq 4n-7$, so all agents value $C'$ as at least $4n-7$ (because the agents are ordered in descending  order of their bid). Switch the roles of $C'$ and $C''$ (e.g. by reflecting the cake about the line $y=1/2$), and proceed as in the hard case to the next auction.
\end{itemize}

(3) \textbf{Mark auction.} Ask each agent to mark an L-shape with a value of exactly 2, the complement of which is a square inside $C''$ with a value of $4n-7$ cornered at the corner of $C$, like this:
\providecommand{\lshape}{}
\renewcommand{\lshape}[1]{
	\pspolygon[linecolor=green](0.02,0.98)(0.02,#1)(#1,#1)(#1,0.02)(0.78,0.02)(0.78,0.98)(0.02,0.98)
}
\begin{center}
	\begin{pspicture}(-1,-.3)(2,1)
	\landcakethreewalls
	\psset{linestyle=dotted}
	\lshape{0.32}
	\lshape{0.36}
	\lshape{0.40}
	\psset{linewidth=2pt} \lshape{0.46}
	\rput(0.4,0.7){X}
	\end{pspicture}
\end{center}
Let $X$ be the winning bid. $X$ can be covered by two  overlapping pieces: a square near the top-left corner of $C$ (denoted by $Y$ below) and a square near the right edge of $C$ (denoted by $Z$ below):
\begin{center}
	\begin{pspicture}(-1,-.3)(2,1)
	\landcakethreewalls

 \Square[linestyle=dashed,linecolor=green](0.04,0.42){0.54}
 \rput(0.25,0.7){Y}
 \psframe[linestyle=dashed,linecolor=red](0.42,0.04)(1.34,0.96)
 \rput(1,0.5){Z}
 \rput(.21,.21){\scriptsize C \textbackslash{} X} 
	\end{pspicture}
\end{center}
At least one of these squares must have a value of at least 1 to the winner. If Y has value 1 then give Y to the winner and leave Z unallocated; otherwise, give Z to the winner, leave Y unallocated and rotate $C$ clockwise $90^{\circ}$. In both cases, $C\setminus X$ can be separated from the piece given to the winner using a horizontal guillotine cut. Moreover, in both cases the cake to the right of $C\setminus X$ is unallocated. The remaining $n-1$ agents
value $C\setminus X$ as at least $(4n-5)-2$, which is more than $\max(1,4(n-1)-5)$. Use the 3 walls procedure to divide $C\setminus X$ among them.  \qed

\subsubsection{4 Walls procedure}
\psset{unit=2cm}

At this point, there are $n\geq 2$ agents who value the cake as at least $4n-4$. 

The 4-walls procedure is similar to the 3-walls procedure except that it has one additional \emph{eval} auction at the beginning. If this auction succeeds, then it effectively cuts the cake to two halves each of which is a 2-fat rectangle, so each half can be divided recursively using the 4-walls procedure. If this auction fails (as will be explained below), then the situation is similar to the 3-walls procedure and we can use a similar sequence of three auctions.

(0) \textbf{Eval auction.} Let $C'=[{L/2},1]\times[0,1]=$ the rightmost half of $C$. Note that both $C'$ and its complement are 2-fat rectangles:
\begin{center}
\begin{pspicture}(-1,-.5)(2,1)
  \landcakefourwalls
  \cutatx{.8}{$L/2$}
  \rput(0.4,0.5){$C\setminus C'$}
  \rput(1.2,0.5){$C'$}
\end{pspicture}
\end{center}
Do an eval auction on $C'$. Order the agents in a descending  order of their bid, $V_1(C')\geq \cdots \geq V_n(C')$, and let $n'$ be the largest integer with:
\begin{align*}
V_{n'}(C')\geq \max(4 n'-4,2)
\end{align*}
If $n'=n$ then for all agents $V_i(C')=V_i(C)$, so $C\setminus C'$ can be ignored and the procedure can be restarted with $C'$ as the entire cake. Hence, we assume $n'<n$. There are several cases to consider:
\begin{itemize}
\item \emph{Easy case}: $1\leq n'\leq  n-2$. Make a vertical guillotine cut between $C'$ and $C\setminus C'$. Use the 4-walls procedure to divide $C'$ among the $n'$ winners. This is possible since $C'$ is a 2-fat rectangle and all winners value it as at least $\max(4 n'-4,2)$. 

The losers value $C'$ as less than $\max(4(n'+1)-4,2)=4 n'$, so they value the remaining half $C\setminus C'$ as more than $(4n-4)-4 n' = 4(n-n')-4$. Since $n-n'\geq 2$, this value is also larger than 2. Use the 4-walls procedure to divide $C\setminus C'$ among the $n-n'$ losers; this is possible  since $C\setminus C'$ is a 2-fat rectangle and all losers value it as at least $\max(4(n-n')-4,2)$. 

\item \emph{Hard case}: $n'=0$. This means that all agents value $C'$ as less than 2 so they value the remainder $C\setminus C'$ as at least $4n-6$. We are going to enlarge $C'$ leftwards, until it becomes sufficiently valuable such that some agent is willing to accept it. We implement this solution using a \emph{mark auction}, described in step (1) below. But beforehand, one more case must be handled:

\item \emph{Mixed case}: $n'=n-1$. This case is handled according to the bid of the losing agent: if $V_n(C')<4n-6$, then the losing agent values $C\setminus C'$ as at least 2, so we can proceed as in the Easy case (the winning agents receive $C'$ and the losing agent receives $C\setminus C'$). Otherwise, $V_n(C')\geq 4n-6$, so all agents value $C'$ as at least $4n-6$. Switch the roles of $C'$ and $C\setminus C'$ (e.g. by reflecting the cake $C$ about the line $x=L/2$), and proceed as in the hard case to the next auction.
\end{itemize}

(1) \textbf{Mark auction.} Ask each agent to mark a rectangle with a value of exactly 2 adjacent to the rightmost edge of $C$:
\begin{center}
\begin{pspicture}(-1,-.3)(2,1)
     \psframe[linestyle=dotted,linewidth=2pt,linecolor=green](0.7,0.02)(1.58,0.98)
     \psframe[linestyle=dotted,linecolor=blue](0.6,0.02)(1.58,0.98)
     \psframe[linestyle=dotted,linecolor=red](0.52,0.02)(1.58,0.98)
     \psframe[linestyle=dotted,linecolor=yellow](0.4,0.02)(1.58,0.98)
     \psframe[linestyle=dotted,linecolor=black!30](0.34,0.02)(1.58,0.98)
     \tickx{0.7}{$x^*$}
     \landcakefourwalls
\end{pspicture}
\end{center}
The smallest rectangle wins. Let $x^*$ be the x coordinate of its leftmost edge, so the winning bid is $[x^*,L]\times[0,1]$. Since all agents value $C'$ as less than 2, all bids must contain $C'$, so $x^*\leq L/2$. There are two cases:
\begin{itemize}
\item \emph{Easy case}: $x^*\geq {1/2}$. Make a vertical guillotine cut at  $x^*$. Both the winning bid and its complement are 2-fat rectangles. By the Covering Lemma, the winner can be allocated from its bid a square with a value of at least 1. The $n-1$ losers value the remaining cake, $[0,x^*]\times[0,1]$, as at least $4n-6$, which is at least $\max(2,4(n-1)-4)$. Hence, the 4-walls procedure can be used to divide the remainder among the losers.
\item \emph{Hard case}: $x^*<1/2$. Now we cannot let the winner have the winning bid, since the remainder will be too thin for the remaining agents. But we know that all agents value the rectangle $[1/2,L]\times[0,1]$ as less than 2 so they value the rectangle $[0,1/2]\times[0,1]$ as at least $4n-6$. Since all agents believe that this rectangle is so valuable, we are going to do an \emph{eval auction} inside it. 
\end{itemize}
(2) \textbf{Eval auction}. Let $C'=[0,1/2]\times[1/2,1]$ and $C''=[0,1/2]\times[0,1/2]$:
\begin{center}
\begin{pspicture}(-1,-.5)(2,1)
     \landcakefourwalls
     \psline[linestyle=dashed](0.5,-0.07)(0.5,1)
     \rput(0.5,-0.13){1/2}
     \psline[linestyle=dashed](0,0.5)(0.5,0.5)
     \rput(0.25,0.75){$C'$}
     \rput(0.25,0.25){$C''$}
\end{pspicture}
\end{center}
Do an eval auction on $C'$ and let $n'$ be the largest integer with:
\begin{align*}
V_{n'}(C')\geq \max(4 n'-5,1)
\end{align*}
As in step (0), the case $n'=n$ is trivial and can be ignored. The non-trivial cases are:
\begin{itemize}
\item \emph{Easy case}: $1\leq n'\leq n-2$. 
Make a horizontal guillotine cut between $C'$ and $C''$.
Use the \emph{3-walls procedure} to divide $C'$ among the $n'$ winners. The 3-walls procedure might allocate pieces that flow over the right boundary of $C'$ (the line $x=1/2$). This does not cause any problem because the side-length of these rectangles is at most $1/2$, so they are still contained in the original cake $C$.

The losers value $C'$ as less than $\max(4(n'+1)-5,1)=4 n'-1$. At this point of the procedure, all agents value the rectangle $C'\cup C''$ as at least $4n-6$; hence, all losers value $C''$ as at least $(4n-6)-(4 n'-1) = 4(n-n')-5$. Since $n-n'\geq 2$, this value is also larger than 1, so we can use the 3-walls procedure to divide $C''$ among the $n-n'$ losers.

\item \emph{Hard case}: $n'=0$. This means that all agents value $C'$ as less than 1 and value $C''$ as at least $4n-7$. We are going to ``shrink'' $C''$ using a mark-auction in step (3). But beforehand we handle the remaining case:

\item \emph{Mixed case}: $n'=n-1$. Proceed according to the bid of the losing agent: if $V_n(C')<4n-7$, then the losing agent values $C''$ as at least 1, so we can proceed as in the Easy case (the winning agents receive $C'$ and the losing agent receives $C''$). Otherwise, $V_n(C')\geq 4n-7$, so all agents value $C'$ as at least $4n-7$. Switch the roles of $C'$ and $C''$, and proceed as in the hard case to the next auction.
\end{itemize}

(3) \textbf{Mark auction.} Ask each agent to mark an L-shape with a value of 3, whose complement is a square inside $C''$ cornered at the corner of $C$, like this:

\providecommand{\lshape}{}
\renewcommand{\lshape}[1]{
\pspolygon[linecolor=green](0.02,0.98)(0.02,#1)(#1,#1)(#1,0.02)(1.58,0.02)(1.58,0.98)(0.02,0.98)
}
\begin{center}
\begin{pspicture}(-1,-.3)(2,1)
 \landcakefourwalls
 \psset{linestyle=dotted}
 \lshape{0.32}
 \lshape{0.36}
 \lshape{0.40}
 \psset{linewidth=2pt} \lshape{0.46}
 \rput(1,0.5){X}
\end{pspicture}
\end{center}
Since all agents value $C''$ as at least $4n-7 = (4n-4)-3$ they can indeed bid as required. Let $X$ be the winning bid. $X$ is an L-shape that can be covered by two overlapping pieces: a square near the top-left corner of $C$ (denoted by $Y$ below) and a rectangle near the right edge of $C$ (denoted by $Z$ below):
\begin{center}
\begin{pspicture}(-1,-.3)(2,1)
 \landcakefourwalls
 \Square[linestyle=dashed,linecolor=green](0.04,0.42){0.54}
 \rput(0.25,0.7){Y}
 \psframe[linestyle=dashed,linecolor=red](0.42,0.04)(1.56,0.96)
 \rput(1,0.5){Z}

 \rput(.21,.21){\scriptsize C \textbackslash{} X} 
\end{pspicture}
\end{center}
Since the winner values $X$ as 3, at least one of the following must hold:
\begin{itemize}
\item The winner values $Y$ as at least 1; if this is the case then the winner receives $Y$, and $Z$ remains unallocated.
\item The winner values $Z$ as at least 2; if this is the
case then the winner selects a square from $Z$ with a value of at least 1 (this is possible by the Covering Lemma since $Z$ is a 2-fat rectangle), and $Y$ remains unallocated. If this is the case, then rotate $C$ clockwise $90^{\circ}$. 
\end{itemize}
In both cases, $C\setminus X$ can be separated from the piece given to the winner using a horizontal guillotine cut.
In both cases, the cake to the right of $C\setminus X$ is unallocated. The $n-1$ losers value $X$ as at most 3 so they value $C\setminus X$ as at least $(4n-4)-3$,
which is at least $\max(1,4(n-1)-5)$. Therefore, the 3 walls procedure can be used to divide $C\setminus X$ among them.
\qed

The above pair of procedures prove the following pair of positive results $\forall n\geq2$: 
\begin{align*}
\prop(2\,fat\,rectangle\,with\,all\,sides\,bounded,\,n,\,Squares)&\geq\frac{1}{4n-4}
\\
\prop(Rectangle\,with\,a\,long\,side\,unbounded,\,n,\,Squares)&\geq\frac{1}{4n-5}
\end{align*}
Since a square is a 2-fat rectangle:
\begin{align*}
\prop(Square\,with\,4\,walls,\,n,\,Squares)&\geq\frac{1}{4n-4}
\\
\prop(Square\,with\,3\,walls,\,n,\,Squares)&\geq\frac{1}{4n-5}
\end{align*}

\subsubsection{Fat rectangle pieces}
When the pieces are allowed to be $R$-fat rectangles, the above lower bounds are of course still true, since a square is an $R$-fat rectangle. But when $R\geq 2$, the 4-walls division procedure can give slightly stronger guarantees --- the required value is $\max(1,4n-5)$ instead of $\max(2,4n-4)$ (this is analogous to the fact that in Subsection \ref{sub:negative-fat}, when the pieces are allowed to be $R$-fat rectangles with $R\geq 2$, our upper bound for a cake with 4 walls is slightly weaker --- the denominator is $2n-1$ instead of $2n$). The required modifications are briefly explained below:
\begin{itemize}
\item In the base case ($n=1$), since the cake is 2-fat, the single agent can have it all, so it is sufficient that its value be 1.
\item In step (0), after the Eval auction, $n'$ is  the largest integer with $V_{n'}(C')\geq \max\bm{(4 n'-5,1)}$. In the easy case, the $n'$ winners value their share $C'$ as at least $\max(4 n'-5,1)$ and the $n-n'$ losers value their share $C\setminus C'$ as at least $\max(4(n-n')-5,1)$, so each part can be divided recursively using the 4-walls procedure. In the hard case, all agents value $C'$ as less than 1 so they value the remainder $C\setminus C'$ as at least $4n-6$; proceed to the next step.
\item In step (1), the Mark auction asks each agent to mark a rectangle with a value of exactly \textbf{1} adjacent to the rightmost edge of $C$. In the easy case, both the winning bid and its complement are 2-fat rectangles. The winning bid can be given entirely to the winner; the $n-1$ losers value the remaining cake as at least $4n-6$, which is at least $\max(1,4(n-1)-5)$, so the 4-walls procedure can be used to divide the remainder among them.
In the hard case, all agents value the rectangle $[1/2,L]\times[0,1]$ as less than 1 so they value the rectangle $[0,1/2]\times[0,1]$ as at least $4n-6$; proceed to the next step.
\item In step (2), the Eval auction proceeds exactly as in the case of square pieces. The values are sufficient for using the 3-walls procedure.
\item In step (3), the Mark auction asks each agent to mark an L-shape with a value of exactly \textbf{2}. Let $X$ be the winning bid. Since the winner values $X$ as 2, he values either its topmost part or its rightmost part as at least 1; both these parts are 2-fat rectangles so the winner can pick one of them and get a fair share. In both cases, $C\setminus X$ (which is a square) can be separated from the piece given to the winner using a horizontal guillotine cut.
In both cases, the $n-1$ losers value $X$ as at most 2 so they value $C\setminus X$ as at least $(4n-5)-2$,
which is at least $\max(1,4(n-1)-5)$. Therefore, the 3 walls procedure can be used to divide $C\setminus X$ among them.
\item The 3-walls procedure remains unchanged. 
\end{itemize}
So for every $n\geq 2$ and $R\geq 2$:
\begin{align*}
\prop(2\,fat\,rectangle\,with\,all\,sides\,bounded,\,n,\,R\,fat\,rectangles)&\geq\frac{1}{4n-5}
\end{align*}

\subsection{Two walls} \label{sub:alg-2-walls}
We present a division procedure for dividing the top-right quarter-plane, i.e, the cake is a square with two walls and two unbounded sides. We would like to do a \emph{mark auction} in which each agent is asked to mark a square adjacent to the bottom-left corner. Then, the smallest square should be allocated to its bidder and the remaining cake should be divided among the remaining agents. However, when we try to do this we run into trouble, as the remaining cake is no longer a quarter-plane.

As it often happens, the solution is to generalize the problem. Instead of dividing a quarter-plane, we divide a \emph{rectilinear polygonal domain unbounded in two directions}, which for brevity we call ``staircase'' because of its shape (see Figure \ref{fig:staircase}). 

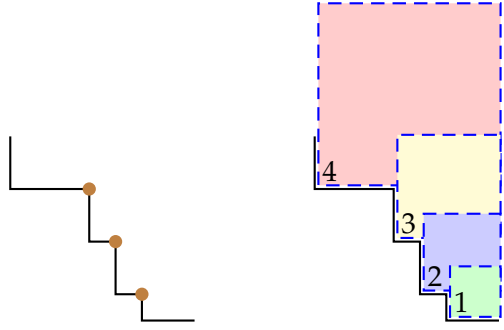
\begin{figure}
\psset{unit=0.35mm}
\def\landcakestaircase{\psline[linecolor=black](70,0)(50,0)(50,10)(40,10)(40,30)(30,30)(30,50)(0,50)(0,70)}
\centering
\begin{pspicture}(70,70)
\landcakestaircase
\psdot[dotsize=5pt,linecolor=brown](50,10)
\psdot[dotsize=5pt,linecolor=brown](40,30)
\psdot[dotsize=5pt,linecolor=brown](30,50)
\end{pspicture}
\hskip 15mm
\begin{pspicture}(100,120)
\landcakestaircase
\Square[linestyle=dashed,fillcolor=red!20,fillstyle=solid](1,51){70}
\Square[linestyle=dashed,fillcolor=yellow!20,fillstyle=solid](31,31){40}
\Square[linestyle=dashed,fillcolor=blue!20,fillstyle=solid](41,11){30}
\Square[linestyle=dashed,fillcolor=green!20,fillstyle=solid](51,1){20}
\rput[bl](53,3){1}
\rput[bl](43,13){2}
\rput[bl](33,33){3}
\rput[bl](3,53){4}
\end{pspicture}

\protect\caption{\label{fig:staircase} A staircase with $T=3$ teeth marked by discs (Left). It has $T+1=4$ corners and can be covered by 4 squares (Right).}
\end{figure}
Each staircase has vertexes with inner angle $90^\circ$ and vertexes with inner angle $270^\circ$; we call the former \emph{corners} and the latter \emph{teeth}.\footnote{Other common names are \emph{convex vertexes} vs. \emph{concave/reflex vertexes}, or \emph{inner corners} vs. \emph{outer corners}.} A staircase with $T$ teeth has $T+1$ corners. A quarter-plane is a staircase with $T=0$ teeth.

By putting the arrangement of Claim \ref{claim:neg-quart-n} in one of the corners and adding a pool in each of the other $T$ corners, the following upper bound is obtained:
\begin{align*}
\prop(T\,staircase,n,Squares)\leq \frac{1}{2n-1+T}
\end{align*}
We normalize the valuations of all agents such that the value of the entire cake is $2n-1+T$. We use a sequence of mark auctions to give each agent a square with a value of at least 1.

We proceed by induction on the number of agents $n$. When $n=1$, the cake value for the single agent is at least $T+1$. The cake can be covered by $T+1$ sufficiently large squares --- one square per corner (see Figure \ref{fig:staircase}/Right). By the Covering Lemma, the agent can get a square with a value of at least 1.

Suppose we already know how to divide a $T$-staircase to $n-1$ agents, for every integer $T\geq 0$. Now there are $n$ agents. Start by doing \textbf{$T+1$ mark auctions:} for each corner $j\range{1}{T+1}$, ask each agent to mark a square with a value of exactly 1 adjacent to corner $j$. If the total value of the agent in corner $j$ is less than 1, then the agent is allowed to not participate in that auction, or equivalently mark a square with an infinite side-length. By the Covering Lemma, each agent can mark at least one finite square.

In each corner, the ``corner-winning-bid'' is the smallest square (contained in all other bids in that corner). We now have $T+1$ corner-winners, and we have to select a single global-winner. There are two cases.

\emph{Easy case:} there is a $j\range{1}{T+1}$ such that the corner-$j$ winning-bid is smaller than the two edges of $C$ adjacent to corner $j$. An example is the square in corner 4 in Figure \ref{fig:staircase-alg-easy}. Select one such square arbitrarily as the global ``winning bid''. Give the winning bid to its bidder. The remaining cake is a staircase with $T+1$ teeth (see Figure \ref{fig:staircase-alg-easy}). The $n-1$ losing agents value the allocated square as at most 1, so they value the remaining staircase as at least $(2n-1+T)-1 = 2(n-1)-1+(T+1)$. Hence, by induction we can divide the remainder among the losers.

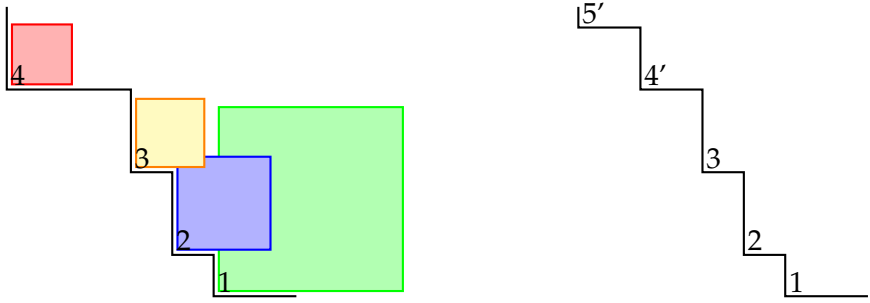
\begin{figure}
\psset{unit=0.55mm,hatchangle=0,hatchsep=2pt}
\def\landcakestaircase{
\psline[linecolor=black](70,0)(50,0)(50,10)(40,10)(40,30)(30,30)(30,50)(0,50)(0,70)
}
\centering
\begin{pspicture}(100,100)
\landcakestaircase
\Square[linecolor=green,fillcolor=green!30,fillstyle=solid](51,1){45}
\Square[linecolor=blue,fillcolor=blue!30,fillstyle=solid](41,11){23}
\Square[linecolor=orange,fillcolor=yellow!30,fillstyle=solid](31,31){17}
\Square[linecolor=red,fillcolor=red!30,fillstyle=solid](1,51){15}
\rput[bl](51,1){1}
\rput[bl](41,11){2}
\rput[bl](31,31){3}
\rput[bl](1,51){4}
\end{pspicture}
\hskip 2cm
\begin{pspicture}(80,100)
\psline[linecolor=black](70,0)(50,0)(50,10)(40,10)(40,30)(30,30)(30,50)(15,50)(15,65)(0,65)(0,70)
\rput[bl](51,1){1}
\rput[bl](41,11){2}
\rput[bl](31,31){3}
\rput[bl](16,51){4'}
\rput[bl](1,66){5'}
\end{pspicture}
\protect\caption{\label{fig:staircase-alg-easy}
The square at corner 4 is entirely contained in the corner (left). After it is allocated, the remaining cake is a staircase with 4 teeth and 5 corners (right).
}
\end{figure}

\emph{Hard case:} all corner-winning-bids are larger than the edges adjacent to their corners, as in Figure \ref{fig:staircase-alg}. Now, when a square is allocated, the remainder is no longer a staircase. In order to restore the staircase shape, we have to remove an additional part of $C$. We do this by cutting, from the top-right corner of the allocated square, a straight line downwards to the bottom boundary of $C$, and a straight line leftwards to the leftmost boundary of $C$. 
The parts of $C$ that are removed besides the allocated square are called the \emph{shadows} of the square. An example is illustrated in Figure \ref{fig:staircase-alg}, where the square at corner 2 has two shadows denoted by dotted lines.
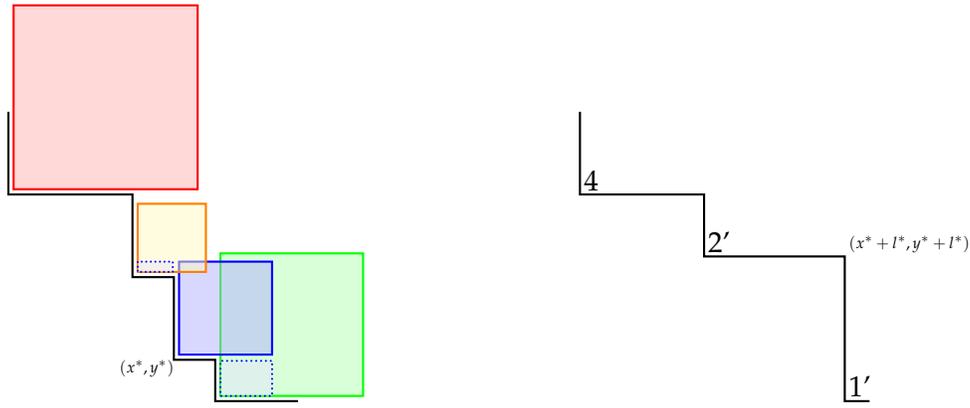
\begin{figure}
\psset{unit=0.55mm,hatchangle=0,hatchsep=2pt}
\def\landcakestaircase{
\psline[linecolor=black](70,0)(50,0)(50,10)(40,10)(40,30)(30,30)(30,50)(0,50)(0,70)
\Square[linecolor=green,fillcolor=green!30,fillstyle=solid,opacity=0.5](51,1){35}
\Square[linecolor=blue,fillcolor=blue!30,fillstyle=solid,opacity=0.5](41,11){23}
\Square[linecolor=orange,fillcolor=yellow!30,fillstyle=solid,opacity=0.5](31,31){17}
\Square[linecolor=red,fillcolor=red!30,fillstyle=solid,opacity=0.5](1,51){45}
}
\centering
%
%
\begin{pspicture}(100,70)
\landcakestaircase
\rput[tr](40,10){\tiny{$(x^*,y^*)$}}
\psframe[linecolor=blue,fillcolor=blue!10,fillstyle=solid,hatchcolor=blue,opacity=0.5,linestyle=dotted](31,31)(40,34)
\psframe[linecolor=blue,fillcolor=blue!10,fillstyle=solid,hatchcolor=blue,opacity=0.5,linestyle=dotted](51,1)(64,10)
\end{pspicture}
\hskip 2cm
\begin{pspicture}(100,70)
\psline[linecolor=black](70,0)(64,0)(64,35)(30,35)(30,50)(0,50)(0,70)
\rput[bl](65,1){1'}
\rput[bl](31,36){2'}
\rput[bl](65,36){\tiny{$(x^*+l^*,y^*+l^*)$}}
\rput[bl](1,51){4}
\end{pspicture}
\protect\caption{\label{fig:staircase-alg}
The square at corner 2 (second from the bottom-right) satisfies the Staircase Lemma, since its ``shadows'' (dotted) are contained in the other squares. After it is allocated, the remaining cake is a staircase with 2 teeth and 3 corners (right).}
\end{figure}

We now need the following geometric lemma, which is formally stated and proved in Appendix \ref{sec:staircase-lemma}:
\begin{lemma}
(Staircase Lemma)
Given a staircase in which a square is located in each corner, there exists a square whose shadows are contained in the union of the other squares.
\end{lemma}
Based on the Staircase Lemma, we proceed as follows. From the $T+1$ corner-winning-bids, select one square whose shadows are contained in the other squares (e.g. the square in corner 2 in Figure \ref{fig:staircase-alg}). Declare this square as the global winning square, give it to its bidder, and remove its shadows from $C$.

We have to prove that the remaining cake is sufficiently valuable for each losing agent. The number of agents changes by $\Delta n = -1$ since the winning agent leaves. The cake value for a losing agent changes by $\Delta V$ (a negative quantity). The number of teeth changes by $\Delta T$ which may be positive or negative. Looking at the value requirement $V \geq 2n+T-1$, we see that in order to use the induction assumption, it is sufficient to prove that for every loser:
\begin{align*}
\Delta V \geq 2 \Delta n + \Delta T = \Delta T - 2
\end{align*}
I.e, the value of the remaining agents should drop by at most two units, plus one unit for each removed tooth.

The shadows of the winning square can be partitioned to $m$ disjoint rectangular components, to its top-left and to its bottom-right, such that each component is located in a different corner (e.g. in Figure  \ref{fig:staircase-alg}, $m=2$). After the shadows are removed, $m$ teeth disappear. One tooth is added at the top-right of the winning square. Hence, $\Delta T = 1-m$.

The winning square is worth at most $1$ for the remaining agents, since it is contained in all other squares in its corner. By the selection of the global-winning-bid, each of the $m$ shadows is contained in a corner-winning-bid, so its value for the losing agents is at most 1. Hence, the total value of the removed region to the $n-1$ losers is at most $m+1$, so $\Delta V\geq -1-m = \Delta T-2$, as required. Hence, by the induction assumption we can proceed and divide the remainder among the losers. \footnote{The easy case is, in fact, contained in the hard case, since a square smaller than the edges adjacent to its corner has an empty shadow (so $m=0$). The split to easy and hard cases is done for presentation purposes only.} \qed

The above procedure proves that, for every $n\geq 1,T\geq 0$:

\begin{align*}
\prop(T\,staircase,n,Squares)=\frac{1}{2n-1+T}
\end{align*}

By letting $T=0$ we get: 
\begin{align*}
\prop(Quarter\,plane,\,n,\,Squares)=\frac{1}{2n-1}
\end{align*}

\subsection{One and zero walls}
A half-plane can be divided by partitioning it to two quarter-planes:
\begin{claim}\label{pos:half-plane}
For every $n\geq2$:
\end{claim}
\begin{align*}
\prop(Half\,plane,\,n,\,Squares)\geq\frac{1}{2n-2}
\end{align*}
\begin{proof}
Assume the cake is the half-plane $y\geq0$ and there are $n$ agents who value it as $2n-2$. Do the following \emph{mark auction}: ask each agent to mark a quarter-plane open to the top-left, whose bottom edge is adjacent to the bottom edge of $C$ and its value is exactly 1. An example is illustrated below, where the winning bid is --- as usual --- marked by thicker dots:
\begin{center}
\psset{unit=2cm,linewidth=1pt}
\newcommand{\quarterplane}[1]{\psline[linestyle=dotted](#1,1)(#1,0.05)(-3,0.05)}
\begin{pspicture}(-3,0)(3,1)
\psline[linecolor=brown,linestyle=solid,linewidth=2pt](-3,0)(3,0)
\psset{linecolor=gray}\quarterplane{0.7}
\psset{linecolor=green}\quarterplane{0.5}
\psset{linecolor=blue}\quarterplane{0.2}
\psset{linecolor=red,linewidth=2pt}\quarterplane{0}
\end{pspicture}
\end{center}
After the winning bid is allocated to its winner, the $n-1$ losers value the remaining quarter-plane as at least $(2n-2)-1=2(n-1)-1$; divide it among
them using the procedure of Subsection \ref{sub:alg-2-walls}.
\end{proof}

An unbounded plane can be divided by partitioning it to two half-planes.
\begin{claim}\label{pos:plane}
For every $n\geq 4$:
\end{claim}
\begin{align*}
\prop(Plane,\,n,\,Squares)\geq\frac{1}{2n-4}
\end{align*}
\begin{proof}
Normalize the cake value to $2n-4$. Do the following \emph{mark auction}: ask each agent to mark a half-plane bounded at its top, with a value of exactly 2 (so each agent $i$ marks a half-plane $Y_i = [-\infty,\infty]\times[-\infty,y_i]$). Order the bids by containment, so that $Y_1 \subseteq Y_2 \subseteq \dots \subseteq Y_n$. Select \emph{two} winners --- the agents with the two smallest bids ($Y_1$ and $Y_2$):
\begin{center}
	\psset{unit=2cm,linewidth=1pt}
	\newcommand{\halfplane}[1]{\psline[linestyle=dotted](-3,#1)(3,#1)}
	\begin{pspicture}(-4,-.2)(4,1)
	\psset{linecolor=gray}\halfplane{0.7}
	\psset{linecolor=green}\halfplane{0.5}
	\psset{linecolor=black}\halfplane{0.4}
	\psset{linecolor=blue,linewidth=2pt}\halfplane{0.2}
	\psline[linestyle=dotted](-4,.2)(4,.2)
	\psset{linecolor=red,linewidth=2pt}\halfplane{0.1}
	\rput(-3.5,0){$\downarrow ~~~Y_2$}
	\end{pspicture}
\end{center}
Both winners value $Y_2$ as at least 2; divide it among them using cut-and-choose. Each of them receives a quarter-plane with a value of at least 1. The remaining cake is a half-plane bounded at its bottom, which the $n-2$ losers value as at least $(2n-4)-2 = 2(n-2)-2$; divide it among them using the procedure of  Claim \ref{pos:half-plane}.
\footnote{
When $n=3$, the cake-value should be normalized to 3. The single losing agent values the remaining cake as 1 and takes it entirely.
When $n=2$, the cake-value should be normalized to 2. The agents simply divide the plane to two half-planes using cut-and-choose. In both these cases, $\prop(Plane,\,n,\,Squares)=1/n$.
}
\end{proof}

The lower bounds for one and zero walls do not match the upper bounds proved in Section \ref{sec:Impossibility-Results}: the proportionality coefficient (the coefficient of $n$ in the denominator) is 2 in both cases, while the coefficients in the upper bounds are 3/2 for a half-plane and almost 1 for an unbounded plane. We believe that the procedures presented above are tight and the ``real'' coefficient is 2. The reason is that, whenever a plane is cut by even a single straight line, the remainder is a half-plane, and when a half-plane is cut, the remainder is a quarter-plane, and for a quarter-plane we know that the proportionality coefficient is 2. In future work we plan to look for tighter impossibility results showing that the proportionality coefficient is indeed 2 in half-planes and unbounded planes, too.

\subsection{Three walls}
\label{sub:valleys}
\newcommand{\xmin}{x^{\min}}
\newcommand{\xmax}{x^{\max}}
\newcommand{\xleft}{x^{\text{left}}}
\newcommand{\xright}{x^{\text{right}}}
\newcommand{\yleft}{y^{\text{left}}}
\newcommand{\yright}{y^{\text{right}}}
Our next goal is to divide a square bounded by three walls. We already presented a procedure for a square with three walls in Subsection \ref{sub:Guillotine-algorithms}, but the value guarantee of the present procedure is better and it matches the upper bound of $1/(2n-1)$. On the other hand, the present procedure uses general  (non-guillotine) cuts.

Similarly to the two-walls case, we have to generalize the problem and divide a \emph{rectilinear polygonal domain unbounded in one direction}, which for brevity we call a ``valley''. Again the number of teeth is denoted by $T$; see Figure \ref{fig:valley}. We assume that the valley is entirely contained in the unit square $[0,1]\times[0,1]$.

\psset{unit=3.5mm,hatchsep=1pt}
\def\valleyfour{\psline[linecolor=black,linewidth=2pt]
	(0,10) (0,8)(1,8) (1,9)(5,9) (5,6)(7,6) (7,4)(10,4) (10,10)
}
\def\walls{\psline[linestyle=dotted](0,10)(0,0)(10,0)(10,10)(0,10)}
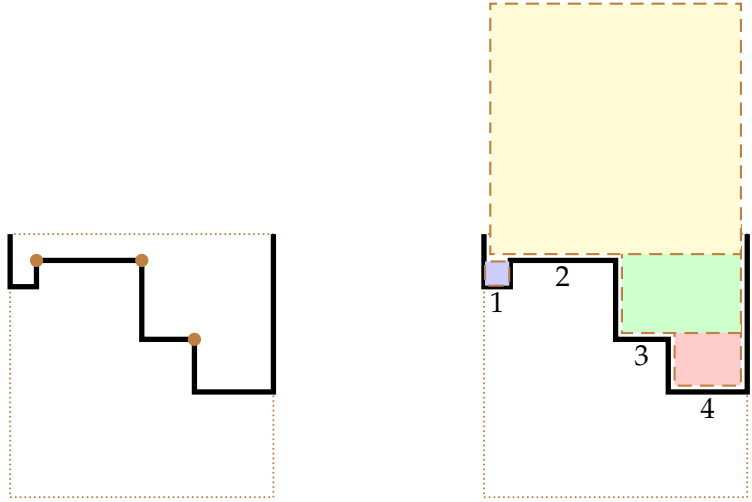
\begin{figure}
\begin{center}
\psset{linecolor=brown}
\begin{pspicture}(12,20)
\walls\valleyfour
\psdot[dotsize=5pt,linecolor=brown](1,9)
\psdot[dotsize=5pt,linecolor=brown](5,9)
\psdot[dotsize=5pt,linecolor=brown](7,6)
\end{pspicture}
\hskip 2cm
\begin{pspicture}(12,20)
\walls\valleyfour
\rput(0.5,7.4){1}
\rput(3,8.4){2}
\rput(6,5.4){3}
\rput(8.5,3.4){4}
\psframe[fillstyle=solid,fillcolor=red!20,linestyle=dashed](7.2,4.2)(9.8,6.8)
\psframe[fillstyle=solid,fillcolor=green!20,linestyle=dashed](5.2,6.2)(9.8,10.8)
\psframe[fillstyle=solid,fillcolor=blue!20,linestyle=dashed](0,8)(1,9)
\psframe[fillstyle=solid,fillcolor=yellow!20,linestyle=dashed](0.2,9.2)(9.8,18.8)
\end{pspicture}
\end{center}
\protect\caption{
\label{fig:valley}
A valley with $T=3$ teeth marked by discs (Left). It has $T+1=4$ levels and can be covered by 4 squares (Right). The levels coordinates are: $[0,.1]\times.8,\, 
[.1,.5]\times.9,\, [.5,.7]\times.6,\, [.7,1.0]\times.4$. The levels are covered from bottom to top: 4, then 3, then 1, then 2. In each level, the bottom rectangle, which is not overlapped by higher squares, is the \emph{covering rectangle} of that level.
}
\end{figure}

We require the valley to have the \emph{Sunlight property}, which means that light coming from the top can reach all parts of the bottom border. In other words: no part of the valley lies below a wall; the bottom border of a valley goes from the left wall (at $x=0$) to the right wall (at $x=1$) in stairs climbing to the top-right or bottom-right, but never back to the left. Hence a valley can be represented as a sequence of $T+1$ \emph{levels} $\{[\xmin,\xmax_i]\times y_i\}_{i=1}^{T+1}$, where (see Figure \ref{fig:valley}):
\begin{align*}
0 = \xmin_1 < \xmax_1 = \xmin_2 < \xmax_2 & \cdots < \xmax_{T} = \xmin_{T+1} < \xmax_{T+1} = 1
\\
&\textrm{and}
\\
\forall i\in\{1,\ldots, T+1\}&:~~~
0 \leq y_i \leq 1 
\end{align*}
Our valley-division procedure is essentially similar to the staircase-division procedure: a mark-auction is performed in each ``corner'' of the valley; the smallest bid in each corner is the corner-winning-bid; and a global winning-bid is selected such that its ``shadows'' are contained in all other bids. We have to carefully define the ``corners'' and the ``shadows'', and this requires several definitions.

\subsubsection{The structure of a valley}
For every level $i\range{1}{T+1}$, when we look from $(\xmin_i,y_i)$ leftwards, we see a wall. Let $\xleft_i$ be the $x$ coordinate of that
wall and $\yleft_i$ be the $y$ coordinate of the level at the
top of the wall $(\yleft_i>y_i)$. If $(\xmin_i,y_i)$ is a bottom-left corner (such as in levels 1 and 3 and 4 in Figure \ref{fig:valley}), then $\xleft_i=\xmin_i$ and $\yleft_i=y_{i-1}$ (if $\xleft_i=0$, i.e. we hit the left boundary, then we define $\yleft_i=1$). Otherwise (as in level 2), $\xleft_i<\xmin_i$.

Similarly, define $\xright_i$ as the $x$ coordinate of the wall we see at the right and $\yright_i$ as the $y$ coordinate of the level at the top of the wall $(\yright_i>y_i)$. If $(\xmax_i,y_i)$ is a bottom-right corner (such as in levels 1 and 4 in the figure), then $\xright_i=\xmax_i$ and $\yright_i=y_{i+1}$ (if $\xright_i=1$, i.e. we hit the right boundary, then we define $\yright_i=1$). Otherwise (as in levels 2 and 3), $\xright_i>\xmax_i$.

The horizontal distance between the two walls surrounding a level is denoted: 
\begin{align*}
dx_i := \xright_i-\xleft_i
\end{align*}
In the figure, the values of $dx_{i}$ for the 4 levels are: $0.1, 1.0, 0.5, 0.3$.
The vertical depth of a level is denoted by:
\begin{align*}
dy_i := \min(\yright_i,\yleft_i)-y_{i}
\end{align*}
It is the height to which one has to climb in order to move to another level, or to exit the unit square. In the figure, the values of $dy_{i}$ for the 4 levels (from left to right) are: $0.1, 0.1, 0.3, 0.2$. 

Initially we handle the case of a single agent. This requires a bound on the square-cover-number of the valley, as a function of $T$. In general, the square-cover-number of a valley can be arbitrarily large, e.g, if the valley has a single level $[0,1/m]\times 0$, then $m$ squares are required to cover it, for every integer $m$. For our purposes, we can restrict our attention to valleys that do not have such deep levels. Formally, we require the valley to have the \emph{Shallowness property}, which means that for every level $i$: 
\begin{align*}
dy_i\leq dx_i
\end{align*}
This property guarantees that the valley can be covered by at most $T+1$ squares, as we show in the following subsection.

\subsubsection{Covering a valley with squares}\label{sub:cover-valley}
\begin{lemma}\label{lem:cover-valley}
If $C$ is a valley with $T$ teeth satisfying the Shallowness property, then:
\begin{align*}
\covernum(C,Squares)\leq T+1
\end{align*}
\end{lemma}
\begin{proof}
Consider the lowest level --- the level $i$ with the smallest $y_i$. Consider the square:
\begin{align*}
S_i :=[\xleft_i,\,\xleft_i+dx_{i}]\times[y_{i},\, y_{i}+dx_{i}]
\end{align*}
Because this is the lowest level, both its endpoints are inner corners, so $\xleft_i=\xmin_i$
and $\xleft_i+dx_{i}=\xright_i=\xmax_i$.

The Shallowness property guarantees that $dx_i\geq dy_i$. Hence, $y_i + dx_i \geq y_i+dy_i = \min(\yright_i,\yleft_i)$. Hence, $S_i$ contains the rectangle:
\begin{align*}
R_i := [\xleft_i,\, \xright_i]\times[y_{i},\,\min(\yleft_i,\, \yright_i)]
\end{align*}
Call $R_i$ the \emph{covering rectangle of level i} (see Figure \ref{fig:valley}/Right). If we remove from the valley the covering rectangle of $i$ (the lowest level), then at least one of the teeth adjacent to it (from the left or from the right) is flattened, and we remain with at most $T-1$ teeth. In some remaining levels $j$, the $\xmin_j$ and $\xmax_j$ values might change, but the $\xleft_j$ and $\xright_j$ do not change since the removed level was lower than all surrounding levels. Hence, $dx_j$ and $dy_j$ do not change, the Shallowness property is preserved, and we can continue this process iteratively until all the valley is covered. The number of squares in the covering is at most the number of levels, $T+1$.
\end{proof}

\subsubsection{The division procedure}
We are now ready to present the valley-division procedure.

We normalize the valuations of all agents such that the value of the entire valley for each agent is $2n-1+T$. We use a sequence of mark auctions to give each agent a square with a value of at least 1. 

We proceed by induction on the number of agents $n$. When $n=1$, the value for the single agent is at least $T+1$. By Lemma \ref{lem:cover-valley} the valley can be covered by $T+1$ squares, so by the Covering Lemma the agent can get a square with a value of at least 1.

Suppose we already know how to divide a $T$-valley to $n-1$ agents, for every integer $T\geq 0$. Now there are $n$ agents. Start by doing \textbf{$2 (T+1)$ mark auctions}. There are two auctions per level: one on the left and one on the right of its covering rectangle. For every level $i\range{1}{T+1}$, ask each agent to mark two squares with a value of exactly 1: a square with its bottom-left corner at the bottom-left corner of $R_i$ $(\xleft_i,\, y_i)$ and a square with its bottom-right corner at the bottom-right corner of $R_i$ $(\xright_i,\, y_i)$. The squares may overlap. An agent can refrain from participating in an auction if the largest square he can mark at this corner has a value of less than 1. By the Covering Lemma, each agent can participate in at least one auction.

In each corner, there are at most $n$ squares. From these, we select a smallest square as the ``corner-winning-bid''. Now we have at most $2 (T+1)$ corner-winners. The global-winner is \textbf{the square with a lowest top side}. I.e, if the side-length of the $i$-level winning-bid is $l_i$, then the global winner is a square with a smallest $y_i+l_i$.

In the illustration below, the index of each level is written below the level. There are squares only in 7 out of 10 corners, since no agents participated in the auction for the corner $(\xleft_3,y_3)$ (marked with x) and for level 5. The global-winner (marked with thicker dots) is the corner-winner at the corner $(\xright_1,y_1)$:

\newcommand{\valleydeep}{
\psline[linecolor=black,linewidth=2pt]
	(0,12) (0,8)(1,8) (1,5)(9,5) (9,7)(11,7) (11,9)(12,9) 
	(12,10.5)(14,10.5) (14,12)
	\rput(0.5,7.4){1}
	\rput(5,4.4){2}
	\rput(10,6.4){3}
	\rput(11.5,8.4){4}
	\rput(13,10){5}
}
\newcommand{\losingsquares}{
\Square[linecolor=blue](0.1,8.1){4.8}

\Square[linecolor=green](1.1,5.1){6.8}
\Square[linecolor=green](2.6,5.1){6.3}

\Square[linecolor=red](6.6,7.1){4.3}
\rput[l](1.1,7.1){\color{red}x}

\Square[linecolor=gray](0.1,9.1){4.3}
\Square[linecolor=gray](9.1,9.1){2.8}
}
\begin{center}
\begin{pspicture}(0,4)(14,14)
\valleydeep
\psset{dotsep=1pt,linestyle=dotted,linecolor=blue}
\losingsquares{}
\Square[linecolor=blue,linewidth=2pt](8.1,8.1){2.8}
\end{pspicture}
\end{center}
In addition to the winning square, we may have to remove some other parts of the valley, in order to ensure that the remaining valley satisfies the two properties defined above: the \emph{Sunlight property} and the \emph{Shallowness property}. We have to prove that this allocation leaves a sufficiently high value for the losing agents.

After all the removals, the number of agents changes by $\Delta n = -1$ since one agent leaves; the cake value for a losing agent changes by $\Delta V$ (a negative quantity); and the number of teeth changes by $\Delta T$ which may be positive or negative.  Looking at the value requirement $V \geq 2n+T-1$, we see that in order to use the induction assumption, it is sufficient to prove that for every loser:
\begin{align*}
\Delta V \geq 2 \Delta n + \Delta T = \Delta T - 2
\end{align*}
so the value of each loser should drop by at most two units, plus one unit for each removed tooth.

The following analysis depends on whether the winning square is adjacent to a right corner $(\xright_i,y_i)$ as in the illustration above, or a left corner $(\xleft_i,y_i)$. The two cases are entirely symmetric; henceforth we assume that the winning square is adjacent to a right corner.

First, we handle the \emph{Sunlight property} by cutting from the left edge of the winning square down to the bottom border of $C$:
\begin{center}
\begin{pspicture}(0,4)(14,14)
\valleydeep{}
\psset{dotsep=1pt,linestyle=dotted,linecolor=blue}
\losingsquares{}
\pspolygon[linecolor=blue,linewidth=1pt,linestyle=solid,fillcolor=blue!10,fillstyle=solid](10.9,7.1)(10.9,10.9)(8.1,10.9)(8.1,5.1)(8.9,5.1)(8.9,7.1)(10.9,7.1)
\Square[linecolor=blue,linewidth=2pt](8.1,8.1){2.8}
\end{pspicture}
\end{center}
The winning square casts a shadow on $m\geq 0$ teeth below it, which are all removed. In the illustration above, $m=1$. Additionally, a new tooth is added at the top-left of the winning square. Additionally, if the winning square is higher than the tooth at its right (as in the figure), then that tooth is removed and a new tooth is added at the top-right of the winning square. All in all, $\Delta T = 1-m$. 

The winning square casts a shadow on $1+m$ levels. All squares of the losing agents in these levels are higher than the winning square; hence, the shadows of the winning square are contained in the losers' squares, and the total value of the shadows is at most $1+m$. All in all, $\Delta V \geq -1-m = \Delta T-2$, as required.

Next, we have to handle the \emph{Shallowness property} by removing \emph{deep levels} --- levels for which $dy_j>dx_j$ or equivalently:
\begin{align}
\label{eq:depth}
\min(\yright_j,\yleft_j)-y_j > \xright_j-\xleft_j
\end{align}
This is done separately to the left and to the right of the winning square:
\begin{itemize}
\item A level to the left of the winning square ($j<i$) may become deep if the left edge of the winning square, and the cut from that edge downwards, becomes its rightmost wall: 
\begin{align*}
\xright_j \leftarrow \xleft_i 
&&
\yright_j \leftarrow y_i+l_i
\\
&(y_i+l_i)-y_j > \xleft_i -\xleft_j&
\end{align*}
\item A level to the right of the winning square ($j>i$) may become deep if the right edge of the winning square becomes its leftmost wall:
\begin{align*}
\xleft_j \leftarrow \xright_i
&&
\yleft_j \leftarrow y_i+l_i
\\
&(y_i+l_i)-y_j > \xright_j - \xright_i&
\end{align*}
\end{itemize}

In each side, we remove the highest deep level, and all levels below it. In the illustration below, only level 4 (to the right of the winning square) is removed:
\begin{center}
\begin{pspicture}(0,4)(14,14)
\valleydeep{}
\psset{dotsep=1pt,linestyle=dotted,linecolor=blue}
\losingsquares{}
\pspolygon[linecolor=blue,linewidth=1pt,linestyle=solid,fillcolor=blue!10,fillstyle=solid](10.9,7.1)(10.9,10.9)(8.1,10.9)(8.1,5.1)(8.9,5.1)(8.9,7.1)(10.9,7.1)
\psframe[linecolor=blue,linewidth=1pt,linestyle=solid,fillcolor=blue!30,fillstyle=solid](10.9,9.1)(11.9,10.6)
\Square[linecolor=blue,linewidth=2pt](8.1,8.1){2.8}
\end{pspicture}
\end{center}
By selection of the global winner: $y_j+l_j\geq y_i+l_i$, which implies:
\begin{align}
\label{eq:lj}
l_j \geq (y_i+l_i) - y_j
\end{align}
If a level $j<i$ becomes deep, then (\ref{eq:lj}) implies:
\begin{align*}
& l_j > \xleft_i - \xleft_j
\\
\implies & \xleft_j+l_j > \xleft_i .
\end{align*}
In addition to $y_j+l_j \geq y_i+l_i$, this implies that the removed rectangle $[\xleft_j,\xleft_i]\times[y_j,\min(y_i+l_i,\yleft_j)]$ is contained in the corner-winner: $[\xleft_j,\xleft_j+l_j]\times[y_j,y_j+l_j]$. 
Hence, the value of the removed rectangle is at most 1. At most one unit of value is removed, and one tooth is removed. Hence, the balance between $\Delta V$ and $\Delta T$ is maintained.

If a level $j>i$ becomes deep, then (\ref{eq:lj}) implies:
\begin{align*}
& l_j > \xright_j - \xright_i
\\
\implies & \xright_j-l_j < \xright_i .
\end{align*}
In addition to $y_j+l_j \geq y_i+l_i$, this implies that the removed rectangle $[\xright_i,\xright_j]\times[y_j,\min(y_i+l_i,\yright_j)]$ is contained in the corner-winner: $[\xright_j-l_j,\xright_j]\times[y_j,y_j+l_j]$. Hence, the value of the removed rectangle is at most 1. At most one unit of value is removed, and one tooth is removed. The balance between $\Delta V$ and $\Delta T$ is maintained.

Finally, we have to handle the \emph{Sunlight property} again by removing all levels below the levels removed in the previous step. We now prove that in all such levels, no agent marked any square. Indeed, let $j$ be a level that became deep, and $k$ be a level shadowed by it. So $y_k<y_j$ and $\xleft_k>\xleft_j$ and $\xright_k < \xright_j$. The side-length of any square marked in level $k$ is at most $\xright_k-\xleft_k$, so $l_k<\xright_k-\xleft_k<\xright_j-\xleft_j$ and:
\begin{align*}
y_k+l_k < y_j+(\xright_j-\xleft_j)
\end{align*}
Combining this with (\ref{eq:depth}) gives:
\begin{align*}
y_k+l_k < \min(\yright_j,\yleft_j) \leq y_i+l_i
\end{align*}
but this contradicts the assumption that $i$ is the global-winning-square. Therefore, all levels below a deep level have a value of less than 1 to all agents. At most one unit of value is removed per level, so the balance between $\Delta V$ and $\Delta T$ is maintained. 

To summarize: after allocating the winning square to the winner and removing some parts of the valley, we have a new valley with $T+\Delta T$ teeth satisfying the Sunlight and the Shallowness properties, and each losing agent values it as at least $((2n-1+T)+\Delta V) \geq ((2n-1+T)+(\Delta T-2)) = 2(n-1)-1+(T+\Delta T)$. Therefore, by the induction assumption we can continue to divide it among the $n-1$ losers.
\qed

The above procedure proves that, for every $n\geq 1,T\geq 1$:

\begin{align*}
\prop(T\,valley,n,Squares)=\frac{1}{2n-1+T}
\end{align*}

A square with 3 walls is a valley with no teeth. It obviously satisfies the Sunlight property and the Shallowness property. Letting $T=0$ in the above formula yields:
\begin{align*}
\prop(Square\,with\,three\,walls,\,n,\,Squares)=\frac{1}{2n-1}
\end{align*}
matching the upper bound.

\subsubsection{Remark}


We divided a 2-walls square by generalizing it to a staircase, and divided a 3-walls square by generalizing it to a valley. The natural next step is to divide a 4-walls square by generalizing it to a rectilinear polygon. This is a much more challenging task even for a single agent. The algorithmic problem of finding a minimal square-covering for a rectilinear polygon has been solved by \citet{BarYehuda1996Lineartime}, and we believe that their algorithm can be used for developing a rectilinear polygon division procedure. However, this algorithm is much more complicated than our covering algorithm of Subsection \ref{sub:cover-valley}, so the division procedure will probably also be much more complicated. 

In the next subsection we present a procedure for dividing a square using a different approach, which works only when the value measures are identical.

\subsection{Four walls, guillotine cuts, identical valuations} \label{sec:Half-proportional-division-identical-valuations}
Our procedures for identical valuations differ from the other procedures in that they do not use auctions, since all agents would make the same bids anyway.

We develop simultaneously a pair of division procedures. Both procedures accept a cake $C$ which is assumed to be the rectangle $[0,1]\times[0,L]$, and a \emph{single} continuous value measure $V$. They return some disjoint square pieces $\{X_{i}\}$ such that for every $i$: $V(X_{i})\geq 1$.

The two procedures differ in their requirement on $L$ (the height/length ratio of the cake) and in the number of ``walls'' (bounded sides) they assume on the cake: 
\begin{itemize}
\item The \emph{fat-procedure} requires that $L\in[1,2]$ (i.e, the cake is a 2-fat rectangle) and it guarantees that all allocated squares are contained in $C$;
\item The \emph{thin-procedure} requires that $L\in [2,\infty)$ (i.e, the cake is a ``2-thin'' rectangle) and it returns one of the following two outcomes:
\begin{enumerate}
\item $n-1$ squares contained in $C$ (i.e, bounded by the 4 walls of the cake), or -
\item $n$ squares contained in $[0,\infty]\times[0,L]$, i.e, bounded by only 3 walls but may flow over the rightmost border. Every square that flows over the rightmost border is guaranteed to have its leftmost edge adjacent to the leftmost edge of $C$ and its side-length at most $L-1$ (the longer side of the cake minus its shorter side), so that all squares are contained in $[0,L-1]\times[0,L]$.
\end{enumerate}
\end{itemize}
Additionally, the two procedures differ in their requirement on the total cake value:
\begin{itemize}
\item The fat-procedure requires that $V(C)\geq 2n$.
\item The thin-procedure requires that $V(C)\geq 2n-2$.
\end{itemize}
The procedures are developed by induction on $n$. We first consider the base case $n=1$:
\begin{itemize}
\item In the fat-procedure, the cake value is 2 and the cake is 2-fat, so by the Covering Lemma it contains a square with a value of at least 1.
\item The thin-procedure can just return an empty set. This is an instance of the first outcome --- $n-1$ squares contained in $C$.
\end{itemize}
We now assume that both procedures work well for any number less than $n$. Given $n\geq 2$, we proceed as in the following subsections.

Henceforth, we make the following \textbf{positivity assumption}: every piece with positive area has positive value. This assumption is only for convenience: it simplifies the presentation and reduces the number of cases to consider. It can be dropped by adding sub-cases to each case in the procedures.
\subsubsection{Fat procedure} \label{sub:fat-procedures}
At this point, the cake is a 2-fat rectangle with width 1 and height $L\in [1,2]$. Its total value is $2n$, and $n\geq 2$.

For every integer $u\in[0,2n]$, let $y_u$ be the value $y\in[0,L]$ such that the cake below $y$ has value $u$: $V([0,1]\times[0,y_{u}])=u$. By the positivity assumption, $y_u$ is unique, $y_0=0$ and $y_{2n}=L\geq1$. Therefore, there exists a smallest $k\in[1,n]$ such that: $y_{2k}\geq 1/2$. Let $Bottom := [0,1]\times[0,y_{2k}] = $ the cake below $y_{2k}$; note that it is a 2-fat rectangle. Let $Top:=C\setminus Bottom = [0,1]\times[y_{2k},L] = $ the cake above $y_{2k}$. We have $V(Bottom)=2k$ and $V(Top)=2(n-k)$. Now there are two cases:

\psset{unit=2.5cm}
\newcommand{\yvalue}[2]{\rput[r](-0.1,#1){$#2$}}
\newcommand{\yvalueline}[2]{
 \rput[r](-0.1,#1){$#2$}
 \psline[linestyle=dotted](0,#1)(1.1,#1)
}
\newcommand{\landcakefat}{
\psframe[linecolor=brown](0,0)(1,1.6)
\yvalue{1.6}{L}
\yvalue{0}{0}
}

\textbf{Case A}: $L-y_{2k}\geq 1/2$ (this implies $k<n$). Thus $Bottom$ and $Top$ are both 2-fat rectangles: 
\begin{center}
\begin{pspicture}(-1,-0.1)(3,1.7)
 \landcakefat
 \psline[linestyle=dotted](0,0.7)(1.1,0.7)
 \yvalue{0.7}{y_{2k}}
 \rput[l](1.1,1.2){$\leftarrow Top$} \rput(0.5,1.2){\shortstack{$V=2(n-k)$}}
 \rput[l](1.1,0.35){$\leftarrow Bottom$}  \rput(0.5,0.35){\shortstack{$V=2k$}}
\end{pspicture}
\end{center}
Apply the fat procedure to $Bottom$ and $Top$ and get $k+(n-k)=n$ squares contained in $C$.

\textbf{Case B}: $L-y_{2k}<\frac{1}{2}$, so $Bottom$ is 2-fat and $Top$ is 2-thin. Now consider $y_{2k-2}$. By definition of $k$,
$y_{2k-2}<\frac{1}{2}$. Let $Bottom':=[0,1]\times[0,y_{2k-2}]$ and
$Top':=C\setminus Bottom'=[0,1]\times[y_{2k-2},L]$, so $V(Bottom')=2(k-1)=V(Bottom)-2$
and $V(Top') = 2(n-k+1)=V(Top)+2$. Note that $Bottom'$ is 2-thin and is contained in $Bottom$, and $Top'$ is 2-fat and contains $Top$:

\newcommand{\landcaketopbottom}{
 \landcakefat
 \yvalue{1.3}{y_{2k}}
 \psline[linestyle=dotted](0,1.3)(1.1,1.3)
 \yvalue{0.3}{y_{2k-2}}
 \psline[linestyle=dotted](0,0.3)(1.1,0.3)
}
\begin{center}
\begin{pspicture}(-1,-0.1)(3,1.7)
 \landcaketopbottom
 \rput(0.5,1.45){$V=2(n-k)$}
 \rput(0.5,0.8){$V=2$}
 \rput(0.5,0.15){$V=2(k-1)$}
 \rput[l](1.2,1.45){$\leftarrow Top$}
 \rput[l](1.2,0.15){$\leftarrow Bottom'$}
 \psline[linecolor=blue,linestyle=dotted](2,0)(2.1,0)(2.1,1.3)(2,1.3)
 \rput[l](2.12,0.2){$\leftarrow Bottom$}
 \psline[linecolor=blue,linestyle=dotted](2.2,0.3)(2.3,0.3)(2.3,1.6)(2.2,1.6)
 \rput[l](2.32,1.4){$\leftarrow Top'$}
\end{pspicture}
\end{center}

Because here $n\geq2$, either $n-k\geq1$ or $k-1\geq1$ or both. Hence, at least one of the two 2-thin parts ($Top$, $Bottom'$) is non-empty and with value at least 2. Use the \emph{thin procedure} to divide the non-empty thin part/s. In each part there are two possible outcomes: a smaller number of squares within 4 walls or a larger number of squares within 3 walls. There are several cases to consider.

--- One easy case is that we get the 4-walls outcome in at least one of the parts --- either in $Top$ or in $Bottom'$ or in both. Suppose that we get the 4-walls outcome in $Bottom'$. So we have $k-1$ squares within the 4 walls of $Bottom'$. Ignore the outcome on $Top$ and apply the fat procedure to $Top'$. This results in $n-k+1$ additional squares, so we have the required $n$ squares. The situation is analogous if we get the 4-walls outcome in $Top$.

--- Another easy case is that we get the 3-walls outcome in one part, and the other part is empty. Suppose that $Top$ is empty (this implies $k=n$) and we get the 3-walls outcome in $Bottom'$. So we have $(k-1)+1=n$ squares contained in $[0,1]\times [0,1-y_{2k-2}] \subseteq C$, as required. The situation is analogous if $Bottom'$ is empty and we get the 3-walls outcome in $Top$.

--- The hard case is that both $Top$ and $Bottom'$ are non-empty and the thin procedure on both of them returns the 3-walls outcome. Now we have $k$ bottom squares and $n-k+1$ top squares, for a total of $n+1$ squares, e.g:
\begin{center}
\begin{pspicture}(-1,-0.1)(3,1.7)
 \landcaketopbottom
 \rput[l](1.2,1.45){$\leftarrow Top$}
 \rput[l](1.2,0.15){$\leftarrow Bottom'$}
 
 \Square(0,0){0.2}
 \Square(0.2,0){0.45}
 \Square(0.65,0){0.35}

 \Square[linestyle=dashed](0,1){0.6}
 \Square(0.6,1.5){0.1}
 \Square(0.7,1.3){0.3}
\end{pspicture}
\end{center}
A potential problem in the last step is that some of the squares might overlap: some top squares might flow over the lower boundary of Top and overlap a bottom square, or some bottom squares might flow over the upper boundary of Bottom' and overlap a top square. To prevent an overlap, we remove a single square --- the largest of the $n+1$ squares (dashed square in the illustration above) --- and return the remaining $n$ squares. 

It remains to prove that, indeed, after the largest square is removed, the remaining $n$ squares do not overlap. The proof is purely geometric and it is delegated to Appendix \ref{sec:dove-hawk}. \qed

\subsubsection{Thin procedure} \label{sub:thin}
At this point, the cake is a 2-thin rectangle with width 1 and height $L\in [2,\infty)$. Its total value is $2n-2$, and $n\geq 2$. The procedure is allowed to return one of two outcomes:

\textbf{Outcome \#1:} $n-1$ squares bounded by the 4 walls of $C$, i.e, contained in $[0,1]\times[0,L]$, or ---

\textbf{Outcome \#2:} $n$ squares bounded by the 3 walls of $C$,
i.e, contained in $[0,\infty]\times[0,L]$. 
In this case, every square that flows over the rightmost border must have its leftmost edge adjacent to the leftmost edge of $C$ (the edge $x=0$), and its side-length must be at most $L-1$ (the longer side of $C$ minus its shorter side). This means that all $n$ squares must be contained in $[0,L-1]\times[0,L]$.

We first handle the case $n=2$, in which $V=2$.

Select $y\in[0,L]$ such that $V([0,1]\times[0,y])=V([0,1]\times[y,L])=1$. Proceed according to the value of $y$:

\newcommand{\landcakethin}{
\psline[linestyle=solid,linecolor=brown](0.7,0)(0,0)(0,1.6)(0.7,1.6)
\psline[linestyle=dotted,linecolor=brown](0.7,1.6)(0.7,0)
\yvalue{1.6}{L}
\yvalue{0}{0}
}

\begin{center}
\begin{pspicture}(-0.5,-0.1)(1.3,1.7)
\landcakethin
\psline[linestyle=dotted](0,0.85)(1,0.85)
\yvalue{0.9}{L-1}
\yvalue{0.7}{1}
\rput[l](1,0.85){y}
\Square(0.02,0.03){0.8} \rput[l](0.1,0.4){$V=1$}
\Square(0.02,0.87){0.7} \rput[l](0.1,1.2){$V=1$}
\end{pspicture}
\begin{pspicture}(-0.5,-0.1)(1.3,1.7)
\landcakethin
\psline[linestyle=dotted](0,0.45)(1,0.45)
\yvalue{0.9}{L-1}
\yvalue{0.7}{1}
\rput[l](1,0.45){y}
\Square(0.02,0.02){0.66}  \rput[l](0.1,0.2){$V=1$}
\end{pspicture}
\begin{pspicture}(-0.5,-0.1)(1.3,1.7)
\landcakethin
\psline[linestyle=dotted](0,1.15)(1,1.15)
\yvalue{0.9}{L-1}
\yvalue{0.7}{1}
\rput[l](1,1.15){y}
\Square(0.02,0.9){0.66}  \rput[l](0.1,1.3){$V=1$}
\end{pspicture}
\end{center}

\begin{itemize}
\item If $y\in[1,L-1]$ (left) then return the two squares $[0,y]\times[0,y]$ and $[0,L-y]\times[y,L]$. Both squares are in $[0,L-1]\times[0,L]$
with their \West  side at $x=0$; this is an instance of outcome \#2.
\item If $y\in[0,1)$ (middle) then return $[0,1]\times[0,1]$; if $y\in(L-1,L]$ (right) then return $[0,1]\times[L-1,L]$. Both cases are instances of outcome \#1.
\end{itemize}

From now on we assume that $n\geq3$.

For every $u\in[0,2n-2]$, define $y_{u}$ as the value $y\in[0,L]$ such that the cake below $y$ has value $u$: $V([0,1]\times[0,y_{u}])=u$. By the positivity assumption, $y_u$ is unique and $y_0=0$ and $y_{2n-2}=L$. Therefore, there exists a smallest $k\in[1,n-1]$ such that: $y_{2k}\geq\frac{1}{2}$. Mark the cake below $y_{2k}$ ($[0,1]\times[0,y_{2k}]$) as $Bottom$ and the part above it ($[0,1]\times[y_{2k},L]$) as $Top$.
We have $V(Bottom)=2k$ and $V(Top)=2(n-k-1)$. 

Now there are two cases:

\textbf{Case A}: $L-y_{2k}\geq\frac{1}{2}$ (this implies $k<n-1$). Thus each of $Bottom$ and $Top$ is either 2-fat, or 2-thin with its longer side vertical --- parallel to the open side of $C$ (this means that we can divide it using the Thin Procedure letting the pieces flow over its rightmost border). 
\newcommand{\landcaketopbottomthin}{
 \landcakethin
 \yvalue{1.3}{y_{2k}}
 \psline[linestyle=dotted](0,1.3)(1.1,1.3)
 \yvalue{0.3}{y_{2k-2}}
 \psline[linestyle=dotted](0,0.3)(1.1,0.3)
}

\begin{center}
\begin{pspicture}(-0.5,-0.1)(2,1.7)
 \landcakethin
 \psline[linestyle=dotted](0,0.7)(1.1,0.7)
 \yvalue{0.7}{y_{2k}}
 \rput[l](1.1,1.5){$\leftarrow Top$} \rput[l](0.1,1.2){\shortstack{$V=2(n-k-1)$}}
 \rput[l](1.1,0.1){$\leftarrow Bottom$}  \rput[l](0.1,0.35){\shortstack{$V=2k$}}
\end{pspicture}
\end{center}
Apply the fat procedure or the thin procedure, whichever is appropriate, to $Bottom$ and $Top$. In each part there are two possible outcomes: a smaller number of squares within 4 walls, or a larger number of squares within 3 walls.  

--- If we get the 4-walls outcome in both parts, then we have $k+(n-k-1) = n-1$ squares within the 4 walls of $C$, which is an instance of Outcome \#1.

--- If we get the 4-walls outcome in one part and the 3-walls outcome in the other part, then we have $k+(n-k)=n$ or $(k+1)+(n-k-1)=n$ squares within 3 walls. By the induction assumption, the thin procedure guarantees that all squares flowing over the rightmost border have their leftmost edge adjacent to the leftmost wall $x=0$, and their side-length at most the longer side minus the shorter side. Here, the longer side of both Bottom and Top is less than $L$ and their shorter side is 1, so all these squares are contained in $[0,L-1]\times[0,L]$, so we have an instance of Outcome \#2.

--- If we get the 3-walls outcome in both parts, then we have $k+(n-k)+1 = n+1$ squares within 3 walls. We can discard one square arbitrarily and remain with $n$ squares as in the above case, which is again an instance of Outcome \#2.

\textbf{Case B}: $L-y_{2k}<\frac{1}{2}$, so $Bottom$ is 2-fat or 2-thin with a vertical long side  (parallel to the open side of $C$), and $Top$ is 2-thin with horizontal long side (perpendicular to the open side of $C$). Now consider $y_{2k-2}$. By definition of $k$, $y_{2k-2}<\frac{1}{2}$. let $Bottom'=[0,1]\times[0,y_{2k-2}]$
and $Top'=[0,1]\times[y_{2k-2},L]$, so $V(Bottom')=2(k-1)=V(Bottom)-2$
and $V(Top')=2(n-k)=V(Top)+2$. Note that $Bottom'$ is 2-thin with a horizontal long side and ir is contained in $Bottom$, and $Top'$ is 2-fat or 2-thin with a vertical long side and it contains $Top$:
\begin{center}
\begin{pspicture}(-0.5,-0.1)(3,1.7)
 \landcaketopbottomthin
 \rput[l](1.2,1.45){$\leftarrow Top$}
 \rput[l](0.05,1.45){$V=2(n-k-1)$}
 \rput[l](0.05,0.8){$V=2$}
 \rput[l](0.05,0.15){$V=2k-2$}
 \rput[l](1.2,0.15){$\leftarrow Bottom'$}

 \psline[linecolor=blue,linestyle=dotted](2,0)(2.1,0)(2.1,1.3)(2,1.3)
 \rput[l](2.12,0.2){$\leftarrow Bottom$}
 \psline[linecolor=blue,linestyle=dotted](2.2,0.3)(2.3,0.3)(2.3,1.6)(2.2,1.6)
 \rput[l](2.32,1.4){$\leftarrow Top'$}
\end{pspicture}
\end{center}
At this point $n\geq3$, so either $n-k-1\geq1$ or $k-1\geq1$ or both. Hence, at least one of the two horizontal thin parts ($Top$, $Bottom'$) is non-empty and with value at least 2. Use the \emph{thin procedure} on the non-empty horizontal part/s. In each part there are two possible outcomes: a smaller number of squares within 4 walls or a larger number of squares within 3 walls. There are several cases to consider.

--- One easy case is that we get the 4-walls outcome in at least one of the parts --- either in $Top$ or in $Bottom'$ or in both. Suppose that we get the 4-walls outcome in $Bottom'$ (the situation is analogous if we get the 4-walls outcome in $Top$). So we have $k-1$ squares within the 4 walls of $Bottom'$. We ignore the outcome on $Top$ and proceed to get additional squares from $Top'$. Apply to $Top'$ either the fat procedure (if it is 2-fat) or the thin procedure (if it is 2-thin with a vertical long side). One possibility is that we get $n-k$ additional squares contained in $Top'$; then we have a total of $n-1$ squares contained in $C$, which is an instance of Outcome \#1. Another possibility is that we get $n-k+1$ additional squares bounded by only three walls of $Top'$; by the induction assumption and the guarantees of the Thin Procedure, the squares that flow over the rightmost border of $Top'$ are adjacent to its leftmost wall, which coincides with the leftmost wall of $C$. Their side-length is at most the longer side-length of $Top'$ minus its shorter side-length; the longer side-length of $Top'$ is less than $L$ and its shorter side-length is 1, so the side-length of all the additional squares is at most $L-1$, and we have an instance of Outcome \#2.

--- Another easy case is that we get the 3-walls outcome in one part, and the other part is empty. Suppose that $Top$ is empty (this implies $k=n-1$) and we get the 3-walls outcome in $Bottom'$. So we have $(k-1)+1=n-1$ squares contained in $[0,1]\times [0,1-y_{2k-2}] \subseteq C$, which is an instance of Outcome \#1. The situation is analogous if $Bottom'$ is empty and we get the 3-walls outcome in $Top$.

--- The hard case is that both $Top$ and $Bottom'$ are non-empty and the thin procedure on both of them returns the 3-walls outcome. We now have the following squares:
\begin{itemize}
\item $k\geq1$ bottom squares in $[0,1]\times[0,1-y_{2k-2}]$;
\item $n-k\geq1$ top squares in $[0,1]\times[L-1+(L-y_{2k}),L]$.
\end{itemize}
Because $L\geq2$, no squares overlap:
\begin{center}
\begin{pspicture}(-0.5,-0.1)(2,1.7)
 \landcaketopbottomthin
 \rput[l](1.2,1.45){$\leftarrow Top$}
 \rput[l](1.2,0.15){$\leftarrow Bottom'$}
  
 \Square(0,0){0.2}
 \Square(0.2,0){0.4}
 \Square(0.6,0){0.1}
 \Square(0,1.2){0.4}
 \Square(0.4,1.5){0.1}
 \Square(0.5,1.4){0.2}
\end{pspicture}
\end{center}
We now have $n$ squares within the 4 walls of $C$, which is more than we need for Outcome \#1. \qed
\\
\\
The guarantees of the Fat Procedure imply that, for all $n\geq 2$:
\begin{align*}
\propsame(Square\,with\,4\,walls,\,n,\,Squares)\geq\frac{1}{2n}
\end{align*}
which exactly matches the upper bound of Claim \ref{claim:neg-square}.

\subsubsection{Fat rectangle pieces}
When the pieces are allowed to be $R$-fat rectangles, the above lower bound is of course still valid. But when $R\geq 2$, 
the Fat Procedure can give a slightly stronger guarantee - the required value is $2n-1$ instead of $2n$ (the Thin Procedure is unchanged). The required modifications in the Fat Procedure are briefly explained below:
\begin{itemize}
\item In the base case ($n=1$), the cake value is 1 and it is 2-fat, so the procedure returns the entire cake as a single piece.
\item In the main procedure ($n\geq 2$), we first try to cut the cake horizontally to two 2-fat rectangles and apply the Fat Procedure to each of them. For this, we need to find some $y\in[1/2,L-1/2]$ such that the value below $y$ is at least $2k-1$ and the value above $y$ is at least $2(n-k)-1$, for some integer $k\geq 1$. Then, both the part below $y$ and the part above $y$ are 2-fat. By the induction assumption, the Fat Procedure finds $k$ 2-fat-rectangles in the bottom part and $n-k$ 2-fat-rectangles in the top part, so we are done.
\item If we cannot find such $y$, this means that for all $y\in[1/2,L-1/2]$ and every integer $k'$, either the value below $y$ is less than $2k'-1$ or the value above $y$ is less than $2(n-k')-1$. But the latter condition implies that the value below $y$ is more than $2k'$, so the condition becomes: for all $y\in[1/2,L-1/2]$ and every integer $k'$, the value below $y$ is either less than $2k'-1$ or more than $2k'$. So for all $y\in[1/2,L-1/2]$, the value below $y$ is in the open interval $(2k-2,2k-1)$ for some integer $k\geq 1$. This means that the cake looks like this, for some integer $k$:
\begin{center}
\psset{unit=2.5cm}	
\begin{pspicture}(-.5,-0.1)(3,2)
\landcakefat
\yvalueline{0.3}{y_{2k-2}}
\rput(.5,.15){\shortstack{$V=2(k-1)$}}

\yvalueline{1.3}{y_{2k-1}}
\rput(.5,1.45){\shortstack{$V=2(n-k)$}}

\rput(.5,.8){\shortstack{$V=1$}}
 \rput[l](1.2,1.45){$\leftarrow Top$}
 \rput[l](1.2,0.15){$\leftarrow Bottom'$}

 \psline[linecolor=blue,linestyle=dotted](2,0)(2.1,0)(2.1,1.3)(2,1.3)
 \rput[l](2.12,0.2){$\leftarrow Bottom$}
 \psline[linecolor=blue,linestyle=dotted](2.2,0.3)(2.3,0.3)(2.3,1.6)(2.2,1.6)
 \rput[l](2.32,1.4){$\leftarrow Top'$}
\end{pspicture}
\end{center}
where $y_{2k-2}<1/2$ and $y_{2k-1}>L-1/2$. Hence, the parts $Top:=[0,1]\times[y_{2k-1},L]$ and $Bottom':=[0,1]\times[0,y_{2k-2}]$ are both 2-thin rectangles (one of these parts may be empty). $V(Top)=2(n-k)$ and $V(Bottom')=2(k-1)$. This is exactly the same situation as in Case B of the original procedure. We can now apply the Thin Procedure to $Top$ and to $Bottom'$ and proceed according to the outcomes. \qed
\end{itemize}
Therefore, for all $n\geq 2$ and $R\geq 2$:
\begin{align*}
\propsame(Square\,with\,4\,walls,\,n,\,R\,fat\,rectangles)\geq\frac{1}{2n-1}
\end{align*}
which exactly matches the upper bound of Claim \ref{claim:neg-square-fatrects}.

\subsubsection{Remark}
The above procedures work only when the value measures are identical. The main reason is that the Thin procedure may return one of two outcomes. When there is a single value measure, the returned outcome is unique. But when there are different value measures, each value measure may induce a different outcome, and the different outcomes may be incompatible. 

\subsection{Compact cakes of any shape} \label{sub:Selection-algorithms}
As explained in Subsection \ref{sub:intro-any-shape}, when the cake can be of an arbitrary shape, $\prop(C,n,S)$ may be arbitrarily small. Hence it makes sense to assess the fairness of an allocation for a particular agent relative to the total utility that this agent can get in an $S$-piece when given the entire cake. This intuition is captured by the following definition. It is an analogue of Definition \ref{def:abs-prop}, the only difference being that the normalization factor is the cake utility $V^S(C)$ instead of the cake value $V(C)$:
\begin{defn}
\emph{\label{def:rel-prop}(Relative proportionality)} For a cake $C$, a family of usable pieces $S$ and an integer $n\geq1$:

(a) The \emph{relative proportionality level} of $C$, $S$ and $n$, marked $\proprel(C,n,S)$, is the largest fraction $r\in[0,1]$ such that, for every set of  $n$ value measures $(V_i,...,V_n)$,
there exists an $S$-allocation $(X_1,...,X_n)$ for which $\forall i:\,V_i(X_i)/V_i^{S}(C)\geq r$.

(b) The \emph{same-value relative proportionality level} of $C$, $S$ and $n$, marked $\proprelsame(C,n,S)$, is the largest fraction $r\in[0,1]$ such that, for every single value measure $V$, there exists an $S$-allocation $(X_1,...,X_n)$ for which $\forall i:\,V(X_i)/V^S(C)\geq r$.
\end{defn}
Our first result involves parallel squares.
\begin{claim}
\label{claim:8n-6} For every cake $C$ which is a compact subset of $\mathbb{R}^{2}$:
\begin{align*}
\proprel(C,\,n,\,Parallel\,squares)\geq\frac{1}{8n-6}
\end{align*}
\end{claim}
\begin{proof}
We normalize the valuations of all agents such that, for every agent $i$, $V_i^S(C)=8n-6$. We show a division procedure giving each agent a square with a value of at least 1.

(1) \textbf{Preparation:} Each agent $i$ draws a ``best square'' in $C$ --- a square $q_{i}$ that maximizes $V_i$. The existence of such a square can be proved based on the compactness of the set of squares in $C$; this is done in Appendix \ref{sec:Conditions-for-Existence}. By definition of the utility function $V^S$, for every $i$: $V_i(q_i)=V_i^S(C) = 8n-6$.

(2) \textbf{Mark auction:} Let $N:=4n-3$. Ask each agent $i$ to mark, inside $q_i$, $N$ pairwise-disjoint parallel squares with a value of 1 (the agent can do so by using the division procedure for identical value measures described in Subsection  \ref{sec:Half-proportional-division-identical-valuations}: this procedure finds $N$ squares in $q_i$, each of which has a value of at least $V_i(q_i)/(2N) = 1$). Let $Q_i$ be the collection of $N$ squares marked by $i$. 

An agent's bid is interpreted as saying ``I am willing to give my entitlement to a piece of $C$ in exchange for any square in $Q_i$". Our goal now is to allocate to each agent $i$ a single piece from the collection $Q_i$ such that the $n$ allocated pieces are pairwise-disjoint. 

(3) \textbf{Winner selection:} a smallest square in $\cup_i Q_i$ is selected as the winning bid (if there several smallest squares, one is selected arbitrarily). Denote the selected smallest square by $q^*$ and suppose it belongs to agent $i$. Agent $i$ now receives $q^*$ and goes home.

(4) \textbf{Bid adjustment:} For each agent $j\neq i$, remove from $Q_j$ all squares that overlap $q^*$. Since the squares in $Q_j$ are all pairwise-disjoint and not smaller than $q^*$, the number of squares removed is at most 4. This is based on the following geometric fact: given a square $q$, there are at most 4 parallel squares that are larger than $q$, overlap $q$ and do not overlap each other. This is because each square larger than $q$ which overlaps $q$, must overlap one of its 4 corners, so there can be at most 4 such squares:
\begin{center}
\psset{unit=0.55mm}
\begin{pspicture}(80,71)
\Square[linecolor=brown](30,30){20}
\Square[linecolor=blue,linestyle=dashed](5,5){30}
\Square[linecolor=blue,linestyle=dashed](45,45){25}
\Square[linecolor=blue,linestyle=dashed](45,5){28}
\Square[linecolor=blue,linestyle=dashed](10,45){26}
\end{pspicture}
\end{center}
After the removal, each of the remaining $n-1$ agents has a collection of at least $4(n-1)-3$ squares. If only a single agent remains, then his collection contains at least 1 square; allocate this square to the single agent and finish. Otherwise, go back to step (3) and select the next winner from the remaining $n-1$ agents.

Finally, each agent $i\range{1}{n}$ holds a square from the collection $Q_i$. This square has a value of at least 1, which proves the claim.
\end{proof}
The proof of Claim \ref{claim:8n-6} can be generalized to other families of usable pieces:
\begin{claim}
\label{claim:prel}
For a family of pieces $S$, define:
\begin{itemize}
\item $O_S =$ the largest number of pairwise-disjoint \spieces{} that overlap an \spiece{} with a smaller diameter.
\item $\propsame(S,n,S)=\inf_{C\in S}\allowbreak\propsame(C,n,S)$.
\end{itemize}
Then for every compact cake $C$ and every $n\geq1$:
\begin{align*}
\proprel(C,\,n,\,S)\,\,\geq\,\,\propsame(S,\,O_S \cdot (n - 1) + 1,\,S)
\end{align*}
\end{claim}
The proof is exactly the same as that of Claim \ref{claim:8n-6}, with only the constant 4 replaced by $O_S$, 3 replaced by $O_S-1$ and the function $1/(2 N)$ replaced by $\propsame(S,N,S)$.

When $S$ is the family of general (rotated) squares, $O_S = 8$.\footnote{We are grateful to Mark Bennet, Martigan, calculus, Red, Peter Woolfitt
and Dejan Govc for their help in calculating this number in http://math.stackexchange.com/q/1085687/29780
. Image credit: Dejan Govc. Licensed under CC-BY-SA 3.0.} 
\begin{center}
\end{center}
\begin{corollary}\label{claim:14n-12}
For every cake $C$ which is a compact subset of $\mathbb{R}^2$:
\begin{align*}
\proprel(C,\,n,\,Squares)\geq\frac{1}{16n-14}
\end{align*}
\end{corollary}
When $S$ is the family of parallel $R$-fat rectangles, $O_S = \lceil 2R+2 \rceil$:
\begin{corollary}\label{claim:R-fat-rel}
For every cake $C$ which is a compact subset of $\mathbb{R}^{2}$:
\begin{align*}
\proprel(C,\,n,\,Parallel\,R\,fat\,rectangles)
\geq
\frac{1}{2\lceil 2R+2 \rceil (n-1)+2}
\end{align*}
\end{corollary}
For completeness, we present the following trivial result regarding identical value measures:
\begin{claim}
\label{claim:prop-rel-same}
For every cake $C$ which is a compact subset of $\mathbb{R}^{2}$:
\begin{align*}
\proprelsame(C,\,n,\,Squares)=\frac{1}{2n}
\end{align*}
\end{claim}
\begin{proof}
Suppose the value measure of all $n$ agents is $V$. Let $q$ be a best square in $C$ --- a square that maximizes $V$. By definition of the utility function, $V(q)=V^{S}(C)$. Because $q$ is a square,
it is possible to allocate within it $n$ disjoint squares with a value of at least $V(q)/(2n) = V^S(C)/(2n)$.
\end{proof}

\subsubsection{Remarks}
1. The constant $O_S$ --- the largest number of pairwise-disjoint \spieces{} that overlap an \spiece{} with a smaller diameter --- has been used for developing approximation procedures for the problem of finding a maximum non-overlapping set \citep{Marathe1995Simple}.
The approximation factors are not tight. For example, for $n=2$, in step (b) we create $4n-3=5$ axis-parallel squares for each agent, but it is possible to prove that 3 squares per agent suffice for guaranteeing that a pair of disjoint squares exists. Hence, $\proprel(C,\,\n{2},\,Axis\,parallel\,squares)\geq1/6$.
What is the smallest number of squares required to guarantee the existence of $n$ disjoint squares? This open question is interesting because it affects both the proportionality coefficient in our fair cake-cutting
procedure and the approximation coefficient in the maximum disjoint set algorithm of \citet{Marathe1995Simple}. 

2. The Winner Selection procedure (step 3 in the proof) can be used even when the value functions of the agents are not additive or even not monotone (i.e. some parts of the land have negative utility to some agents). As long as every agent can draw $N$ disjoint squares, the procedure guarantees that he receives one of these pieces.

3. \citet{Iyer2009Procedure} present a division procedure in which each agent marks $n$ desired \emph{rectangles}. Their goal is to allocate each agent a single desired rectangle. However, because the rectangles might be arbitrarily thin, it is possible that a single rectangle will intersect all other rectangles. In this case, the procedure fails and no allocations are returned. In contrast, our procedure requires the agents to draw fat pieces. This guarantees that it always succeeds.

\section{Future Work}
\label{sec:future}
The challenge of fair cake-cutting with geometric constraints has a large potential for future research. Some possible directions are suggested below.

We would like to close the gaps between the possibility and impossibility results in Tables \ref{tab:Proportionality-bounds-for-square-cakes}
and \ref{tab:Relative-proportionality-bounds-}. The most interesting gap, in our opinion, is related to an unbounded plane. Our impossibility result assumes that the squares are parallel to each other; if the squares are allowed to rotate arbitrarily, then we do not have an impossibility result, and we do not know whether a proportional division is possible.

Based on our current results, and some other results which we had to omit in order to keep the paper length at a reasonable level, we make the following conjecture:
\begin{conjecture*}
When a cake $C$ is divided to $n$ agents each of whom must receive a fat rectangle, the attainable proportionality is:

\begin{align*}
\frac{1}{2n+Geom(C)}
\end{align*}
Where $Geom(C)$ is a (positive or negative) constant that depends only on the geometric shape of the cake.
\end{conjecture*}
In other words: the move from a one-dimensional division to a two-dimensional division asymptotically decreases the fraction that can be guaranteed
to every agent by a factor of 2.

Another direction is extending the results to cakes in three or more dimensions. We have some preliminary results in this direction.

It may be interesting to study cakes of different topologies, such as cylinders and spheres. We mention, in particular, the following potentially practical open question: \emph{is it possible to divide Earth (a sphere) in a fair-and-square way?}

The two auction types used by our procedures (see Subsection \ref{sub:auctions-intro}) can possibly be generalized. For example, it may be interesting to see what can be attained if each agent receives two entitlements instead of one. This is common in some rural settlements, in which each settler receives two plots --- one for housing and one for farming.

The present paper focused on constraints related to \emph{geometric shape} --- squareness or fatness. One could also consider constraints related to \emph{size}, e.g. by defining the family $S$ to be the family of all rectangles of length above 10 meters or area above 100 square meters. A problem with these constraints is that they are not scalable. For example, if the cake is 200-by-200 meters and there is either a length-minimum of 10 or an area-minimum of 100, then it is impossible to divide the land to more than 400 agents. Governments often cope with this problem by putting an upper bound on the number of people allowed to settle in a certain location. However, this limitation prevents people from taking advantage of new possibilities that become available as the number of people increases. For example, while in rural areas a land-plot of less than 10-by-10 meters may be considered useless because it
cannot be efficiently cultivated, in densely populated cities even a land-plot as small as 2-by-2 meters can be used as a parking lot for rent or as a lemonade selling spot. Limiting the number of agents assures that each agent gets a land-plot that can be cultivated efficiently,
but it may prevent more profitable ways of using the land-plots. In contrast, the squareness/fatness constraint is scalable because it does not depend on the absolute size of the land-cake. It is equally meaningful in both densely and sparsely populated areas.

The division problem can be extended by allowing each agent to have a different geometric constraint (a different family $S$ of usable shapes) or even to have utility functions which combine different families of usable shapes (with an agent-specific weight for each family).

This paper focuses on the basic fairness criterion of proportionality. We already started to study the stronger criterion of \emph{envy-freeness}\citep{SegalHalevi2015EnvyFree}, using substantially different techniques. It may be interesting to survey other results in the cake-cutting literature and see if and how they can be generalized to a two-dimensional cake.

\section{Acknowledgments}
\noindent This research was funded in part by the following institutions: 
\begin{itemize}
	\item The Doctoral Fellowships of Excellence Program at Bar-Ilan University;
	\item The Adar Foundation of the Economics Department and Wolfson Chair
	at Bar-Ilan Univ.;
	\item The ISF grants 1083/13 and 1241/12;
	\item The BSF grant 2012344.
\end{itemize}
These funds had no influence on the contents of the paper and do not necessarily endorse it.

This paper has greatly benefited from the comments and suggestions of anonymous referees on its earlier versions.
Particularly, the three rounds of comments and suggestions received from the editor and reviewer of the Journal of Mathematical Economics were extremely helpful in improving the quality of both substance and exposition of the paper, and we are very grateful for this most valuable input.

This paper has also benefited from comments by participants in the various seminars and conferences in which it was presented: Bar-Ilan game theory seminar, Hebrew University rationality center seminar, BISFAI 2015 conference, Technion industrial engineering department seminar, Technion computer science department seminar, Tel-Aviv university computational geometry seminar, Israel game-theory day, Paris game theory seminar, EuroCG 2016 conference and Corvinus university game-theory seminar.

Many members of the stackexchange.com communities kindly helped in coping with specific issues arising in this paper. In particular, we are grateful to the following members: Abel, Abishanka Saha, Acarchau, AcomPi, Adam P. Goucher, Adriano, aes, Alecos Papadopoulos, András
Salamon, Anthony Carapetis, Aravind, Barry Cipra, Barry Johnson, BKay, Boris Bukh, Boris Novikov, BradHards, Brian M. Scott, calculus, Chris Culter, Christian Blatter, Claudio Fiandrino, coffeemath, Cowboy, D.W., dafinguzman, Dale M, David Eppstein, David Richerby, Debashish,
Dejan Govc, Dömötör Pálvölgyi (domotorp), dtldarek, Emanuele Paolini, Flambino, Foo Barrigno, frafl, George Daccache, Hagen von Eitzen, Henno Brandsma, Henrik, Henry, Herbert Voss, Ilya Bogdanov, Ittay Weiss, J.R., Jan Kyncl, Jared Warner, Jeff Snider, John Gowers, John Hughes, JohnWO, Joonas Ilmavirta, Joseph O'Rourke, Kenny LJ, Kris Williams, Macavity, MappaGnosis, Marc van Leeuwen, Mari-Lou A, Mariusz Nowak, Mark Bennet, Martigan, Martin Van der Linden, Matt, Maulik V, Merk, Marcus Hum (mhum), Michael Albanese, Mikaël Mayer, Mike Jury, mjqxxxx, mvcouwen, Martin von Gagern (MvG), Nick Gill, njaja, Normal Human, NovaDenizen, Peter Franek, Peter Michor, Peter Woolfitt, pew, PhoemueX, Phoenix, Raphael Reitzig, R Salimi, Raymond Manzoni, Realz Slaw, Red, Robert Israel, Ross Millikan, sds, Stephen, Steven Landsburg, Steven Taschuk, StoneyB, tomasz, TonyK, Travis, vkurchatkin, Włodzimierz Holsztyński, Yury, Yuval Filmus and Zur Luria.

Last but not least: Erel is grateful to his wife Galya for the inspiration to this research, the creative ideas and the tasty cakes.
\appendix
\section{Staircase Lemma}\label{sec:staircase-lemma}
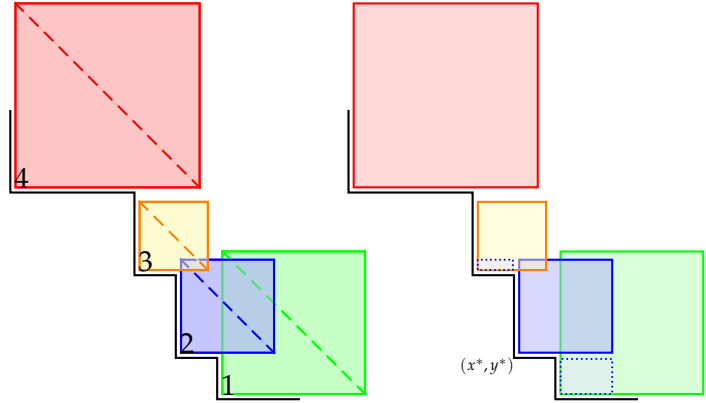
\begin{figure}
\psset{unit=0.55mm,hatchangle=0,hatchsep=2pt}
\def\landcakestaircase{
\psline[linecolor=black](70,0)(50,0)(50,10)(40,10)(40,30)(30,30)(30,50)(0,50)(0,70)
\Square[linecolor=green,fillcolor=green!30,fillstyle=solid,opacity=0.5](51,1){35}
\Square[linecolor=blue,fillcolor=blue!30,fillstyle=solid,opacity=0.5](41,11){23}
\Square[linecolor=orange,fillcolor=yellow!30,fillstyle=solid,opacity=0.5](31,31){17}
\Square[linecolor=red,fillcolor=red!30,fillstyle=solid,opacity=0.5](1,51){45}
}
\centering
\begin{pspicture}(80,100)
\landcakestaircase
\squarediagonal[linecolor=green,fillcolor=green!30,fillstyle=solid](51,1){35}
\squarediagonal[linecolor=blue,fillcolor=blue!30,fillstyle=solid](41,11){23}
\squarediagonal[linecolor=orange,fillcolor=yellow!30,fillstyle=solid](31,31){17}
\squarediagonal[linecolor=red,fillcolor=red!30,fillstyle=solid](1,51){45}
\rput[bl](51,1){1}
\rput[bl](41,11){2}
\rput[bl](31,31){3}
\rput[bl](1,51){4}
\end{pspicture}
\begin{pspicture}(80,70)
\landcakestaircase
\rput[tr](40,10){\tiny{$(x^*,y^*)$}}
\psframe[linecolor=blue,fillcolor=blue!10,fillstyle=solid,hatchcolor=blue,opacity=0.5,linestyle=dotted](31,31)(40,34)
\psframe[linecolor=blue,fillcolor=blue!10,fillstyle=solid,hatchcolor=blue,opacity=0.5,linestyle=dotted](51,1)(64,10)
\end{pspicture}
\protect\caption{\label{fig:staircase-lemma}
\textbf{a.} A staircase with $T=3$ teeth and $T+1=4$ corners and a square in each corner. The diagonal (dashed) represents $t_j$ --- the taxicab distance from the origin to the square center. The square at corner 2 is the winning square as its taxicab distance is minimal (the diagonal is closest to the origin). 
\protect \\
\textbf{b.} The shadow of the winning square (dotted). Note that each rectangular component of the shadow is entirely contained in the square of the corresponding corner.
}
\end{figure}

This appendix proves the following geometric lemma, which is used in Section \ref{sub:alg-2-walls}:

\begin{lemma}
(Staircase Lemma)
Let $C$ be a staircase-shaped polygonal domain with $T$ teeth (and $T+1$ corners). Suppose that in each inner corner $j\range{1}{T+1}$, with coordinates $(x_j,y_j)$, there is a square with side-length $l_j$ (the square $[x_j,x_j+l_j]\times[y_j,y_j+l_j]$).

Define the \emph{shadow} of square $j$ as the intersection of $C$ with the rectangle $[0,x_j+l_j]\times[0,y_j+l_j]$ (this is the area of $C$ that is removed when cutting from the top-right corner of square $j$ towards the bottom and left boundaries of $C$; see Figure \ref{fig:staircase-lemma}/b).

There exists a corner $j$ such that the shadow of square $j$ is contained in the union of the $T+1$ squares.
\end{lemma}
\begin{proof}
For every $j\range{1}{T+1}$, define:
\begin{align*}
t_j := x_j+y_j+l_j
\end{align*}
$t_j$ can be interpreted as the ``taxicab distance'' ($\ell_{1}$ distance) from the origin to the center of the square at corner $j$, or equivalently to its bottom-right or top-left corner; 

Define the \emph{winning square} as the square $j$ \emph{for which $t_j$ is minimized}. Denote its corner coordinates by $(x^*,y^*)$ and its side-length by $l^*$. We now prove that the shadows of the winning square are contained in the other squares. We decompose the shadows of the winning square to pairwise-disjoint rectangular components in the following way.
\begin{itemize}
\item For each corner $j$ to the top-left of the winning square, the component is a rectangle with coordinates: $[x_j,x^{*}]\times[y_j,y^*+l^*]$. Note that this component is empty if $y_j\geq y^*+l^*$, as in  corner 4 in Figure \ref{fig:staircase-lemma}.
\item For each corner $j$ to the bottom-right of the winning square, the component is a rectangle with coordinates: $[x_j,x^*+l^*]\times[y_j,y^*]$. This component is empty if $x_j\geq x^*+l^*$.
\end{itemize}
By definition of the winning square, for every $j\range{1}{T+1}$:
\begin{align}\label{eq:winning}
x_j+y_j+l_j \geq x^*+y^*+l^*
\end{align}
Now:
\begin{itemize}
\item For each corner $j$ to the top-left of the winning square, we have $x_j<x^*$. Combining this with (\ref{eq:winning}) gives $y^*+l^*<y_j+l_j$. Moreover, if the component in that corner is not empty, then necessarily $y_j<y^*+l^*$. Combining this with (\ref{eq:winning}) gives $x^*<x_j+l_j$. Hence, the component $[x_j,x^{*}]\times[y_j,y^*+l^*]$ is contained in the square $[x_j,x_j+l_j]\times[y_j,y_j+l_j]$.
\item For each corner $j$ to the bottom-right of the winning square, we have $y_j<y^*$. Combining this with (\ref{eq:winning}) gives $x^*+l^*<x_j+l_j$. Moreover, if the component in that corner is not empty, then necessarily $x_j<x^*+l^*$. Combining this with (\ref{eq:winning}) gives $y^*<y_j+l_j$. Hence, the component $[x_j,x^*+l^*]\times[y_j,y^*]$ is contained in the square $[x_j,x_j+l_j]\times[y_j,y_j+l_j]$.
\end{itemize}
We proved that every component of the shadow of the winning square is contained in one of the $T+1$ squares; hence, the winning square satisfies the requirement of lemma.
\end{proof}

\section{Non-intersection of Squares in Fat Procedure} \label{sec:dove-hawk}
This appendix proves that in the last step of the Fat Procedure (Subsection \ref{sub:fat-procedures}), the $n$ returned squares do not overlap.

\renewcommand{\landcakefat}{
\psframe[linecolor=brown](0,0)(1,1.1)
\yvalue{1.1}{L}
\yvalue{0}{0}
}
\newcommand{\landcaketopbottomshort}{
 \landcakefat
 \rput[l](1.2,0.95){$\leftarrow Top$}
 \yvalue{0.8}{y_t}
 \psline[linestyle=dotted](0,0.8)(1.1,0.8)
 \yvalue{0.3}{y_b}
 \psline[linestyle=dotted](0,0.3)(1.1,0.3)
 \rput[l](1.2,0.15){$\leftarrow Bottom'$}
}
Recall that at this step, the cake has two distinguished regions: $Bottom':= [0,1]\times[0,y_b]$ and $Top := [0,1]\times[y_t,L]$, both of which are 2-thin rectangles, i.e, $0<y_b<1/2\leq L-1/2 <y_t<L$. In each region there is a family of squares: the \emph{bottom squares} were returned by applying the Thin Procedure to Bottom', and the \emph{top squares} were returned by applying the Thin Procedure to Top. The squares in each family are pairwise-disjoint, but squares from different families might overlap. Our goal is to prove that, after a single largest square is removed, the remaining squares do not overlap, as in the following illustration:
\begin{center}
\begin{pspicture}(-1,-0.1)(3,1.2)
 \landcaketopbottomshort
 \Square(0,0){0.2}
 \Square(0.2,0){0.55}
 \Square(0.75,0){0.25}

 \Square[linestyle=dashed](0,0.5){0.6}
 \Square(0.6,1){0.1}
 \Square(0.7,0.8){0.3}
\end{pspicture}
\end{center}
Recall that, by the specification of the Thin Procedure (Subsection \ref{sub:thin}), the squares in each family can be divided to two types, which we call ``doves'' and ``hawks'':
\begin{itemize}
\item \emph{Doves} are squares generated by Outcome \#1 of the Thin Procedure (or by recursive calls to the Fat Procedure). They are contained within the four walls of their rectangle: the bottom doves are contained in $[0,1]\times[0,y_b]$, and the top doves are contained in $[0,1]\times[y_t,L]$.
\item \emph{Hawks} are squares generated by Outcome \#2 of the Thin Procedure. They are contained within only three walls of their rectangle, with one of their edges adjacent to the wall opposite the open side: the bottom edge of all bottom hawks is at $y=0$, and the top edge of all top hawks is at $y=L$. Moreover, the side-length of each hawk is at most the longer side of its rectangle minus the shorter side of its rectangle; hence, the side-length of all bottom hawks is at most $1-y_b$ and their top edge is in $y\in[y_b,1-y_b]$, and the side-length of all top hawks is at most $1-(L-y_t)$
and their bottom edge is in $y\in[L-(1-L+y_t), y_t]$.
\end{itemize}

\begin{claim}
\label{claim:sum1}
In each family, the sum of the side-lengths of all hawks is at most 1.
\end{claim}
\begin{proof}
The bottom hawks are all bounded in a rectangle of length 1: $[0,1]\times[0,1-y_b]$. Their bottom side is at $y=0$. Since they do not overlap, the sum of their side-lengths must be at most 1. A similar argument holds for the top hawks.
\end{proof}
An immediate corollary of Claim \ref{claim:sum1} is that at most one hawk from each side has side-length more than 1/2. We call each of these two hawks (if it exists) the \emph{dangerous hawk}.

We say that a square $q$ \emph{attacks} a square $q'$ if $q$ is larger than $q'$ and $q$ overlaps $q'$. This is possible only if $q$ and $q'$ are in two opposite families, since the squares in each family are pairwise-disjoint. The doves obviously do not attack each other because $y_b<y_t$. So the only possible attacks are: top hawks attacking bottom hawks/doves, or bottom hawks attacking top hawks/doves.

After removing the largest square, at most one dangerous hawk remains; it is only this hawk that might attack other squares in the opposite side. We now prove that even this dangerous hawk does not attack other squares.
\begin{claim}
No remaining hawk attacks any dove.
\end{claim}
\begin{proof}
We prove that no remaining hawk even enters the rectangle of the opposite family (no remaining bottom-hawk enters $Top$ and no remaining top-hawk enters $Bottom'$). Since all doves are contained in their rectangle, they are safe. There are two cases:

\textbf{Case 1}: $y_t\geq L-y_b$. Then also $y_t\geq1-y_b$. The side-length of all bottom hawks is at most $1-y_b$, so no bottom hawk enters $Top$. If the top dangerous hawk enters $Bottom'$, then its side-length must be more than $L-y_b$, so it is larger than all bottom hawks. Hence, it is the largest square and it is removed.

\textbf{Case 2}: $y_t<L-y_b$. Then also $1-(L-y_t)<1-y_b\leq L-y_b$. The side-length of all top hawks is at most $1-(L-y_t)$, no top hawk enters $Bottom'$. If the bottom dangerous hawk enters $Top$, then its side-length must be more than $y_t$, so it is larger than all top hawks. Hence, it is the largest square and it is removed.
\end{proof}
\begin{claim}
No remaining hawk attacks any hawk.
\end{claim}
\begin{proof}
There are two cases: 

\textbf{Case 1}: There is only one hawk (either bottom \emph{or} top) with side-length more than $1/2$. This is the largest square so it is removed. The remaining squares have side-length at most $1/2$ and thus do not attack each other.

\textbf{Case 2}: There are two hawks (bottom \emph{and} top) with side-length more than $1/2$. W.l.o.g, assume the top hawk is the largest, with a side-length of $h_t\geq h_b$. By Claim \ref{claim:sum1}, the sum of the side-lengths of all other top hawks is at most $1-h_t$, hence the side-length of any single other top hawk is at most $1-h_t$ which is at most $1-h_b$ which is at most $L-h_b$. Hence, the bottom side of all remaining top hawks is above $h_b$. Hence the remaining bottom hawk cannot attack any of them. 
\end{proof}

\section{Existence of Best pieces}
\label{sec:Conditions-for-Existence}

This appendix shows how to prove the existence of a usable piece with
a maximum value (this is used in the proof of Claim \ref{claim:8n-6}).
We start by defining a metric space of pieces (recall that a \emph{piece
}is a Borel subset of $\mathbb{R}^{2}$ and $Area$ is its Lebesgue
measure).
\begin{defn}
The \emph{symmetric difference (SD) pseudo-metric }is defined by:

\[
d_{SD}(X,Y)=Area[(X\setminus Y)\cup(Y\setminus X)]
\]

\end{defn}
$d_{SD}$ is not a metric because there may be different pieces whose
symmetric difference has an area of 0, e.g, a square with an additional
point and a square with a missing point. To make SD a metric, we consider
only pieces $X$ that are \emph{regularly open}, i.e, the interior
of the closure of themselves: $X=Int[Cl[X]]$. 
\begin{claim}
SD is a metric on the set of all regularly-open pieces.\end{claim}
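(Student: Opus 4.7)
The plan is to verify the three non-trivial axioms for a metric, since non-negativity and symmetry are immediate from the definition. So I would need to establish: (i) $d_{SD}(X,X)=0$, which is obvious; (ii) the triangle inequality; and (iii) the separation axiom $d_{SD}(X,Y)=0 \Rightarrow X=Y$ on regularly-open pieces. Only (iii) uses the regular-openness hypothesis; the rest works for arbitrary Borel pieces.

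For the triangle inequality, the standard set-theoretic identity
\[
X \triangle Z \;\subseteq\; (X \triangle Y) \cup (Y \triangle Z)
\]
combined with monotonicity and subadditivity of the Lebesgue measure immediately gives $d_{SD}(X,Z) \le d_{SD}(X,Y)+d_{SD}(Y,Z)$. This step is routine and I would dispatch it in one line.

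The main work is the separation axiom, which is exactly where the regularly-open assumption enters. I would argue by contrapositive: assume $X\neq Y$ (both regularly open) and produce an open ball inside $X \triangle Y$, which forces $d_{SD}(X,Y)>0$. Without loss of generality, pick $p \in X \setminus Y$. Since $X$ is open, some ball $B(p,r)$ lies in $X$. Since $Y$ is regularly open, $p \notin Y = \mathrm{Int}(\mathrm{Cl}(Y))$, so no neighborhood of $p$ is contained in $\mathrm{Cl}(Y)$; hence $B(p,r)$ contains a point $q \notin \mathrm{Cl}(Y)$. The complement of $\mathrm{Cl}(Y)$ is open, so there is an $\varepsilon>0$ with $B(q,\varepsilon) \subseteq B(p,r) \setminus \mathrm{Cl}(Y) \subseteq X \setminus Y$. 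Therefore $(X\setminus Y) \cup (Y\setminus X)$ contains a ball of positive Lebesgue measure, giving $d_{SD}(X,Y)>0$.

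The hard part is really only making sure the regular-openness is used correctly: the two pathological examples given just before the claim (a square with a stray point added or removed) show that mere openness or even Borel-ness is insufficient, and that taking $X = \mathrm{Int}(\mathrm{Cl}(X))$ is what rules out such "hair" or "puncture" defects. Once this observation is made precise as above, the proof is essentially finished.
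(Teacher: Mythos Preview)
Your proof is correct. Both you and the paper reduce the separation axiom to the fact that a nonempty open subset of $\mathbb{R}^{2}$ has positive Lebesgue measure, and both invoke regular-openness to pass between $Y$ and $Cl(Y)$. The paper argues directly: from $Area[X\setminus Y]=0$ it shows the open set $X\setminus Cl[Y]$ has area $0$ and is therefore empty, giving $X\subseteq Cl[Y]$; symmetrizing yields $Cl[X]=Cl[Y]$, and then $Int$ plus regular-openness give $X=Y$. You argue by contrapositive, invoking regular-openness of $Y$ up front (via $p\notin Y=Int(Cl(Y))$, so every neighbourhood of $p$ meets the complement of $Cl(Y)$) to produce an explicit ball inside $X\setminus Y$. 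These are minor reorderings of the same idea; your version is a bit more pointwise, the paper's a bit more set-theoretic. You also spell out the triangle inequality via $X\triangle Z\subseteq(X\triangle Y)\cup(Y\triangle Z)$, which the paper leaves implicit.
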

\begin{proof}
\footnote{We are thankful to Tony K., Phoemue X., Dafin Guzman, Henno Brandsma
and Ittay Weiss for contributing to this proof via discussions in
the math.stackexchange.com website (http://math.stackexchange.com/a/1099461/29780).} Let $X$ and $Y$ be two regularly-open sets such that $d_{SD}(X,Y)=0$.
We prove that $X=Y$.

$d_{SD}(X,Y)=0$ implies $Area[X\setminus Y]=Area[Y\setminus X]=0$.

$Y\subseteq Cl[Y]$ so $X\setminus Y\supseteq X\setminus Cl[Y]$.
Hence also \textbf{$Area[X\setminus Cl[Y]]=0$.}

$X$ is open and $Cl[Y]$ is closed; hence $X\setminus Cl[Y]$ is
open (it is an intersection of two open sets).

The only open set with an area of 0 is the empty set (because any
non-empty open set contains a ball with a positive measure). Hence:\textbf{
$X\setminus Cl[Y]=\emptyset$.}

Equivalently: $X\,\subseteq\,Cl[Y]$.

By taking the Cl of both sides: $Cl[X]\subseteq Cl[Y]$

By a symmetric argument: $Cl[Y]\subseteq Cl[X]$

Hence: $Cl[Y]=Cl[X]$

By taking the Int of both sides and by the fact that they are regularly-open:
$Y=X$.
\end{proof}
Thus when we allocate a square we actually allocate only its interior.
This has no effect on the utility of the agents since the boundary
has an area of 0 and so its value is 0 for all agents.
\begin{claim}
Let $D$ be the metric space defined by $d_{SD}$. Let $V$ be a measure
absolutely continuous with respect to area. Then $V$ is a uniformly
continuous function from $D$ to $\mathbb{R}$.\end{claim}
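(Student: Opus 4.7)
The plan is to reduce uniform continuity of $V$ on $(D, d_{SD})$ to the standard $\epsilon$–$\delta$ characterization of absolute continuity. Recall that since $V$ is a finite measure on $C$ that is absolutely continuous with respect to Lebesgue measure (the value density $v$ is integrable and bounded, and nonzero only on a bounded subset of $C$), the Radon–Nikodym/absolute continuity lemma gives: for every $\epsilon > 0$ there exists $\delta > 0$ such that every Borel set $A \subseteq \mathbb{R}^{2}$ with $\mathrm{Area}(A) < \delta$ satisfies $V(A) < \epsilon$. This $\delta$ depends only on $\epsilon$ and on $V$, not on any specific pair $X, Y$, so it is exactly what is required for uniform continuity.

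The key algebraic step is to control $|V(X) - V(Y)|$ by $V(X \triangle Y)$, where $X \triangle Y = (X \setminus Y) \cup (Y \setminus X)$. Since $V$ is a (non-negative, finite) measure, I would decompose
\[
V(X) = V(X \cap Y) + V(X \setminus Y), \qquad V(Y) = V(X \cap Y) + V(Y \setminus X),
\]
so that $V(X) - V(Y) = V(X \setminus Y) - V(Y \setminus X)$. Taking absolute values and using non-negativity and the disjointness of $X \setminus Y$ and $Y \setminus X$ yields
\[
|V(X) - V(Y)| \;\leq\; V(X \setminus Y) + V(Y \setminus X) \;=\; V(X \triangle Y).
\]

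Now I would combine the two observations. Given $\epsilon > 0$, pick $\delta > 0$ from the absolute continuity property. Suppose $X, Y \in D$ satisfy $d_{SD}(X, Y) < \delta$; by definition of the metric this means $\mathrm{Area}(X \triangle Y) < \delta$. Applying absolute continuity to the Borel set $A = X \triangle Y$ gives $V(X \triangle Y) < \epsilon$, and together with the bound above we conclude $|V(X) - V(Y)| < \epsilon$. Since $\delta$ was chosen independently of the particular pair $(X, Y)$, this is uniform continuity of $V \colon D \to \mathbb{R}$.

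There is no real obstacle here: the statement is essentially a reformulation of absolute continuity in the language of the SD-metric, and the only thing to check carefully is that the symmetric difference of two regularly open sets is Borel (which is automatic, as both sets are open, hence Borel) so that absolute continuity applies to it.
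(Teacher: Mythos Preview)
Your proof is correct and follows essentially the same approach as the paper: both invoke the $\epsilon$--$\delta$ characterization of absolute continuity and use the identity $V(X)-V(Y)=V(X\setminus Y)-V(Y\setminus X)$ to bound $|V(X)-V(Y)|$. The only cosmetic difference is that you apply absolute continuity once to the full symmetric difference $X\triangle Y$, whereas the paper applies it separately to $X\setminus Y$ and $Y\setminus X$; your version is marginally tidier.
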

\begin{proof}
The fact that $V$ is an absolutely continuous measure implies that,
for every $\epsilon>0$ there is a $\delta>0$ such that every piece
$X$ with $Area(X)<\delta$ has $V(X)<\epsilon$ \citep[Proposition 15.5 on page 251]{Nielsen1997Introduction}.
Hence, for every two pieces $X$ and $Y$, if $d_{SD}(X,Y)<\delta$
then $Area(X\setminus Y)<\delta$ and $Area(Y\setminus X)<\delta$,
then $V(X\setminus Y)<\epsilon$ and $V(Y\setminus X)<\epsilon$,
then $|V(X)-V(Y)|=|V(X\setminus Y)-V(Y\setminus X)|<\epsilon$.\end{proof}
\begin{claim}
Let $V$ be a measure absolutely continuous with respect to area and
$Q$ a set of pieces which is compact in the SD metric space. Then
there exists a piece $q\in Q$ for which $V$ is maximized.\end{claim}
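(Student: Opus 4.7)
The plan is to invoke the extreme value theorem: a real-valued continuous function on a nonempty compact topological space attains its supremum. The pieces of the puzzle are already assembled in the preceding claims, so the proof is essentially a one-line reduction.

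First I would note that by the previous claim, the value measure $V$, viewed as a function from the metric space $D = (\text{regularly-open pieces}, d_{SD})$ to $\mathbb{R}$, is uniformly continuous, and hence in particular continuous. Since $Q$ is assumed to be compact as a subspace of $D$, the restriction $V|_Q : Q \to \mathbb{R}$ is a continuous function on a compact metric space.

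Next I would apply the extreme value theorem: the image $V(Q) \subseteq \mathbb{R}$ is the continuous image of a compact set, hence compact, hence closed and bounded. In particular $\sup_{q \in Q} V(q)$ is finite and belongs to $V(Q)$, so there exists some $q^{*} \in Q$ with $V(q^{*}) = \sup_{q \in Q} V(q)$, which is the desired maximizer.

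There is no real obstacle here; the only mild subtlety is the implicit assumption that $Q$ is nonempty (otherwise the supremum is $-\infty$ and there is nothing to maximize over), which is harmless in every application in the paper since one can always exhibit at least one usable piece contained in the cake. The work has already been done in setting up the correct metric structure: passing from arbitrary Borel pieces to regularly-open representatives, so that $d_{SD}$ is a genuine metric, and checking that absolute continuity of $V$ with respect to Lebesgue measure upgrades continuity in $d_{SD}$ to uniform continuity. Once those are in hand, the present claim is just the standard compactness-continuity-extremum argument.
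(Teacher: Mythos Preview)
Your proof is correct and follows exactly the same approach as the paper: cite the previous claim to get (uniform) continuity of $V$ on the SD metric space, then apply the extreme value theorem on the compact set $Q$. The only addition is your remark about $Q$ being nonempty, which the paper leaves implicit.
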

\begin{proof}
By the previous claim, $V$ is a uniformly continuous and hence a
continuous real-valued function. By the extreme value theorem, it
attains a maximum in every compact set.
\end{proof}
The value measures considered in this paper are always absolutely continuous with respect to area. Hence, to prove that a certain set of pieces $Q$ contains a ``best piece'' it is sufficient to prove
that $Q$ is compact. We do this now for the special case in which $Q$ is the set of open squares contained in a given cake (note that the same proof could be used for the set of closed squares):
\begin{claim}
Let $C$ be a closed, bounded subset of $\mathbb{R}^{2}$. Let $Q$
be the set of all open squares contained in $C$. Then $Q$ is compact
in the SD metric space.\end{claim}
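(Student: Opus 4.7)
The plan is to realize $Q$ as the continuous image of a compact finite-dimensional parameter space. Parameterize each open square by $p=(c,\ell,\theta)\in\mathbb{R}^{2}\times[0,\infty)\times[0,\pi/2]$, where $c$ is the center, $\ell$ the side-length, and $\theta$ the rotation angle (so $\ell=0$ corresponds to the empty set, and the two endpoints $\theta=0,\pi/2$ represent the same orientation). Write $\sigma(p)$ for the corresponding open square and set $P_{C}=\{p:\sigma(p)\subseteq C\}$. Because $C$ is bounded, $P_{C}$ lies in a bounded box (centers in $C$, side-lengths at most the diameter of $C$, angles in $[0,\pi/2]$), so it will suffice to show that $P_{C}$ is closed in $\mathbb{R}^{4}$ and that $\sigma$ is continuous into $(D,d_{SD})$.

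The key observation is that $y\in\sigma(p)$ iff $\|R_{-\theta}(y-c)\|_{\infty}<\ell/2$, a \emph{strict} inequality (where $R_{-\theta}$ is the rotation matrix by $-\theta$). Suppose $p_{n}\to p$ with each $\sigma(p_{n})\subseteq C$ and fix $x\in\sigma(p)$. Since $\|R_{-\theta_{n}}(x-c_{n})\|_{\infty}-\ell_{n}/2$ converges to $\|R_{-\theta}(x-c)\|_{\infty}-\ell/2<0$, one has $x\in\sigma(p_{n})\subseteq C$ for all large $n$, so $x\in C$; hence $\sigma(p)\subseteq C$ and $p\in P_{C}$, showing $P_{C}$ is closed (and therefore compact). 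The same strict-inequality argument shows $\sigma$ is continuous: for any $x$ not on the boundary of $\sigma(p)$, the defining inequality is either strictly satisfied or strictly violated, and this persists under small perturbations of $p$, so $\chi_{\sigma(p_{n})}(x)\to\chi_{\sigma(p)}(x)$. The boundary of $\sigma(p)$ has Lebesgue measure zero, and the indicators are all dominated by $\chi_{K}$ for a single bounded set $K$, so the dominated convergence theorem gives
\[
d_{SD}(\sigma(p_{n}),\sigma(p))=\|\chi_{\sigma(p_{n})}-\chi_{\sigma(p)}\|_{L^{1}}\longrightarrow 0.
\]

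Combining these facts, $Q=\sigma(P_{C})$ is the continuous image of a compact set in a metric space, hence compact, as required. The main subtlety is that pointwise convergence of the indicator functions can fail on the boundary of $\sigma(p)$, so one cannot hope for uniform convergence; the remedy is the standard one of noting that the boundary of a square is Lebesgue-null and then invoking dominated convergence. Everything else reduces to the elementary observation that the defining inequality of an open square is strict in each coordinate of its parameter.
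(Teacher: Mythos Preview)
Your argument is correct in spirit and follows the same overall strategy as the paper: parameterize squares by a finite-dimensional space, show the relevant parameter set is compact, and show the map to squares is continuous in the SD metric. The paper parameterizes by a pair of opposite corners $(A,B)\in C\times C$ and argues sequential compactness directly, bounding $d_{SD}(q_i,q^*)$ by an explicit product of corner distances; you instead use center/side-length/angle and invoke dominated convergence on indicators. Your DCT argument is cleaner and more portable (it immediately generalizes to any family of shapes with Lebesgue-null boundary), while the paper's direct estimate is more elementary.

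One small gap to patch: your claim that $P_C$ is bounded because ``centers lie in $C$'' fails at $\ell=0$, since $(c,0,\theta)\in P_C$ for \emph{every} $c\in\mathbb{R}^2$ (the empty square is contained in $C$ regardless of where you put its nominal center). The fix is trivial---replace $P_C$ by $P_C\cap\bigl(C\times[0,\operatorname{diam}C]\times[0,\pi/2]\bigr)$, which has the same image under $\sigma$ (any nonempty open square in $C$ has its center in $C$, and for the empty square you may choose any $c\in C$). Relatedly, when $\ell=0$ the set on which pointwise convergence of indicators may fail is not the boundary of $\sigma(p)=\emptyset$ but the single point $\{c\}$; this is still Lebesgue-null, so DCT goes through, but your sentence ``for any $x$ not on the boundary of $\sigma(p)$'' should be amended to cover this degenerate case.
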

\begin{proof}
It is sufficient to prove that $Q$ is sequentially compact, i.e.
every infinite sequence of open squares in $C$ has a subsequence
converging to an open square in $C$. Let $\{q_{i}\}_{i=1}^{\infty}$
be an infinite sequence of open squares in $C$. For every $q_{i}$,
let $(A_{i},B_{i})$ be a pair of opposite corners. Because $C$ is
compact, it contains $Cl[q]$ and hence contains the points $A_{i}$
and $B_{i}$. Hence the infinite sequence of pairs of points, $\{(A_{i},B_{i})\}_{i=1}^{\infty}$,
is an infinite sequence in $C\times C$. $C\times C$ is compact because
it is a finite product of compact sets. Hence, the sequence has a
subsequence converging to a limit point $(A^{*},B^{*})\in C$. From
now on we assume that $\{(A_{i},B_{i})\}_{i=1}^{\infty}$ is that
converging subsequence. Let $q^{*}$ be the open square having $A^{*}$
and $B^{*}$ as two opposite corners. We show that: (a) $q^{*}$ is
an open square in $C$; (b) The subsequence $\{q_{i}\}_{i=1}^{\infty}$
converges to $q^{*}$.

(a) $q^{*}$ is a obviously an open square by definition. We have
to show that each point in $q^{*}$ is also a point of $C$. To every
square $q_{i}$, attach a local coordinate system in which corner
$A_{i}$ has coordinates $0,0$ and corner $B_{i}$ has coordinates
$1,1$ and every other point in $Cl[q_{i}]$ has coordinates in $[0,1]\times[0,1]$.
For every coordinate $(x,y)\in[0,1]\times[0,1]$, let $q_{i}(x,y)$
be the unique point with these coordinates in $Cl[q_{i}]$ (e.g. $A_{i}=q_{i}(0,0)$
and $B_{i}=q_{i}(1,1)$). 

For every $(x,y)$, The sequence $\{q_{i}(x,y)\}_{i=1}^{\infty}$ is a sequence of points which are all in $C$, and they converge to
$q^{*}(x,y)$. Since $C$ is closed, $q^{*}(x,y)\in C$.

(b) For every $i$, the area of the symmetric difference between $q^{*}$
and $q_{i}$ is bounded and satisfies the following inequality:
\[
d_{SD}(q^{*},q_{i})\leq4\cdot\max(d(A^{*},A_{i}),d(B^{*},B_{i}))\cdot\max(d(A^{*},B^{*}),d(A^{*},B_{i}),d(A_{i},B^{*}),d(A_{i},B_{i}))
\]
 Since all distances are bounded and $d(A^{*},A_{i})$, $d(B^{*},B_{i})$
converge to 0, the same is true for $d_{SD}(q^{*},q_{i})$. Hence,
the subsequence $\{q_{i}\}_{i=1}^{\infty}$ converges to $q$. 

The previous paragraph proved that $Q$ is sequentially compact. Hence
it is compact.
\end{proof}
In a similar way it is possible to prove similar results for other
families $S$, such as the family of $R$-fat rectangles or cubes.

\section*{References}
\bibliographystyle{elsarticle-num-names-alpha}
\bibliography{FairAndSquare}

\begin{thebibliography}{75}
\providecommand{\natexlab}[1]{#1}
\providecommand{\url}[1]{\texttt{#1}}
\providecommand{\urlprefix}{URL }
\expandafter\ifx\csname urlstyle\endcsname\relax
  \providecommand{\doi}[1]{doi:\discretionary{}{}{}#1}\else
  \providecommand{\doi}[1]{doi:\discretionary{}{}{}\begingroup
  \urlstyle{rm}\url{#1}\endgroup}\fi
\providecommand{\bibinfo}[2]{#2}

\bibitem[{Abe(2012)}]{Abe2012Geometric}
\bibinfo{author}{K.~Abe}, \bibinfo{title}{{A geometric approach to
  temptation}}, \bibinfo{journal}{Journal of Mathematical Economics}
  \bibinfo{volume}{48}~(\bibinfo{number}{2}) (\bibinfo{year}{2012})
  \bibinfo{pages}{92--97}, ISSN \bibinfo{issn}{03044068}.

\bibitem[{Agarwal et~al.(1995)Agarwal, Katz, and Sharir}]{Agarwal1995Computing}
\bibinfo{author}{P.~K. Agarwal}, \bibinfo{author}{M.~J. Katz},
  \bibinfo{author}{M.~Sharir}, \bibinfo{title}{{Computing depth orders for fat
  objects and related problems}}, \bibinfo{journal}{Computational Geometry}
  \bibinfo{volume}{5}~(\bibinfo{number}{4}) (\bibinfo{year}{1995})
  \bibinfo{pages}{187--206}, ISSN \bibinfo{issn}{09257721}.

\bibitem[{AlvarezValdes(2002)}]{AlvarezValdes2002Tabu}
\bibinfo{author}{R.~AlvarezValdes}, \bibinfo{title}{{A tabu search algorithm
  for large-scale guillotine (un)constrained two-dimensional cutting
  problems}}, \bibinfo{journal}{Computers \& Operations Research}
  \bibinfo{volume}{29}~(\bibinfo{number}{7}) (\bibinfo{year}{2002})
  \bibinfo{pages}{925--947}, ISSN \bibinfo{issn}{03050548}.

\bibitem[{Arzi(2012)}]{Arzi2012Cake}
\bibinfo{author}{O.~Arzi}, \bibinfo{title}{{Cake Cutting: Achieving Efficiency
  While Maintaining Fairness}}, Master's thesis, \bibinfo{school}{Bar-Ilan
  University}, \bibinfo{note}{under the supervision of Prof. Yonatan Aumann},
  \bibinfo{year}{2012}.

\bibitem[{Azrieli and Shmaya(2014)}]{Azrieli2014Rental}
\bibinfo{author}{Y.~Azrieli}, \bibinfo{author}{E.~Shmaya},
  \bibinfo{title}{{Rental harmony with roommates}}, \bibinfo{journal}{Journal
  of Economic Theory} \bibinfo{volume}{153} (\bibinfo{year}{2014})
  \bibinfo{pages}{128--137}, ISSN \bibinfo{issn}{00220531}.

\bibitem[{Bar-Yehuda and Ben-Hanoch(1996)}]{BarYehuda1996Lineartime}
\bibinfo{author}{R.~Bar-Yehuda}, \bibinfo{author}{E.~Ben-Hanoch},
  \bibinfo{title}{{A linear-time algorithm for covering simple polygons with
  similar rectangles}}, \bibinfo{journal}{International Journal of
  Computational Geometry \& Applications} \bibinfo{volume}{6.01}
  (\bibinfo{year}{1996}) \bibinfo{pages}{79--102}.

\bibitem[{Barbanel(2005)}]{Barbanel2005Geometry}
\bibinfo{author}{J.~B. Barbanel}, \bibinfo{title}{{The Geometry of Efficient
  Fair Division}}, \bibinfo{publisher}{Cambridge University Press}, ISBN
  \bibinfo{isbn}{0521842484}, \bibinfo{year}{2005}.

\bibitem[{Barbanel and Brams(2004)}]{Barbanel2004Cake}
\bibinfo{author}{J.~B. Barbanel}, \bibinfo{author}{S.~J. Brams},
  \bibinfo{title}{{Cake division with minimal cuts: envy-free procedures for
  three persons, four persons, and beyond}}, \bibinfo{journal}{Mathematical
  Social Sciences} \bibinfo{volume}{48}~(\bibinfo{number}{3})
  (\bibinfo{year}{2004}) \bibinfo{pages}{251--269}, ISSN
  \bibinfo{issn}{01654896}.

\bibitem[{Barbanel et~al.(2009)Barbanel, Brams, and
  Stromquist}]{Barbanel2009Cutting}
\bibinfo{author}{J.~B. Barbanel}, \bibinfo{author}{S.~J. Brams},
  \bibinfo{author}{W.~Stromquist}, \bibinfo{title}{{Cutting a pie is not a
  piece of cake}}, \bibinfo{journal}{American Mathematical Monthly}
  \bibinfo{volume}{116}~(\bibinfo{number}{6}) (\bibinfo{year}{2009})
  \bibinfo{pages}{496--514}.

\bibitem[{Beck(1987)}]{Beck1987Constructing}
\bibinfo{author}{A.~Beck}, \bibinfo{title}{{Constructing a Fair Border}},
  \bibinfo{journal}{The American Mathematical Monthly}
  \bibinfo{volume}{94}~(\bibinfo{number}{2}) (\bibinfo{year}{1987})
  \bibinfo{pages}{157--162}, ISSN \bibinfo{issn}{00029890}.

\bibitem[{Bei et~al.(2012)Bei, Chen, Hua, Tao, and Yang}]{Bei2012Optimal}
\bibinfo{author}{X.~Bei}, \bibinfo{author}{N.~Chen}, \bibinfo{author}{X.~Hua},
  \bibinfo{author}{B.~Tao}, \bibinfo{author}{E.~Yang}, \bibinfo{title}{{Optimal
  Proportional Cake Cutting with Connected Pieces}}, in:
  \bibinfo{booktitle}{Proceedings of the 26th AAAI Conference on Artificial
  Intelligence (AAAI-12)}, \bibinfo{pages}{1263--1269}, \bibinfo{year}{2012}.

\bibitem[{Berliant and Dunz(2004)}]{Berliant2004Foundation}
\bibinfo{author}{M.~Berliant}, \bibinfo{author}{K.~Dunz}, \bibinfo{title}{{A
  foundation of location theory: existence of equilibrium, the welfare
  theorems, and core}}, \bibinfo{journal}{Journal of Mathematical Economics}
  \bibinfo{volume}{40}~(\bibinfo{number}{5}) (\bibinfo{year}{2004})
  \bibinfo{pages}{593--618}, ISSN \bibinfo{issn}{03044068}.

\bibitem[{Berliant and Raa(1988)}]{Berliant1988Foundation}
\bibinfo{author}{M.~Berliant}, \bibinfo{author}{T.~Raa}, \bibinfo{title}{{A
  foundation of location theory: Consumer preferences and demand}},
  \bibinfo{journal}{Journal of Economic Theory}
  \bibinfo{volume}{44}~(\bibinfo{number}{2}) (\bibinfo{year}{1988})
  \bibinfo{pages}{336--353}, ISSN \bibinfo{issn}{00220531}.

\bibitem[{Berliant et~al.(1992)Berliant, Thomson, and Dunz}]{Berliant1992Fair}
\bibinfo{author}{M.~Berliant}, \bibinfo{author}{W.~Thomson},
  \bibinfo{author}{K.~Dunz}, \bibinfo{title}{{On the fair division of a
  heterogeneous commodity}}, \bibinfo{journal}{Journal of Mathematical
  Economics} \bibinfo{volume}{21}~(\bibinfo{number}{3}) (\bibinfo{year}{1992})
  \bibinfo{pages}{201--216}, ISSN \bibinfo{issn}{03044068}.

\bibitem[{Bouveret and Lema\^{i}tre(2015)}]{Bouveret2015Characterizing}
\bibinfo{author}{S.~Bouveret}, \bibinfo{author}{M.~Lema\^{i}tre},
  \bibinfo{title}{{Characterizing conflicts in fair division of indivisible
  goods using a scale of criteria}}, \bibinfo{journal}{Autonomous Agents and
  Multi-Agent Systems}  (\bibinfo{year}{2015}) \bibinfo{pages}{1--32}.

\bibitem[{Brams(2007)}]{Brams2007Mathematics}
\bibinfo{author}{S.~J. Brams}, \bibinfo{title}{{Mathematics and Democracy:
  Designing Better Voting and Fair-Division Procedures}},
  \bibinfo{publisher}{Princeton University Press}, \bibinfo{edition}{first
  edition} edn., ISBN \bibinfo{isbn}{0691133212}, \bibinfo{year}{2007}.

\bibitem[{Brams et~al.(2008)Brams, Jones, and Klamler}]{Brams2008Proportional}
\bibinfo{author}{S.~J. Brams}, \bibinfo{author}{M.~Jones},
  \bibinfo{author}{C.~Klamler}, \bibinfo{title}{{Proportional pie-cutting}},
  \bibinfo{journal}{International Journal of Game Theory}
  \bibinfo{volume}{36}~(\bibinfo{number}{3-4}) (\bibinfo{year}{2008})
  \bibinfo{pages}{353--367}.

\bibitem[{Brams and Taylor(1996)}]{Brams1996Fair}
\bibinfo{author}{S.~J. Brams}, \bibinfo{author}{A.~D. Taylor},
  \bibinfo{title}{{Fair Division: From Cake-Cutting to Dispute Resolution}},
  \bibinfo{publisher}{Cambridge University Press}, ISBN
  \bibinfo{isbn}{0521556449}, \bibinfo{year}{1996}.

\bibitem[{Br\^{a}nzei and Miltersen(2015)}]{Branzei2015Dictatorship}
\bibinfo{author}{S.~Br\^{a}nzei}, \bibinfo{author}{P.~B. Miltersen},
  \bibinfo{title}{{A Dictatorship Theorem for Cake Cutting}}, in:
  \bibinfo{booktitle}{Proceedings of the 24th International Conference on
  Artificial Intelligence}, IJCAI'15, \bibinfo{publisher}{AAAI Press}, ISBN
  \bibinfo{isbn}{978-1-57735-738-4}, \bibinfo{pages}{482--488},
  \bibinfo{year}{2015}.

\bibitem[{Budish(2011)}]{Budish2011Combinatorial}
\bibinfo{author}{E.~Budish}, \bibinfo{title}{{The Combinatorial Assignment
  Problem: Approximate Competitive Equilibrium from Equal Incomes}},
  \bibinfo{journal}{Journal of Political Economy}
  \bibinfo{volume}{119}~(\bibinfo{number}{6}) (\bibinfo{year}{2011})
  \bibinfo{pages}{1061--1103}, ISSN \bibinfo{issn}{00223808}.

\bibitem[{Cantillon and Rangel(2002)}]{Cantillon2002Graphical}
\bibinfo{author}{E.~Cantillon}, \bibinfo{author}{A.~Rangel}, \bibinfo{title}{{A
  graphical analysis of some basic results in social choice}}
  \bibinfo{volume}{19}~(\bibinfo{number}{3}) (\bibinfo{year}{2002})
  \bibinfo{pages}{587--611}.

\bibitem[{Caragiannis et~al.(2012)Caragiannis, Kaklamanis, Kanellopoulos, and
  Kyropoulou}]{Caragiannis2012Efficiency}
\bibinfo{author}{I.~Caragiannis}, \bibinfo{author}{C.~Kaklamanis},
  \bibinfo{author}{P.~Kanellopoulos}, \bibinfo{author}{M.~Kyropoulou},
  \bibinfo{title}{{The Efficiency of Fair Division}}, \bibinfo{journal}{Theory
  of Computing Systems} \bibinfo{volume}{50}~(\bibinfo{number}{4})
  (\bibinfo{year}{2012}) \bibinfo{pages}{589--610}, ISSN
  \bibinfo{issn}{1432-4350}.

\bibitem[{Caragiannis et~al.(2011)Caragiannis, Lai, and
  Procaccia}]{Caragiannis2011Towards}
\bibinfo{author}{I.~Caragiannis}, \bibinfo{author}{J.~K. Lai},
  \bibinfo{author}{A.~D. Procaccia}, \bibinfo{title}{{Towards more expressive
  cake cutting}}, in: \bibinfo{booktitle}{Proceedings of the 22nd International
  Joint Conference on Artificial Intelligence (IJCAI-11)},
  \bibinfo{publisher}{AAAI Press}, ISBN \bibinfo{isbn}{978-1-57735-513-7},
  \bibinfo{pages}{127--132}, \bibinfo{year}{2011}.

\bibitem[{Chambers(2005)}]{Chambers2005Allocation}
\bibinfo{author}{C.~P. Chambers}, \bibinfo{title}{{Allocation rules for land
  division}}, \bibinfo{journal}{Journal of Economic Theory}
  \bibinfo{volume}{121}~(\bibinfo{number}{2}) (\bibinfo{year}{2005})
  \bibinfo{pages}{236--258}, ISSN \bibinfo{issn}{00220531}.

\bibitem[{Chatterjee et~al.(2016)Chatterjee, Peters, and
  Storcken}]{Chatterjee2016Locating}
\bibinfo{author}{S.~Chatterjee}, \bibinfo{author}{H.~Peters},
  \bibinfo{author}{T.~Storcken}, \bibinfo{title}{{Locating a public good on a
  sphere}}, \bibinfo{journal}{Economics Letters} \bibinfo{volume}{139}
  (\bibinfo{year}{2016}) \bibinfo{pages}{46--48}, ISSN
  \bibinfo{issn}{01651765}.

\bibitem[{Chen et~al.(2013)Chen, Lai, Parkes, and Procaccia}]{Chen2013Truth}
\bibinfo{author}{Y.~Chen}, \bibinfo{author}{J.~K. Lai}, \bibinfo{author}{D.~C.
  Parkes}, \bibinfo{author}{A.~D. Procaccia}, \bibinfo{title}{{Truth, justice,
  and cake cutting}}, \bibinfo{journal}{Games and Economic Behavior}
  \bibinfo{volume}{77}~(\bibinfo{number}{1}) (\bibinfo{year}{2013})
  \bibinfo{pages}{284--297}, ISSN \bibinfo{issn}{08998256}.

\bibitem[{Cohler et~al.(2011)Cohler, Lai, Parkes, and
  Procaccia}]{Cohler2011Optimal}
\bibinfo{author}{Y.~J. Cohler}, \bibinfo{author}{J.~K. Lai},
  \bibinfo{author}{D.~C. Parkes}, \bibinfo{author}{A.~D. Procaccia},
  \bibinfo{title}{{Optimal Envy-Free Cake Cutting}}, in:
  \bibinfo{booktitle}{Proceedings of the 25th AAAI Conference on Artificial
  Intelligence (AAAI-11)}, \bibinfo{pages}{626--631}, \bibinfo{year}{2011}.

\bibitem[{Cole et~al.(2013)Cole, Gkatzelis, and Goel}]{Cole2013Mechanism}
\bibinfo{author}{R.~Cole}, \bibinfo{author}{V.~Gkatzelis},
  \bibinfo{author}{G.~Goel}, \bibinfo{title}{{Mechanism Design for Fair
  Division: Allocating Divisible Items Without Payments}}, in:
  \bibinfo{booktitle}{Proceedings of the Fourteenth ACM Conference on
  Electronic Commerce}, EC '13, \bibinfo{publisher}{ACM}, \bibinfo{address}{New
  York, NY, USA}, ISBN \bibinfo{isbn}{978-1-4503-1962-1},
  \bibinfo{pages}{251--268}, \bibinfo{year}{2013}.

\bibitem[{Cui et~al.(2008)Cui, Yang, Cheng, and Song}]{Cui2008Recursive}
\bibinfo{author}{Y.~Cui}, \bibinfo{author}{Y.~Yang},
  \bibinfo{author}{X.~Cheng}, \bibinfo{author}{P.~Song}, \bibinfo{title}{{A
  recursive branch-and-bound algorithm for the rectangular guillotine strip
  packing problem}}, \bibinfo{journal}{Computers \& Operations Research}
  \bibinfo{volume}{35}~(\bibinfo{number}{4}) (\bibinfo{year}{2008})
  \bibinfo{pages}{1281--1291}, ISSN \bibinfo{issn}{03050548}.

\bibitem[{Dall'Aglio and Maccheroni(2005)}]{DallAglio2005Fair}
\bibinfo{author}{M.~Dall'Aglio}, \bibinfo{author}{F.~Maccheroni},
  \bibinfo{title}{{Fair Division without Additivity}},
  \bibinfo{journal}{American Mathematical Monthly}  (\bibinfo{year}{2005})
  \bibinfo{pages}{363--365}.

\bibitem[{Dall'Aglio and Maccheroni(2009)}]{DallAglio2009Disputed}
\bibinfo{author}{M.~Dall'Aglio}, \bibinfo{author}{F.~Maccheroni},
  \bibinfo{title}{{Disputed lands}}, \bibinfo{journal}{Games and Economic
  Behavior} \bibinfo{volume}{66}~(\bibinfo{number}{1}) (\bibinfo{year}{2009})
  \bibinfo{pages}{57--77}, ISSN \bibinfo{issn}{08998256}.

\bibitem[{Devulapalli(2014)}]{Devulapalli2014Geometric}
\bibinfo{author}{R.~Devulapalli}, \bibinfo{title}{{Geometric partitioning
  algorithms for fair division of geographic resources}}, Ph.D. thesis,
  \bibinfo{school}{University of Minnesota}, \bibinfo{year}{2014}.

\bibitem[{Edmonds and Pruhs(2006)}]{Edmonds2006Balanced}
\bibinfo{author}{J.~Edmonds}, \bibinfo{author}{K.~Pruhs},
  \bibinfo{title}{{Balanced Allocations of Cake}}, in:
  \bibinfo{booktitle}{FOCS}, vol.~\bibinfo{volume}{47},
  \bibinfo{publisher}{IEEE Computer Society}, ISBN
  \bibinfo{isbn}{0-7695-2720-5}, ISSN \bibinfo{issn}{0272-5428},
  \bibinfo{pages}{623--634}, \bibinfo{year}{2006}.

\bibitem[{Edmonds et~al.(2008)Edmonds, Pruhs, and
  Solanki}]{Edmonds2008Confidently}
\bibinfo{author}{J.~Edmonds}, \bibinfo{author}{K.~Pruhs},
  \bibinfo{author}{J.~Solanki}, \bibinfo{title}{{Confidently Cutting a Cake
  into Approximately Fair Pieces}}, \bibinfo{journal}{Algorithmic Aspects in
  Information and Management}  (\bibinfo{year}{2008})
  \bibinfo{pages}{155--164}.

\bibitem[{Even and Paz(1984)}]{Even1984Note}
\bibinfo{author}{S.~Even}, \bibinfo{author}{A.~Paz}, \bibinfo{title}{{A note on
  cake cutting}}, \bibinfo{journal}{Discrete Applied Mathematics}
  \bibinfo{volume}{7}~(\bibinfo{number}{3}) (\bibinfo{year}{1984})
  \bibinfo{pages}{285--296}, ISSN \bibinfo{issn}{0166218X}.

\bibitem[{Goeree and Kushnir(2011)}]{Goeree2011Geometric}
\bibinfo{author}{J.~K. Goeree}, \bibinfo{author}{A.~I. Kushnir},
  \bibinfo{title}{{A Geometric Approach to Mechanism Design}},
  \bibinfo{journal}{Social Science Research Network Working Paper Series} .

\bibitem[{Hifi et~al.(2011)Hifi, Saadi, and Haddadou}]{Hifi2011High}
\bibinfo{author}{M.~Hifi}, \bibinfo{author}{T.~Saadi},
  \bibinfo{author}{N.~Haddadou}, \bibinfo{title}{{High Performance Peer-to-Peer
  Distributed Computing with Application to Constrained Two-Dimensional
  Guillotine Cutting Problem}}, in: \bibinfo{booktitle}{2011 19th International
  Euromicro Conference on Parallel, Distributed and Network-Based Processing},
  \bibinfo{publisher}{IEEE}, ISBN \bibinfo{isbn}{978-1-4244-9682-2},
  \bibinfo{pages}{552--559}, \bibinfo{year}{2011}.

\bibitem[{Hill(1983)}]{Hill1983Determining}
\bibinfo{author}{T.~P. Hill}, \bibinfo{title}{{Determining a Fair Border}},
  \bibinfo{journal}{The American Mathematical Monthly}
  \bibinfo{volume}{90}~(\bibinfo{number}{7}) (\bibinfo{year}{1983})
  \bibinfo{pages}{438--442}, ISSN \bibinfo{issn}{00029890}.

\bibitem[{Hines et~al.(1995)Hines, Hlinko, and Lubke}]{Hines1995From}
\bibinfo{author}{J.~R. Hines}, \bibinfo{author}{J.~C. Hlinko},
  \bibinfo{author}{T.~J.~F. Lubke}, \bibinfo{title}{{From each according to his
  surplus: Equi-proportionate sharing of commodity tax burdens}},
  \bibinfo{journal}{Journal of Public Economics}
  \bibinfo{volume}{58}~(\bibinfo{number}{3}) (\bibinfo{year}{1995})
  \bibinfo{pages}{417--428}, ISSN \bibinfo{issn}{00472727}.

\bibitem[{Hohenwarter(2002)}]{GeoGebra}
\bibinfo{author}{M.~Hohenwarter}, \bibinfo{title}{{G}eo{G}ebra: Ein
  {S}oftwaresystem f\"{u}r dynamische {G}eometrie und {A}lgebra der {E}bene},
  Master's thesis, \bibinfo{school}{Paris Lodron University, Salzburg,
  Austria}, \bibinfo{note}{(In German.)}, \bibinfo{year}{2002}.

\bibitem[{Hohenwarter et~al.(2013)Hohenwarter, Borcherds, Ancsin, Bencze,
  Blossier, Delobelle, Denizet, \'{E}li\'{a}s, Fekete, G\'{a}l,
  Kone\v{c}n\'{y}, Kov\'{a}cs, Lizelfelner, Parisse, and Sturr}]{GeoGebra5}
\bibinfo{author}{M.~Hohenwarter}, \bibinfo{author}{M.~Borcherds},
  \bibinfo{author}{G.~Ancsin}, \bibinfo{author}{B.~Bencze},
  \bibinfo{author}{M.~Blossier}, \bibinfo{author}{A.~Delobelle},
  \bibinfo{author}{C.~Denizet}, \bibinfo{author}{J.~\'{E}li\'{a}s},
  \bibinfo{author}{A.~Fekete}, \bibinfo{author}{L.~G\'{a}l},
  \bibinfo{author}{Z.~Kone\v{c}n\'{y}}, \bibinfo{author}{Z.~Kov\'{a}cs},
  \bibinfo{author}{S.~Lizelfelner}, \bibinfo{author}{B.~Parisse},
  \bibinfo{author}{G.~Sturr}, \bibinfo{title}{{G}eo{G}ebra 5.0},
  \bibinfo{note}{\\url{http://www.geogebra.org}}, \bibinfo{year}{2013}.

\bibitem[{H\"{u}sseinov(2011)}]{Husseinov2011Theory}
\bibinfo{author}{F.~H\"{u}sseinov}, \bibinfo{title}{{A theory of a
  heterogeneous divisible commodity exchange economy}},
  \bibinfo{journal}{Journal of Mathematical Economics}
  \bibinfo{volume}{47}~(\bibinfo{number}{1}) (\bibinfo{year}{2011})
  \bibinfo{pages}{54--59}, ISSN \bibinfo{issn}{03044068}.

\bibitem[{H\"{u}sseinov and Sagara(2013)}]{Husseinov2013Existence}
\bibinfo{author}{F.~H\"{u}sseinov}, \bibinfo{author}{N.~Sagara},
  \bibinfo{title}{{Existence of efficient envy-free allocations of a
  heterogeneous divisible commodity with nonadditive utilities}},
  \bibinfo{journal}{Social Choice and Welfare}  (\bibinfo{year}{2013})
  \bibinfo{pages}{1--18}.

\bibitem[{Ichiishi and Idzik(1999)}]{Ichiishi1999Equitable}
\bibinfo{author}{T.~Ichiishi}, \bibinfo{author}{A.~Idzik},
  \bibinfo{title}{{Equitable allocation of divisible goods}},
  \bibinfo{journal}{Journal of Mathematical Economics}
  \bibinfo{volume}{32}~(\bibinfo{number}{4}) (\bibinfo{year}{1999})
  \bibinfo{pages}{389--400}, ISSN \bibinfo{issn}{03044068}.

\bibitem[{Iyer and Huhns(2009)}]{Iyer2009Procedure}
\bibinfo{author}{K.~Iyer}, \bibinfo{author}{M.~N. Huhns}, \bibinfo{title}{{A
  Procedure for the Allocation of Two-Dimensional Resources in a Multiagent
  System}}, \bibinfo{journal}{International Journal of Cooperative Information
  Systems} \bibinfo{volume}{18} (\bibinfo{year}{2009}) \bibinfo{pages}{1--34}.

\bibitem[{Johnson(1971)}]{Johnson1971Trade}
\bibinfo{author}{H.~G. Johnson}, \bibinfo{title}{{Trade and growth : A
  geometrical exposition}}, \bibinfo{journal}{Journal of International
  Economics} \bibinfo{volume}{1}~(\bibinfo{number}{1}) (\bibinfo{year}{1971})
  \bibinfo{pages}{83--101}, ISSN \bibinfo{issn}{00221996}.

\bibitem[{Katz(1997)}]{Katz19973D}
\bibinfo{author}{M.~J. Katz}, \bibinfo{title}{{3-D vertical ray shooting and
  2-D point enclosure, range searching, and arc shooting amidst convex fat
  objects}}, \bibinfo{journal}{Computational Geometry}
  \bibinfo{volume}{8}~(\bibinfo{number}{6}) (\bibinfo{year}{1997})
  \bibinfo{pages}{299--316}, ISSN \bibinfo{issn}{09257721}.

\bibitem[{Keil(2000)}]{Keil2000Polygon}
\bibinfo{author}{J.~M. Keil}, \bibinfo{title}{{Polygon Decomposition}}, in:
  \bibinfo{booktitle}{Handbook of Computational Geometry},
  \bibinfo{publisher}{University of Saskatchewan Saskatoon},
  \bibinfo{address}{Sask., Canada}, \bibinfo{pages}{491--518},
  \bibinfo{year}{2000}.

\bibitem[{Legut et~al.(1994)Legut, Potters, and Tijs}]{Legut1994Economies}
\bibinfo{author}{J.~Legut}, \bibinfo{author}{J.~A.~M. Potters},
  \bibinfo{author}{S.~H. Tijs}, \bibinfo{title}{{Economies with Land - A Game
  Theoretical Approach}}, \bibinfo{journal}{Games and Economic Behavior}
  \bibinfo{volume}{6}~(\bibinfo{number}{3}) (\bibinfo{year}{1994})
  \bibinfo{pages}{416--430}, ISSN \bibinfo{issn}{08998256}.

\bibitem[{Maccheroni and Marinacci(2003)}]{Maccheroni2003How}
\bibinfo{author}{F.~Maccheroni}, \bibinfo{author}{M.~Marinacci},
  \bibinfo{title}{{How to cut a pizza fairly: Fair division with decreasing
  marginal evaluations}}, \bibinfo{journal}{Social Choice and Welfare}
  \bibinfo{volume}{20}~(\bibinfo{number}{3}) (\bibinfo{year}{2003})
  \bibinfo{pages}{457--465}.

\bibitem[{Marathe et~al.(1995)Marathe, Breu, Hunt, Ravi, and
  Rosenkrantz}]{Marathe1995Simple}
\bibinfo{author}{M.~V. Marathe}, \bibinfo{author}{H.~Breu},
  \bibinfo{author}{H.~B. Hunt}, \bibinfo{author}{S.~S. Ravi},
  \bibinfo{author}{D.~J. Rosenkrantz}, \bibinfo{title}{{Simple heuristics for
  unit disk graphs}}, \bibinfo{journal}{Networks}
  \bibinfo{volume}{25}~(\bibinfo{number}{2}) (\bibinfo{year}{1995})
  \bibinfo{pages}{59--68}.

\bibitem[{Michaelides(2006)}]{Michaelides2006Euclidean}
\bibinfo{author}{S.~Michaelides}, \bibinfo{title}{{Euclidean Economics: A
  Mathematical Method for Construction and Application of Economics Models}},
  \bibinfo{publisher}{Mesogios Publications - Ellinika Grammata}, ISBN
  \bibinfo{isbn}{9963-9176-0-7}, \bibinfo{year}{2006}.

\bibitem[{Mirchandani(2013)}]{Mirchandani2013Superadditivity}
\bibinfo{author}{R.~S. Mirchandani}, \bibinfo{title}{{Superadditivity and
  Subadditivity in Fair Division}}, \bibinfo{journal}{Journal of Mathematics
  Research} \bibinfo{volume}{5}~(\bibinfo{number}{3}) (\bibinfo{year}{2013})
  \bibinfo{pages}{78--91}, ISSN \bibinfo{issn}{1916-9809}.

\bibitem[{Mossel and Tamuz(2010)}]{Mossel2010Truthful}
\bibinfo{author}{E.~Mossel}, \bibinfo{author}{O.~Tamuz},
  \bibinfo{title}{{Truthful Fair Division}}, in:
  \bibinfo{editor}{S.~Kontogiannis}, \bibinfo{editor}{E.~Koutsoupias},
  \bibinfo{editor}{P.~G. Spirakis} (Eds.), \bibinfo{booktitle}{SAGT}, vol.
  \bibinfo{volume}{6386}, \bibinfo{publisher}{Springer Berlin Heidelberg},
  \bibinfo{address}{Berlin, Heidelberg}, ISBN
  \bibinfo{isbn}{978-3-642-16169-8}, \bibinfo{pages}{288--299},
  \bibinfo{year}{2010}.

\bibitem[{Moulin(1990)}]{Moulin1990Uniform}
\bibinfo{author}{H.~Moulin}, \bibinfo{title}{{Uniform externalities: Two axioms
  for fair allocation}}, \bibinfo{journal}{Journal of Public Economics}
  \bibinfo{volume}{43}~(\bibinfo{number}{3}) (\bibinfo{year}{1990})
  \bibinfo{pages}{305--326}, ISSN \bibinfo{issn}{00472727}.

\bibitem[{Nicol\`{o} et~al.(2012)Nicol\`{o}, Perea, and
  Roberti}]{Nicolo2012Equal}
\bibinfo{author}{A.~Nicol\`{o}}, \bibinfo{author}{Perea},
  \bibinfo{author}{P.~Roberti}, \bibinfo{title}{{Equal opportunity equivalence
  in land division}}, \bibinfo{journal}{SERIEs - Journal of the Spanish
  Economic Association} \bibinfo{volume}{3}~(\bibinfo{number}{1-2})
  (\bibinfo{year}{2012}) \bibinfo{pages}{133--142}.

\bibitem[{Nicol\`{o} and Yu(2008)}]{Nicolo2008Strategic}
\bibinfo{author}{A.~Nicol\`{o}}, \bibinfo{author}{Y.~Yu},
  \bibinfo{title}{{Strategic divide and choose}}, \bibinfo{journal}{Games and
  Economic Behavior} \bibinfo{volume}{64}~(\bibinfo{number}{1})
  (\bibinfo{year}{2008}) \bibinfo{pages}{268--289}, ISSN
  \bibinfo{issn}{08998256}.

\bibitem[{Nielsen(1997)}]{Nielsen1997Introduction}
\bibinfo{author}{O.~A. Nielsen}, \bibinfo{title}{{An introduction to
  integration and measure theory}}, \bibinfo{publisher}{Wiley}, ISBN
  \bibinfo{isbn}{0471595187}, \bibinfo{year}{1997}.

\bibitem[{\"{O}zt\"{u}rk et~al.(2013)\"{O}zt\"{u}rk, Peters, and
  Storcken}]{Ozturk2013Strategyproof}
\bibinfo{author}{M.~\"{O}zt\"{u}rk}, \bibinfo{author}{H.~Peters},
  \bibinfo{author}{T.~Storcken}, \bibinfo{title}{{Strategy-proof location of a
  public bad on a disc}}, \bibinfo{journal}{Economics Letters}
  \bibinfo{volume}{119}~(\bibinfo{number}{1}) (\bibinfo{year}{2013})
  \bibinfo{pages}{14--16}, ISSN \bibinfo{issn}{01651765}.

\bibitem[{\"{O}zt\"{u}rk et~al.(2014)\"{O}zt\"{u}rk, Peters, and
  Storcken}]{Ozturk2014Location}
\bibinfo{author}{M.~\"{O}zt\"{u}rk}, \bibinfo{author}{H.~Peters},
  \bibinfo{author}{T.~Storcken}, \bibinfo{title}{{On the location of public
  bads: strategy-proofness under two-dimensional single-dipped preferences}},
  \bibinfo{journal}{Economic Theory} \bibinfo{volume}{56}~(\bibinfo{number}{1})
  (\bibinfo{year}{2014}) \bibinfo{pages}{83--108}.

\bibitem[{Plott(1967)}]{Plott1967Notion}
\bibinfo{author}{C.~R. Plott}, \bibinfo{title}{{A Notion of Equilibrium and Its
  Possibility Under Majority Rule}}, \bibinfo{journal}{The American Economic
  Review} \bibinfo{volume}{57}~(\bibinfo{number}{4}) (\bibinfo{year}{1967})
  \bibinfo{pages}{787--806}.

\bibitem[{Procaccia(2015)}]{Procaccia2015Cake}
\bibinfo{author}{A.~D. Procaccia}, \bibinfo{title}{{Cake Cutting Algorithms}},
  in: \bibinfo{editor}{F.~Brandt}, \bibinfo{editor}{V.~Conitzer},
  \bibinfo{editor}{U.~Endriss}, \bibinfo{editor}{J.~Lang},
  \bibinfo{editor}{A.~D. Procaccia} (Eds.), \bibinfo{booktitle}{Handbook of
  Computational Social Choice}, chap.~\bibinfo{chapter}{13},
  \bibinfo{publisher}{Cambridge University Press}, \bibinfo{pages}{261--283},
  \bibinfo{year}{2015}.

\bibitem[{Procaccia and Wang(2014)}]{Procaccia2014Fair}
\bibinfo{author}{A.~D. Procaccia}, \bibinfo{author}{J.~Wang},
  \bibinfo{title}{{Fair Enough: Guaranteeing Approximate Maximin Shares}}, in:
  \bibinfo{booktitle}{Proceedings of the Fifteenth ACM Conference on Economics
  and Computation}, EC '14, \bibinfo{publisher}{ACM}, \bibinfo{address}{New
  York, NY, USA}, ISBN \bibinfo{isbn}{978-1-4503-2565-3},
  \bibinfo{pages}{675--692}, \bibinfo{year}{2014}.

\bibitem[{Robertson and Webb(1998)}]{Robertson1998CakeCutting}
\bibinfo{author}{J.~M. Robertson}, \bibinfo{author}{W.~A. Webb},
  \bibinfo{title}{{Cake-Cutting Algorithms: Be Fair if You Can}},
  \bibinfo{publisher}{A K Peters/CRC Press}, \bibinfo{edition}{first} edn.,
  ISBN \bibinfo{isbn}{1568810768}, \bibinfo{year}{1998}.

\bibitem[{Sagara and Vlach(2005)}]{Sagara2005Equity}
\bibinfo{author}{N.~Sagara}, \bibinfo{author}{M.~Vlach},
  \bibinfo{title}{{Equity and Efficiency in a Measure Space with Nonadditive
  Preferences: The Problem of Cake Division}}, \bibinfo{journal}{Journal of
  Economic Literature} \bibinfo{volume}{90} (\bibinfo{year}{2005})
  \bibinfo{pages}{1}.

\bibitem[{Segal-Halevi et~al.(2015)Segal-Halevi, Hassidim, and
  Aumann}]{SegalHalevi2015EnvyFree}
\bibinfo{author}{E.~Segal-Halevi}, \bibinfo{author}{A.~Hassidim},
  \bibinfo{author}{Y.~Aumann}, \bibinfo{title}{{Envy-Free Cake-Cutting in Two
  Dimensions}}, in: \bibinfo{booktitle}{Proceedings of the 29th AAAI Conference
  on Artificial Intelligence (AAAI-15)}, \bibinfo{pages}{1021--1028},
  \bibinfo{note}{arXiv preprint 1609.03938}, \bibinfo{year}{2015}.

\bibitem[{Steinhaus(1948)}]{Steinhaus1948Problem}
\bibinfo{author}{H.~Steinhaus}, \bibinfo{title}{{The problem of fair
  division}}, \bibinfo{journal}{Econometrica}
  \bibinfo{volume}{16}~(\bibinfo{number}{1}) (\bibinfo{year}{1948})
  \bibinfo{pages}{101--104}.

\bibitem[{Stromquist(1980)}]{Stromquist1980How}
\bibinfo{author}{W.~Stromquist}, \bibinfo{title}{{How to Cut a Cake Fairly}},
  \bibinfo{journal}{The American Mathematical Monthly}
  \bibinfo{volume}{87}~(\bibinfo{number}{8}) (\bibinfo{year}{1980})
  \bibinfo{pages}{640--644}, ISSN \bibinfo{issn}{00029890}.

\bibitem[{Su(1999)}]{Su1999Rental}
\bibinfo{author}{F.~E. Su}, \bibinfo{title}{{Rental Harmony: Sperner's Lemma in
  Fair Division}}, \bibinfo{journal}{The American Mathematical Monthly}
  \bibinfo{volume}{106}~(\bibinfo{number}{10}) (\bibinfo{year}{1999})
  \bibinfo{pages}{930--942}, ISSN \bibinfo{issn}{00029890}.

\bibitem[{Thomson(2007)}]{Thomson2007Children}
\bibinfo{author}{W.~Thomson}, \bibinfo{title}{{Children Crying at Birthday
  Parties. Why?}}, \bibinfo{journal}{Economic Theory}
  \bibinfo{volume}{31}~(\bibinfo{number}{3}) (\bibinfo{year}{2007})
  \bibinfo{pages}{501--521}, ISSN \bibinfo{issn}{0938-2259}.

\bibitem[{Webb(1990)}]{Webb1990Combinatorial}
\bibinfo{author}{W.~A. Webb}, \bibinfo{title}{{A Combinatorial Algorithm to
  Establish a Fair Border}}, \bibinfo{journal}{European Journal of
  Combinatorics} \bibinfo{volume}{11}~(\bibinfo{number}{3})
  (\bibinfo{year}{1990}) \bibinfo{pages}{301--304}, ISSN
  \bibinfo{issn}{01956698}.

\bibitem[{Weller(1985)}]{Weller1985Fair}
\bibinfo{author}{D.~Weller}, \bibinfo{title}{{Fair division of a measurable
  space}}, \bibinfo{journal}{Journal of Mathematical Economics}
  \bibinfo{volume}{14}~(\bibinfo{number}{1}) (\bibinfo{year}{1985})
  \bibinfo{pages}{5--17}, ISSN \bibinfo{issn}{03044068}.

\bibitem[{Woeginger and Sgall(2007)}]{Woeginger2007Complexity}
\bibinfo{author}{G.~J. Woeginger}, \bibinfo{author}{J.~Sgall},
  \bibinfo{title}{{On the complexity of cake cutting}},
  \bibinfo{journal}{Discrete Optimization}
  \bibinfo{volume}{4}~(\bibinfo{number}{2}) (\bibinfo{year}{2007})
  \bibinfo{pages}{213--220}, ISSN \bibinfo{issn}{15725286}.

\bibitem[{Woodall(1980)}]{Woodall1980Dividing}
\bibinfo{author}{D.~R. Woodall}, \bibinfo{title}{{Dividing a cake fairly}},
  \bibinfo{journal}{Journal of Mathematical Analysis and Applications}
  \bibinfo{volume}{78}~(\bibinfo{number}{1}) (\bibinfo{year}{1980})
  \bibinfo{pages}{233--247}, ISSN \bibinfo{issn}{0022247X}.

\bibitem[{Zivan(2011)}]{Zivan2011Can}
\bibinfo{author}{R.~Zivan}, \bibinfo{title}{{Can trust increase the efficiency
  of cake cutting algorithms}}, in: \bibinfo{booktitle}{The International
  Conference on Autonomous Agents and Multiagent Systems (AAMAS)},
  \bibinfo{pages}{1145--1146}, \bibinfo{year}{2011}.

\end{thebibliography}

\end{document}